\documentclass[12pt,onecolumn,twoside]{IEEEtran}

\usepackage{calc}

\usepackage{multirow}
\usepackage{cite}
\usepackage{graphicx,subfigure}
\usepackage{psfrag}
\usepackage{amsmath,amssymb,amsthm}
\usepackage{color}
\usepackage{tikz} 
\usepackage{bm}
\usepackage{bbm}
\usepackage{verbatim}

\usepackage{cases}

\interdisplaylinepenalty=2500







\DeclareMathOperator*{\defeq}{\triangleq}

\newtheorem{theorem}{Theorem}
\newtheorem{corollary}{Corollary}

\newtheorem{lemma}{Lemma}










\newcommand{\bit}{\begin{itemize}}
\newcommand{\eit}{\end{itemize}}

\newcommand{\bc}{\begin{center}}
\newcommand{\ec}{\end{center}}

\newcommand{\ba}{\begin{array}}
\newcommand{\ea}{\end{array}}

\newcommand{\beq}{\begin{equation}}
\newcommand{\eeq}{\end{equation}}

\newcommand{\beqn}{\begin{equation*}}
\newcommand{\eeqn}{\end{equation*}}

\newcommand{\bean}{\begin{eqnarray*}}
\newcommand{\eean}{\end{eqnarray*}}
\newcommand{\bea}{\begin{eqnarray}}
\newcommand{\eea}{\end{eqnarray}}

\def\E{\mathbb{E}}



\def\bv{\boldsymbol{b}}

\def\vv{\boldsymbol{v}}


\newcommand{\Ac}{{\mathcal A}}

\newcommand{\Ic}{{\mathcal I}}

\newcommand{\Rc}{{\mathcal R}}

\newcommand{\Zc}{{\mathcal Z}}
\newcommand{\T}{{\scriptscriptstyle\mathsf{T}}}

\newtheorem{remark}{Remark}




\newcommand{\dsum}{d_{\text{sum}}}

\newtheorem{claim}{Claim}

\newcommand{\non}{\nonumber}

\newcommand{\Hen}{\mathbb{H}}
\newcommand{\hen}{\mathrm{h}}
\newcommand{\Imu}{\mathbb{I}}

\newcommand{\bln}{n}

\DeclareMathOperator*{\argmax}{arg\,max}

\usepackage{mathtools}
\usepackage{xifthen}

\newcommand{\al}{\boldsymbol{\alpha}}

\newcommand{\Jl}{J_m}
\newcommand{\Jo}{J_0}
\newcommand{\Nd}{N}
\newcommand{\Md}{M}
\newcommand{\me}{m}
\newcommand{\Ke}{K}


\pagestyle{empty}

\begin{document}
\sloppy

\title{Multi-layer Interference Alignment and GDoF of the $K$-User Asymmetric Interference Channel}

\author{Jinyuan Chen 
\thanks{Jinyuan Chen is with Louisiana Tech University, Department of Electrical Engineering, Ruston, USA (email: jinyuan@latech.edu).} 
}

\maketitle
\pagestyle{headings}

\begin{abstract}

In  wireless networks, link strengths are often affected by some topological factors such as propagation path loss, shadowing and inter-cell interference. Thus, different users in the network might experience different link strengths.
In this work we consider a $K$-user asymmetric  interference channel, where the channel gains of the links  connected to Receiver~$k$ are scaled with $\sqrt{P^{\alpha_k}}$, $k=1,2, \cdots, K$, for  $0< \alpha_1 \leq   \alpha_2 \leq \cdots \leq  \alpha_K \leq1$.
For this setting, we show that the optimal sum \emph{generalized} degrees-of-freedom (GDoF) is characterized as \[\dsum = \frac{ \sum_{k=1}^K \alpha_k  +  \alpha_K -\alpha_{K-1}}{2}\] 
which matches the existing result $\dsum= \frac{K}{2}$ when $\alpha_1 =   \alpha_2 = \cdots =  \alpha_K =1$.
The achievability is based on  multi-layer interference alignment, where different interference alignment sub-schemes are designed in different  layers associated with specific power levels, and  successive decoding is applied at the receivers.
While the converse for the \emph{symmetric} case only requires bounding the sum degrees-of-freedom (DoF) for selected \emph{two} users, the converse for  this \emph{asymmetric} case involves bounding the \emph{weighted} sum GDoF for selected $J+2$ users, with corresponding weights  $(2^{J}, 2^{J-1}, \cdots, 2^{2}, 2^{1})$, a geometric sequence with common ratio 2, for the first $J$ users and  with corresponding weights $(1, 1)$ for the last two users, for $J \in \{1,2, \cdots, \lceil \log \frac{K}{2} \rceil\}$.

\end{abstract}

\section{Introduction}

In wireless networks, the strengths of communication links  are often affected by  propagation path loss, shadowing, inter-cell interference, and some other topological factors. Therefore, different users in the network might experience different link strengths. 
For example, in an interference network, when a receiver is relatively far from the transmitters, this receiver might experience weaker links  compared to the receivers that are more close to the transmitters (see Fig.~\ref{fig:ICa}). 
Such asymmetry property of the link strengths in communication networks can crucially affect the transceiver design, as well as the capacity performance.  

In this work we consider a $K$-user asymmetric interference channel, where different receivers might have different link strengths.
For this setting, the channel gains of the links  connected to Receiver~$k$ are scaled with $\sqrt{P^{\alpha_k}}$, where $\alpha_k$ captures the  \emph{link strength} of Receiver~$k$,  which might be different from that of the other receivers, for  $k=1,2, \cdots, K$. 
This generalizes  the symmetric setting, in which $\alpha_1 =   \alpha_2 = \cdots =  \alpha_K =1$, to a  setting with  diverse link strengths. 

For the symmetric $K$-user interference channel, the work in \cite{CJ:08} showed that the optimal sum degrees-of-freedom (DoF) is characterized by $K/2$, which implies that ``everyone gets half of the cake''. DoF is a pre-log factor of capacity at the  high signal-to-noise ratio (SNR) regime.  
Although  the DoF metric can produce profound insights, it has a fundamental limitation, that is, it  treats  all non-zero links as approximately
equally strong. 
Thus, it motivates the researchers to go  beyond the DoF metric  into the generalized degrees-of-freedom (GDoF) metric (see \cite{ETW:08,VKV:11,KV:11,KV:12,KV:12it,MV:18,BLK:13,HCJ:12,CEJ:14isit,MTP:13, TMP:13, JV:10, CEJ:15,GNAJ:15, ChenAllerton:18,GTJ:15, CLtifs:19,YC:16, CG:19isit,MM:11,LC:19isit,SJ:15,DYJ:18,DJ:19,WYHJ:19} and the references therein), for the settings with diverse link strengths.    
For the $K$-user asymmetric interference channel, we  focus on the optimal sum GDoF. Specifically, for this asymmetric setting we show that the optimal sum GDoF is characterized as $\dsum = \frac{ \sum_{k=1}^K \alpha_k  +  \alpha_K -\alpha_{K-1}}{2}$,  
for  $0< \alpha_1 \leq   \alpha_2 \leq \cdots \leq  \alpha_K \leq1$. 
This result generalizes the  existing result of the symmetric case to the setting with diverse link strengths.

The proposed achievability is based on  multi-layer interference alignment and successive decoding. 
While the traditional  interference alignment scheme is usually dedicated to all users in the network  (cf.~\cite{CJ:08,MMK:08ia}), 
the multi-layer interference alignment scheme proposed in this work consists of $K$  different interference alignment sub-schemes,  with each interference alignment sub-scheme dedicated to a subset of users. In this scheme, each  interference alignment sub-scheme is  designed in a specific  layer associated with a particular power level. In terms of decoding, successive decoding is applied at the receivers. 
Specifically, successive decoding is operated  layer by layer. For the decoding at one layer, each of the involved receivers decodes the desired signals  and the interference in this layer, and then remove them to  decode signals at the next layer. 
The converse for  this \emph{asymmetric} case involves bounding the \emph{weighted} sum GDoF for selected $J+2$ users, with weights being  a geometric sequence for the first $J$ users, for $J \in \{1,2, \cdots, \lceil \log \frac{K}{2} \rceil\}$.
This is very different from the converse for the \emph{symmetric} case, which only requires bounding the sum DoF for selected \emph{two} users.

\begin{figure}
\centering
\includegraphics[width=12cm]{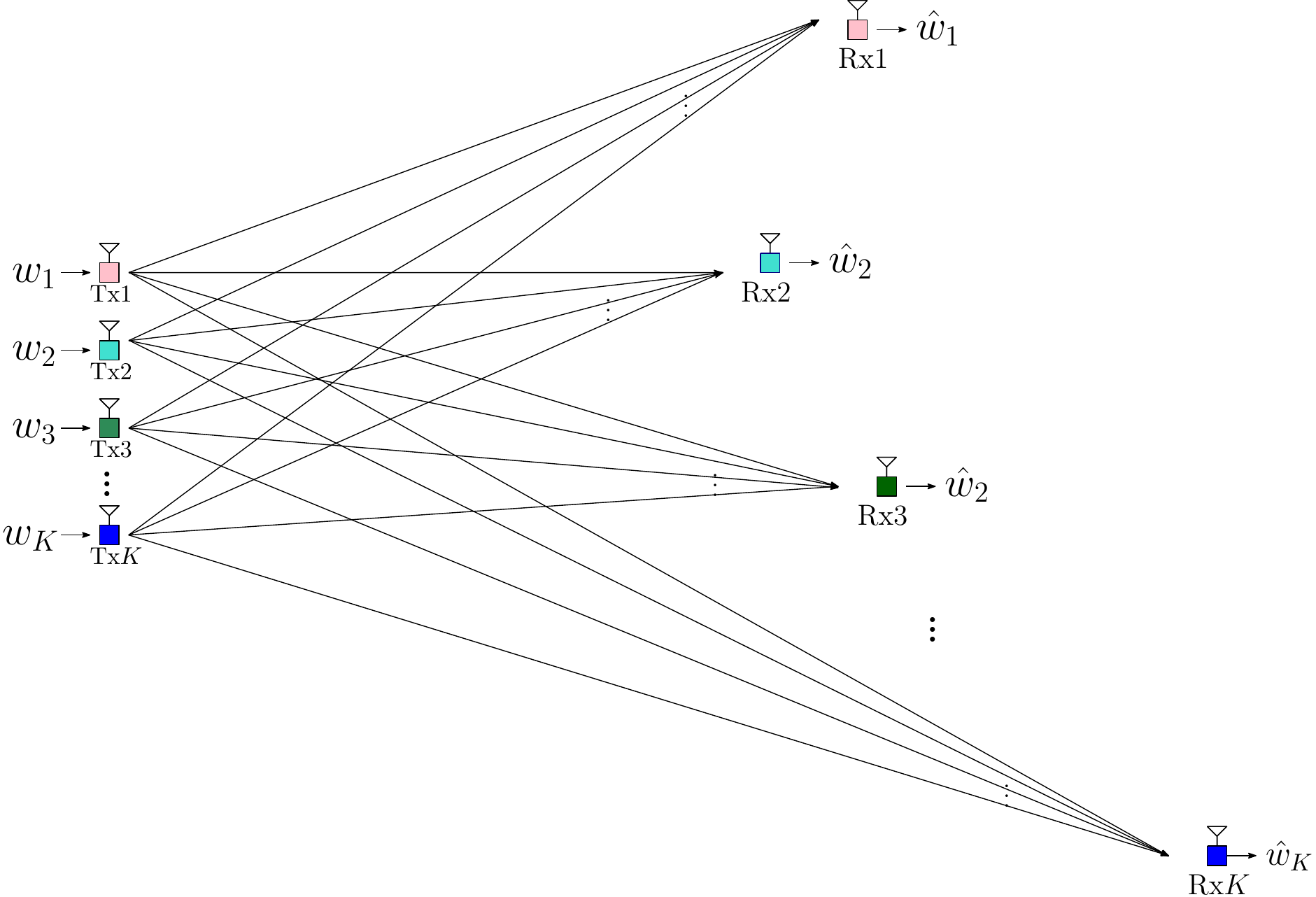}
\vspace{-5pt}
\caption{An asymmetric interference channel, where some receivers are relatively far from the transmitters and  consequently might have weaker links compared to the receivers closer to the transmitters.}
\label{fig:ICa}
\vspace{-10pt}
\end{figure}

The remainder of this work is organized as follows.  
Section~\ref{sec:system} describes the  system model of the asymmetric interference channel.  Section~\ref{sec:mainresult} provides the  main result of this work. The converse proof is provided in Section~\ref{sec:converse}, while the achievability proof is described in  Section~\ref{sec:achievability}. Finally, section~\ref{sec:conclusion} shows the  conclusion of this work. 
Throughout this work, $\Hen(\bullet)$, $\hen(\bullet)$ and $\Imu(\bullet)$ denote the entropy, differential entropy and mutual information,  respectively.  
$|\bullet|$ denotes the magnitude of a scalar or the cardinality of a set. 
     $\Zc$,  $\Zc^+$, $\Rc$ and $\mathbb{N}$ denote the sets of integers, positive integers, real numbers, and natural numbers, respectively.    
   $o(\bullet)$ is a standard Landau notation, where  $f(x)=o(g(x))$ implies that $\lim_{x \to \infty} f(x)/g(x) =0$. 
    $[A: B]$ is a set of integers from $A$ to $B$, for some integers $A\leq B$. 
    Given a set $\Ac$, then $\Ac (i)$ denotes the $i$th element of set $\Ac$. 
Logarithms are in base~$2$.

\section{System model \label{sec:system} }

We focus on a $K$-user asymmetric interference channel defined by the following input-output equations:
\begin{align}
y_{k}(t) &= \sqrt{P^{\alpha_{k}}} \sum_{\ell=1}^{K}   h_{k\ell} x_{\ell}(t) +z_{k}(t), \quad k \in [1: K] \label{eq:channelG} 
\end{align}
$t \in [1: \bln] $, where  $x_{\ell}(t)$ is the channel input at Transmitter~$\ell$  subject to a normalized average power constraint  $\E |x_{\ell}(t)|^2 \leq 1$. $z_k(t) \sim \mathcal{N}(0, 1)$ is additive white Gaussian noise at Receiver~$k$. $h_{k\ell}$ is the channel coefficient between Transmitter~$\ell$ and Receiver~$k$.  $P\geq 1$ denotes a nominal power value. 
The  exponent  $\alpha_{k}$ represents the \emph{channel strength} of the links connected to Receiver~$k$. Without loss of generality we consider the case that \[0< \alpha_1 \leq   \alpha_2 \leq \cdots \leq  \alpha_K \leq1.\] 
The channel coefficients $\{h_{k\ell}\}_{k, \ell}$ are drawn independently and identically from a continuous distribution.  
We assume that the absolute value of each channel coefficient is bounded between a finite maximum value and a nonzero
minimum value. All the channel parameters $\{\alpha_{k}\}_{k}$ and  coefficients $\{h_{k\ell}\}_{k, \ell}$ are assumed to be perfectly known to all the transmitters and receivers (perfect CSIT and CSIR).

In this channel, the message $w_k$ is sent from Transmitter~$k$ to Receiver~$k$ over $\bln$ channel uses, for $k\in [1:K]$, where $w_k$ is  uniformly drawn from a set $\mathcal{W}_k=[1 : 2^{\bln R_k}]$ and  $R_k$ is  the rate of this message.  
A  rate tuple $(R_1(P, \al), R_2 (P, \al), \cdots, R_K (P, \al))$ is said to be achievable  if for any $\epsilon >0$ there exists a sequence of $\bln$-length codes such that each receiver can decode its own message reliably, i.e., $\text{Pr}[ \hat{w_k} \neq w_{k}  ]  \leq \epsilon$,   $ \forall k\in [1:K]$, when $\bln$ goes large,  for $\al\defeq  [\alpha_1, \alpha_2, \cdots, \alpha_K ]$.
The capacity region $C (P, \al)$ is the collection of  all the achievable  rate tuples $(R_1 (P, \al), R_2 (P, \al), R_c (P, \al))$. 
The GDoF  region $\mathcal{D} ( \al)$ is defined as 
 \begin{align}
\mathcal{D} ( \al)   \defeq & \Big\{ (d_1, d_2, \cdots, d_K) :         \exists  \bigl( R_1 (P, \al), R_2 (P, \al), \cdots, R_K (P, \al) \bigr) \in C(P, \al) \non\\
 & \quad s. t. \quad   d_k = \lim_{P \to \infty}   \frac{  R_k (P, \al)}{ \frac{1}{2} \log P}, \  \forall k \in [1: K]    \Big\} .  \non
  \end{align}
 The   sum GDoF is then defined  by \[ d_{\text{sum}}   (\al) \defeq     \max_{\substack{d_1, d_2, \cdots, d_K:\\ (d_1, d_2, \cdots, d_K) \in \mathcal{D} ( \al)} }    d_1+ d_2 + \cdots + d_K.\]
GDoF is a generalization of the DoF.  Note that DoF can be considered as a specific point of GDoF by letting $\alpha_{1}= \alpha_{2}= \cdots = \alpha_{K}=1$.

\section{Main result  \label{sec:mainresult}}

The main result of this work is the characterization of the optimal sum GDoF for the $K$-user asymmetric interference channel.

\begin{theorem}  \label{thm:GDoFIC}
For the $K$-user asymmetric interference channel defined in Section~\ref{sec:system},  for almost   all  realizations of channel coefficients  $\{h_{k\ell}\}$,  the optimal  sum GDoF  is characterized as 
\begin{align}
\dsum(\al) = \frac{  \sum_{k=1}^K \alpha_k  +  \alpha_K -\alpha_{K-1} }{2}  . \label{thm:GDoF1} 
    \end{align}
\end{theorem}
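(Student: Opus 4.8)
The statement has an achievability half, $\dsum(\al)\ge\tfrac12\big(\sum_k\alpha_k+\alpha_K-\alpha_{K-1}\big)$, and a converse half, $\dsum(\al)\le$ (same), and I would prove them separately.

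For achievability, the first move is to rewrite the target as $\dsum(\al)=\alpha_K+\tfrac12\sum_{k=1}^{K-2}\alpha_k$, which points to the single GDoF tuple $\big(\tfrac{\alpha_1}{2},\dots,\tfrac{\alpha_{K-1}}{2},\,\alpha_K-\tfrac{\alpha_{K-1}}{2}\big)$, so it suffices to realise this one point. Assuming (without loss of generality, by merging layers of zero height) that the $\alpha_k$ are distinct and writing $\alpha_0=0$, I would split the unit transmit-power budget into $K$ power layers, layer $m$ carrying signal power of order $P^{-\alpha_{m-1}}$ down to $P^{-\alpha_m}$, so that layer $m$ clears the noise floor exactly at Receivers $m,m{+}1,\dots,K$. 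In layer $m$ only Transmitters $m,\dots,K$ are active and they run an interference-alignment sub-scheme (Cadambe-Jafar type, using real interference alignment for constant channels) among themselves; since all of their links to any Receiver $j\ge m$ carry the common scaling $\sqrt{P^{\alpha_j}}$ and the coefficients are generic, this delivers $\tfrac12(\alpha_m-\alpha_{m-1})$ GDoF to each participant inside that layer, while layer $K$ --- seen only by Receiver $K$ --- is interference-free and carries $\alpha_K-\alpha_{K-1}$ GDoF for User $K$. Each Receiver $j$ decodes its visible layers successively from the top (highest received power) down, cancelling each decoded layer before moving on --- within a layer the desired signal and the aligned interference occupy complementary subspaces, so the layer's total contribution can be reconstructed and removed. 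Summing per-layer contributions gives User $m<K$ a total $\sum_{i=1}^m\tfrac12(\alpha_i-\alpha_{i-1})=\tfrac{\alpha_m}{2}$ and User $K$ a total $\tfrac{\alpha_{K-1}}{2}+(\alpha_K-\alpha_{K-1})=\alpha_K-\tfrac{\alpha_{K-1}}{2}$, i.e.\ exactly $\dsum(\al)$.

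For the converse, I would assemble a family of genie-aided weighted-sum-rate bounds and then combine them. For each $J\in\{0,1,\dots,\lceil\log\tfrac{K}{2}\rceil\}$ and any indices $u_1<\cdots<u_J$ in $[1:K-2]$, the target is $\sum_{i=1}^{J}2^{J-i+1}d_{u_i}+d_{K-1}+d_K\;\le\;\sum_{i=1}^{J}2^{J-i}\alpha_{u_i}+\alpha_K$, with the $J=0$ case being just $d_{K-1}+d_K\le\alpha_K$. To prove one such bound I would first hand the messages of all non-selected users to the receivers that need them, collapsing the network to a $(J{+}2)$-user sub-interference-channel with link strengths $\alpha_{u_1}\le\cdots\le\alpha_{u_J}\le\alpha_{K-1}\le\alpha_K$ and INR $=$ SNR at every receiver, and then run a chain of genie arguments inside it, giving the stronger receivers noise-degraded copies of the weaker users' transmit signals with the injected-noise levels tuned to the gaps $\alpha_{u_{i+1}}-\alpha_{u_i}$ (in the spirit of the Etkin-Tse-Wang genie). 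The point of the chain is that bounding $R_{u_i}$ produces an entropy term that must be re-absorbed via a bound re-introducing $2R_{u_{i+1}}+\cdots$, so each pass doubles the leading coefficient and contributes one more $\alpha$-difference on the right. Once these bounds are in hand, the sum-GDoF converse follows from a non-negative linear combination of them (over $J$ and over the selections, always anchored at the pair $(K{-}1,K)$) in which every $d_k$ carries total weight $1$; the arithmetic then forces the right-hand side to collapse to $\tfrac12\sum_k\alpha_k+\tfrac{\alpha_K-\alpha_{K-1}}{2}$, and $\lceil\log\tfrac{K}{2}\rceil$ is the largest $J$ required precisely because the geometric weights let a level-$J$ bound account for about $2^{J}$ users at a time. (For $K=4$, one simply averages $2d_1+d_3+d_4\le\alpha_1+\alpha_4$ and $2d_2+d_3+d_4\le\alpha_2+\alpha_4$ to get $d_1+d_2+d_3+d_4\le\tfrac{\alpha_1+\alpha_2}{2}+\alpha_4$; larger $K$ needs a genuine mixture of several levels $J$.)

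The hard part is unquestionably the genie chain behind the weighted bounds: one must choose legitimate genie signals, tune the injected noise so the $\alpha$-terms telescope to exactly $\sum_i 2^{J-i}\alpha_{u_i}+\alpha_K$, and verify that each recursion step creates precisely the weight $2^{J-i+1}$ rather than something looser --- a naive ``peel one user at a time'' induction cannot succeed, since it would require the (false) individual bound $d_{u}\le\alpha_{u}/2$, so the coefficient doubling has to be manufactured honestly inside a single information inequality. By contrast the closing linear-combination step is routine linear-programming bookkeeping, and the achievability, while it needs care in laying out the power layers, the alignment within each, and the top-down decoding order, is essentially a layer-by-layer invocation of existing interference-alignment results.
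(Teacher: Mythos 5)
Your proposal matches the paper's proof in essentially every respect: the achievability is the same layered real-interference-alignment scheme with powers $P^{-\alpha_{m-1}}$, transmitters $m,\dots,K$ active in layer $m$, per-layer rate $\tfrac{1}{2}(\alpha_m-\alpha_{m-1})$ per participant, top-down successive decoding of desired-plus-aligned-interference, and the same target tuple $(\tfrac{\alpha_1}{2},\dots,\tfrac{\alpha_{K-1}}{2},\alpha_K-\tfrac{\alpha_{K-1}}{2})$; the converse uses exactly the paper's weighted bounds $\sum_i 2^{J-i+1}d_{u_i}+d_{K-1}+d_K\le\sum_i 2^{J-i}\alpha_{u_i}+\alpha_K$, proved by the same two genies (side messages $\bar{W}$ plus noise-degraded received signals tuned to the gaps $\alpha_{\ell_2}-\alpha_{\ell_1}$) and the same coefficient-doubling chain, then combined so every $d_k$ has unit weight. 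The two steps you defer --- manufacturing the weight $2^{J-i+1}$ honestly inside the chain, and exhibiting the covering combination for general $K$ --- are precisely the content of the paper's Lemmas~\ref{lm:dLsum}--\ref{lm:sum2bound12} and of Appendix~\ref{app:ubound} (which needs an explicit construction with zeroed-out indices for non-powers of two), so your outline is correct but those two pieces still have to be written out.
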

\begin{proof} 
The achievability is based on  multi-layer interference alignment and successive decoding.
The converse for  this \emph{asymmetric} case involves bounding the \emph{weighted} sum GDoF for selected $J+2$ users, $J \in [1: \lceil \log \frac{K}{2} \rceil ]$.
The details of the achievability and converse proofs are provided in Section~\ref{sec:achievability} and Section~\ref{sec:converse}, respectively. 
\end{proof}

\begin{remark}
The result of Theorem~\ref{thm:GDoFIC}  matches the previous result $\dsum(\alpha) = \frac{K}{2}$ when $\alpha_1 =   \alpha_2 = \cdots =  \alpha_K =1$ (see \cite{CJ:08}). 
\end{remark}
\begin{remark}
One observation from the result  of Theorem~\ref{thm:GDoFIC} is that,  the change of the link strength of the $(K-1)$th receiver, i.e., $\alpha_{K-1}$,  will not take effect on the sum GDoF, as long as $\alpha_{K-2}\leq \alpha_{K-1} \leq \alpha_{K}$.
\end{remark}
\begin{remark}
From the result  of Theorem~\ref{thm:GDoFIC}, it reveals that the link strength of the $K$th receiver, i.e., $\alpha_{K}$, takes more effect on the optimal sum GDoF (with a larger weight), compared to the link strengths of the other receivers.
\end{remark}

\section{Converse   \label{sec:converse}}

This section provides the converse of Theorem~\ref{thm:GDoFIC}, for the  $K$-user \emph{asymmetric} interference channel defined in Section~\ref{sec:system}. 
While the converse for the \emph{symmetric} case only requires bounding the sum DoF for selected \emph{two} users, the converse for  this \emph{asymmetric} case involves bounding the \emph{weighted} sum GDoF for selected $J+2$ users,  with corresponding weights  $(2^{J}, 2^{J-1}, \cdots, 2^{2}, 2^{1}, 1, 1)$, for $J \in [1: \lceil \log \frac{K}{2} \rceil ]$.
The result on  bounding the weighted sum GDoF is given in the following lemma.

\begin{lemma}  \label{lm:dLsum}
For $1\leq l_1< l_2 <  \cdots  < l_{J+2} \leq K$ and $J \in [1: \lceil \log \frac{K}{2} \rceil ]$,  then the following inequality holds true 
\begin{align}
\sum_{j=1}^J  2^{J-j+1} d_{l_j} + d_{l_{J+1}} +  d_{l_{J+2}}   \leq \sum_{j=1}^J 2^{J-j} \alpha_{l_j} +  \alpha_{l_{J+2}} .
 \end{align}
\end{lemma}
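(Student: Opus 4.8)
The plan is to prove the inequality by a genie-aided converse. A first, routine move is to restrict to the sub-network of the $J{+}2$ involved receivers: handing receiver $l_j$ the messages $\{w_k:k\notin\{l_1,\dots,l_{J+2}\}\}$ as genie side information cannot reduce any rate, and after each such receiver subtracts the now-known signals $\{x_k:k\notin\{l_1,\dots,l_{J+2}\}\}$ one is left with a $J{+}2$-user interference channel whose coefficients are still generic and bounded and whose strengths are ordered as $\alpha_{l_1}\le\cdots\le\alpha_{l_{J+2}}$. After relabelling it therefore suffices to show, for an $N$-user channel with $N=J{+}2\ge 3$ and $\alpha_1\le\cdots\le\alpha_N$, that $\sum_{i=1}^{N-2}2^{N-1-i}d_i+d_{N-1}+d_N\le\sum_{i=1}^{N-2}2^{N-2-i}\alpha_i+\alpha_N$.

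The substance is a genie chain that brings in the receivers from weakest to strongest and, at each link, ``splits the cake in half.'' Start from Fano, $\bln R_i\le\Imu(w_i;y_i^{\bln})+\bln\epsilon_{\bln}$, used with the appropriate multiplicity. At stage $i$ half of the current weighted load on $w_1,\dots,w_i$ is charged directly to receiver $i$'s observation --- this costs no more than $\hen(y_i^{\bln})\le\tfrac{\bln}{2}\alpha_i\log P+o(\bln\log P)$ and produces the term $2^{N-2-i}\alpha_i$ on the right-hand side --- while the other half is passed forward into receiver $i{+}1$'s observation with the help of a genie $G_i$. Because $\alpha_i\le\alpha_{i+1}$ the genie $G_i$ can be taken to be precisely the portion of the interference that receiver $i{+}1$ needs in order to resolve receiver $i$'s signal up to its whole $\alpha_i$-level footprint, chosen so that the mutual-information terms it creates telescope against those of stage $i{-}1$; running the chain down to receiver $N$ (which, being strongest, absorbs the residual load at cost $\le\tfrac{\bln}{2}\alpha_N\log P$ and contributes $\alpha_N$) and cancelling all the leftover entropy terms --- each $o(\bln\log P)$ --- gives the claim after dividing by $\tfrac12\log P$ and letting $\bln,P\to\infty$. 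The $N=3$ instance of this chain is exactly an Etkin--Tse--Wang-type bound $2d_1+d_2+d_3\le\alpha_1+\alpha_3$, and I would work that case out first in isolation, as it already exhibits the mechanism in miniature.

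The hard part is entirely in the choice of the genies $G_i$ and the accompanying telescoping bookkeeping. The genies must be large enough that the argument is valid --- one cannot simply ``charge'' $d_1+\cdots+d_i\le\alpha_i$ to receiver $i$, since that inequality is false for $i\ge 3$ (in the true channel, interference alignment prevents receiver $i$ from resolving all $i$ signals from a single observation), so the forwarded load and the genie costs must be accounted for exactly --- yet small enough that the residue on the right-hand side is \emph{precisely} $\sum_i 2^{N-2-i}\alpha_i+\alpha_N$ and not something carrying spurious intermediate strengths $\alpha_j$ (a naive ``whole observation'' genie already inflates the $N=3$ bound from $\alpha_1+\alpha_3$ to $\alpha_1+\alpha_2+\alpha_3$). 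The remaining technical burden --- justifying the ``degraded in the GDoF sense'' comparisons between receivers of different strength, and absorbing all the $o(\bln\log P)$ terms --- is where the hypotheses that the $\{h_{k\ell}\}$ are drawn from a continuous distribution and have magnitudes bounded away from $0$ and $\infty$ are used.
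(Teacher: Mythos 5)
You have the right skeleton, and it is essentially the paper's: reduce to the $J+2$ selected users (equivalently, condition on all other messages), start from a weighted Fano bound, run a genie-aided chain whose weights halve, use an Etkin--Tse--Wang-type base case (indeed $2d_1+d_2+d_3\le\alpha_1+\alpha_3$ is the $J=1$ instance), and invoke the boundedness of the $h_{k\ell}$ only to absorb $o(\log P)$ terms. But the entire substance of the lemma lies in the step you explicitly defer ("the hard part is entirely in the choice of the genies $G_i$ and the accompanying telescoping bookkeeping"), and that step is not routine. In the paper the genie at each stage is concrete: besides the complementary messages $\bar{W}_{[j]}$, it is the auxiliary output $\tilde{y}_{k,\ell}(t)=\sqrt{P^{\alpha_{\ell}}}\sum_i h_{ki}x_i(t)+\tilde{z}_{\ell}(t)$, i.e.\ the current receiver's observation \emph{re-scaled down} to the power level of the next user in the chain. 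The decisive technical fact (Lemma~\ref{lm:sum2bound12} via Claims~\ref{lm:bound1JJ1122} and~\ref{lm:bound1JJ3344}, proved by chain rule, message independence, conditioning-reduces-entropy and Gaussian maximization) is that a matched pair such as $\Imu( w_{i}; y^{\bln}_{\ell_2} \mid \tilde{y}^{\bln}_{\ell_2,\ell_1},\bar{W}_{[i,j]})+\Imu( w_{j}; \tilde{y}^{\bln}_{l,\ell_3}\mid \tilde{y}^{\bln}_{\ell_2,\ell_1},\bar{W}_{[j]})$ costs only about $\tfrac{\bln}{2}(\alpha_{\ell_2}-\alpha_{\ell_1})\log P+\tfrac{\bln}{2}(\alpha_{\ell_3}-\alpha_{\ell_2})\log P$, i.e.\ SNR \emph{gaps}. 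The chain is then run from users $J,J{+}1,J{+}2$ down to user $1$ (via the potential $\Phi(J_0)$ of \eqref{eq:phidef100}): at stage $J_0$ the term $\Imu(w_{J_0};y^{\bln}_{J_0})$, kept at its \emph{full} weight $2^{J-J_0+1}$, is paired against the $2^{J-J_0+1}$ accumulated genie terms, the stage is charged $2^{J-J_0+1}(\alpha_{J_0}-\alpha_{J_0-1})$, and the stated right-hand side emerges only after telescoping.

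This is where your sketch genuinely diverges and leaves a gap. Your stage-wise accounting---process receivers from weakest to strongest, charge the full $\alpha_i$ against \emph{half} of the current load at receiver $i$, and forward the other half---is not realized by any genie you construct, and it is not how the working argument allocates cost: there each $\alpha_i$ coefficient $2^{J-i}$ is never charged in one piece, and you cannot justify "half of the current weighted load $\le \hen(y^{\bln}_i)\le\tfrac{\bln}{2}\alpha_i\log P$" because that load is a sum of distinct conditional mutual-information terms with different conditioning, which must first be matched pairwise (exactly the bookkeeping you omit) before any single-observation entropy bound applies. Your own caveat about naive genies inflating the bound (picking up spurious $\alpha_{J+1}$ or $\alpha_{K-1}$ terms) is well taken---the scaled-observation genie $\tilde{y}_{k,\ell}$ is precisely the device that avoids this---but identifying that such a genie is needed is not the same as exhibiting it and verifying the telescoping. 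As written, the proposal is a correct plan whose decisive construction is missing.
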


Before proving Lemma~\ref{lm:dLsum}, let us provide the following result derived from Lemma~\ref{lm:dLsum}, which serves as the converse of Theorem~\ref{thm:GDoFIC}.

\begin{corollary}  \label{cor:ubound}
For the $K$-user asymmetric interference channel defined in Section~\ref{sec:system}, the optimal  sum GDoF  is upper bounded by 
\begin{align}
\dsum(\al) \leq  \frac{  \sum_{k=1}^K \alpha_k  +  \alpha_K -\alpha_{K-1} }{2}  . \label{eq:ubound11} 
    \end{align}
\end{corollary}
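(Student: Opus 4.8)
The plan is to obtain \eqref{eq:ubound11} as a single nonnegative linear combination of instances of Lemma~\ref{lm:dLsum}, supplemented for small $K$ by the elementary two-user bound $d_i + d_j \le \alpha_j$ valid for $i\le j$ (this is the degenerate ``$J=0$'' case that Lemma~\ref{lm:dLsum} does not list, and it is just the two-user sub-problem, cf.\ the $K=2$ case of Theorem~\ref{thm:GDoFIC}). First I would record the identity $\sum_{k=1}^K \alpha_k + \alpha_K - \alpha_{K-1} = \sum_{k=1}^{K-2}\alpha_k + 2\alpha_K$, so that the claim is equivalent to $\sum_{k=1}^K d_k \le \tfrac12\sum_{k=1}^{K-2}\alpha_k + \alpha_K$.

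The key device is to apply Lemma~\ref{lm:dLsum} only with the last two indices pinned to $l_{J+1}=K-1$ and $l_{J+2}=K$. For such a choice, substituting $\delta_k := d_k - \tfrac{\alpha_k}{2}$ makes the term $\sum_{j=1}^J 2^{J-j}\alpha_{l_j}$ on the right-hand side cancel, and after also subtracting $\tfrac12(\alpha_{K-1}+\alpha_K)$ the lemma collapses to
\[
\sum_{j=1}^J 2^{J-j+1}\delta_{l_j} + \delta_{K-1} + \delta_K \;\le\; \frac{\alpha_K-\alpha_{K-1}}{2},
\]
with the \emph{same} right-hand side for every admissible choice of the free-slot set $S:=\{l_1<\cdots<l_J\}\subseteq[1:K-2]$ (and the two-user bound is the $|S|=0$ version). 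The target likewise becomes $\sum_{k=1}^K\delta_k \le \tfrac{\alpha_K-\alpha_{K-1}}{2}$. Hence it suffices to produce nonnegative weights $\lambda_S$, indexed by subsets $S\subseteq[1:K-2]$ with $|S|\le\lceil\log\tfrac K2\rceil$, such that (i) $\sum_S\lambda_S=1$ and (ii) $\sum_{S\ni k}\lambda_S\,2^{\,r_S(k)}=1$ for every $k\in[1:K-2]$, where $r_S(k)$ is the rank of $k$ within $S$ counted from the largest element (so the smallest element of $S$ carries the largest weight $2^{|S|}$); adding the corresponding instances with these weights then gives $\sum_k\delta_k \le (\sum_S\lambda_S)\tfrac{\alpha_K-\alpha_{K-1}}{2} = \tfrac{\alpha_K-\alpha_{K-1}}{2}$, because $\delta_{K-1}$ and $\delta_K$ automatically collect total coefficient $\sum_S\lambda_S$.

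The remaining and main task is to build such a family $\{\lambda_S\}$ for every $K$, respecting the cap $|S|\le\lceil\log\tfrac K2\rceil$. This is precisely where the common-ratio-$2$ geometric structure of the weights in Lemma~\ref{lm:dLsum} is essential: a chain $S$ of length $J$ spreads total weight $2^{J+1}-2$ over its $J$ elements, and since $2^{\lceil\log(K/2)\rceil+1}-2 \ge K-2$, one can combine such chains so as to cover each of the $K-2$ ``free'' users to weight exactly $1$ while the coefficients sum to exactly $1$; for $K\in\{2,3\}$ the available chains overshoot and one adds the two-user bound $d_{K-1}+d_K\le\alpha_K$ with the residual coefficient. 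Concretely I would construct the chains by a dyadic/halving scheme on $[1:K-2]$ — e.g.\ for $K=6$ the four chains $S\in\{\{1,3\},\{2,4\},\{3\},\{4\}\}$ each with $\lambda_S=\tfrac14$ already satisfy (i) and (ii) — and then check (i)--(ii) by direct computation. I expect that bookkeeping, rather than any conceptual step, to be the crux of the argument.
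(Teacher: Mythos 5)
Your reduction is sound and it is essentially the strategy of the paper's own proof: every bound used in Appendix~\ref{app:ubound} is an instance of Lemma~\ref{lm:dLsum} (or its degenerate two-user version) with the last two slots pinned to users $K-1$ and $K$, and these are combined with nonnegative weights so that each $d_k$ ends up with coefficient $1$, each $\alpha_k$ with $k\le K-2$ with coefficient $\tfrac12$, $\alpha_K$ with coefficient $1$, and $\alpha_{K-1}$ with coefficient $0$. Your substitution $\delta_k=d_k-\alpha_k/2$ makes this bookkeeping transparent, and your conditions (i)--(ii) are exactly the right covering conditions (correctly required with equality, since $\delta_k$ may be negative). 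Using shorter chains directly as smaller-$J$ instances of Lemma~\ref{lm:dLsum}, plus the explicit two-user bound, is also a clean way to express what the paper does via its $\Theta$-erasure device.

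The gap is that the proposal stops short of the actual content of the paper's proof: the existence, for every $K$, of a weight family $\{\lambda_S\}$ with $|S|\le\lceil\log\tfrac K2\rceil$ satisfying (i)--(ii) is asserted rather than proved. The counting inequality $2^{\lceil\log(K/2)\rceil+1}-2\ge K-2$ only matches total mass; it does not yield the per-user equalities in (ii) simultaneously with the normalization (i), and one worked example ($K=6$) plus the phrase ``dyadic/halving scheme'' does not specify the family in general. Appendix~\ref{app:ubound} is devoted precisely to this construction: $2^{\lceil\log(K/2)\rceil}$ uniformly weighted chains built dyadically from the left end (see \eqref{eq:uboundg111}) and, mirrored, from the right end (see \eqref{eq:uboundg222}), with erasure of doubly covered indices (an erased chain being a Lemma~\ref{lm:dLsum} instance with smaller $J$, or the bare two-user bound when everything is erased). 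Also, your parenthetical claim that the two-user bound $d_{K-1}+d_K\le\alpha_K$ is needed only for $K\in\{2,3\}$ is inaccurate: for example, for $K=9$ the paper's cover uses exactly that bound for four of its eight inequalities (half the total weight), and an exact cover by nonempty chains alone is not obviously available there. None of this invalidates your plan---the required family exists and the two-user bound is legitimately at your disposal---but exhibiting the family for general $K$ (e.g.\ by adopting the paper's two-ended dyadic construction) is the step you must still supply; it is the crux of this corollary, not mere bookkeeping.
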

\begin{proof} 
The proof is based on Lemma~\ref{lm:dLsum}. The details of this proof are  provided in Appendix~\ref{app:ubound}. 
\end{proof}

Let us now prove Lemma~\ref{lm:dLsum}.  
Without loss of generality, we will focus on the case of $l_i =i$ for $i \in [1: J+2]$  and $J \in [1: \lceil \log \frac{K}{2} \rceil ]$, and prove 
\begin{align}
\sum_{j=1}^J  2^{J-j+1} d_{j} + d_{J+1} +  d_{J+2}   \leq \sum_{j=1}^J 2^{J-j} \alpha_{j} +  \alpha_{J+2}.   \label{eq:dLsumWLG11} 
 \end{align}
Let us define an auxiliary variable
\begin{align} 
\tilde{y}_{k, \ell}(t) & \defeq   \sqrt{P^{\alpha_{\ell}}} \sum_{i=1}^{K} h_{ki}  x_{i}(t)  +  \tilde{z}_{\ell} (t)   \label{eq:ytildel}   
\end{align} 
where $\tilde{z}_{\ell} (t)   \sim \mathcal{N}(0, 1)$ is independent of the other noise random variables, for $k, \ell \in [1:K]$. 
 Let  $y^{\bln}_k \defeq \{y_k (t)\}_{t=1}^{\bln}$,  $x^{\bln}_k \defeq \{x_k (t)\}_{t=1}^{\bln}$, $z^{\bln}_k \defeq \{z_k (t)\}_{t=1}^{\bln}$, and $\tilde{y}^{\bln}_{k,\ell} \defeq \{\tilde{y}_{k, \ell} (t)\}_{t=1}^{\bln}$. 
For the ease of description, we  define that \[\bar{W}_{[i,j]} \defeq \{  w_{\ell}:  \ell \in [1: K],   \ell \neq i,  \ell \neq j\}\] and $\bar{W}_{[i]} \defeq \{  w_{\ell}:  \ell \in [1: K],   \ell \neq i\}$, for $i, j  \in [1:K], i\neq j$.
We also define that
  \begin{align}
 \Phi(\Jo) \defeq 2^{J-\Jo +1}  \Imu( w_{\Jo}; y^{\bln}_{\Jo})   +  \sum_{j=\Jo+1}^{J+2}  2^{\max\{J-j+1, 0\}}  \Imu( w_{j}; \tilde{y}^{\bln}_{\Jo+1, \Jo}  |  \bar{W}_{[j]} )  \label{eq:phidef100}  
 \end{align}
 for $\Jo\in [1: J-1]$,  and that  
   \begin{align}
  d_{0} \defeq 0 ,  \quad  \alpha_{0} \defeq 0 ,  \quad  \tilde{y}^{\bln}_{1, 0} \defeq \phi,  \quad   \Imu( w_{j}; \tilde{y}^{\bln}_{1, 0}  |  \bar{W}_{[j]} ) \defeq 0, \   \forall j ,  \quad  \Imu( w_{0}; y^{\bln}_{0}) \defeq 0, \quad \text{and} \quad  \Phi(0)   \defeq0.   \label{eq:phidef200}  
 \end{align}

Beginning with Fano's inequality, we have
\begin{align}
&\sum_{j=1}^J  2^{J-j+1}  \bln R_{j} + \bln R_{J+1} +  \bln R_{J+2} - n \epsilon_n   \non\\
 \leq  &\sum_{j=1}^{J-1}  2^{J-j+1}  \Imu( w_{j}; y^{\bln}_j ) +   2  \Imu( w_{J}; y^{\bln}_J )   +  \Imu( w_{J+1}; y^{\bln}_{J+1} ) +    \Imu( w_{J+2}; y^{\bln}_{J+2} )   \label{eq:fano112233}\\
  \leq  &\sum_{j=1}^{J-1}  2^{J-j+1}  \Imu( w_{j}; y^{\bln}_j ) +   \sum_{j=J}^{J+2}  2^{\max\{J-j+1,0\}}  \Imu( w_{j}; \tilde{y}^{\bln}_{J, J -1}  |  \bar{W}_{[j]} )    \non\\&  +  \bigl( (\alpha_{J+2}- \alpha_{J}) +2 (\alpha_{J}- \alpha_{J-1})\bigr)  \frac{\bln}{2}   \log P + \bln  o(\log P)   \label{eq:upb00011} \\
   =  &\sum_{j=1}^{J-2}  2^{J-j+1}  \Imu( w_{j}; y^{\bln}_j ) +   \Phi(J-1)  \non\\&  +  \bigl( (\alpha_{J+2}- \alpha_{J}) +2 (\alpha_{J}- \alpha_{J-1})\bigr)  \frac{\bln}{2}   \log P + \bln  o(\log P)   \label{eq:upb00022} \\
   \leq   &\sum_{j=1}^{J-3}  2^{J-j+1}  \Imu( w_{j}; y^{\bln}_j ) +  \Phi(J- 2)
   \non\\&  +  \bigl( (\alpha_{J+2}- \alpha_{J}) +2 (\alpha_{J}- \alpha_{J-1}) +  2^{2} (\alpha_{J-1}  - \alpha_{J-2}) \bigr)  \frac{\bln}{2}   \log P + \bln  o(\log P)   \label{eq:upb00033} \\
   \leq   &\sum_{j=1}^{J-4}  2^{J-j+1}  \Imu( w_{j}; y^{\bln}_j ) +  \Phi(J- 3)
   \non\\&  +  \bigl((\alpha_{J+2}- \alpha_{J}) +2 (\alpha_{J}- \alpha_{J-1}) +  2^{2} (\alpha_{J-1}  - \alpha_{J-2}) + 2^{3} (\alpha_{J-2}  - \alpha_{J-3}) \bigr)  \frac{\bln}{2}   \log P + \bln  o(\log P)   \label{eq:upb00044} \\
      \vdots&  \non\\
      \leq   &  \bigl( (\alpha_{J+2}- \alpha_{J}) +2 (\alpha_{J}- \alpha_{J-1}) + 2^{2} (\alpha_{J-1}  \!-\! \alpha_{J-2}) + 2^{3} (\alpha_{J-2}  \!-\! \alpha_{J-3}) + \! \cdots \!+  2^{J} (\alpha_{1}  \!-\! \alpha_{0}) \bigr)  \frac{\bln}{2}   \log P \non\\&+ \bln  o(\log P)   \label{eq:upb00055} \\
     = & \bigl( \sum_{j=1}^J 2^{J-j} \alpha_{j} +  \alpha_{J+2} \bigr)  \frac{\bln}{2}   \log P  + \bln  o(\log P)  \label{eq:upb00066}
\end{align}
 where  $ \Phi(\Jo)$ is defined in \eqref{eq:phidef100}, for $\Jo\in [1: J-1]$;
\eqref{eq:fano112233} is from Fano's inequality, and $\epsilon_n \to 0$ as $n \to \infty$; 
 \eqref{eq:upb00011} follows from Lemma~\ref{lm:dffbound11}, which is provided at the end of this section;
 \eqref{eq:upb00022} uses the definition of $ \Phi(\Jo)$;
 \eqref{eq:upb00033}-\eqref{eq:upb00055} follow from the result of Lemma~\ref{lm:boundchain998}, provided at the end of this section.
 By dividing each side of \eqref{eq:upb00066} with  $\frac{\bln}{2}   \log P $ and letting $\bln, P \to \infty$,  it proves the bound in  \eqref{eq:dLsumWLG11}.  By mapping the indexes $i$ with $l_i$, for $i \in [1: J+2]$ and $1\leq l_1< l_2 <  \cdots  < l_{J+2} \leq K$,   it then proves Lemma~\ref{lm:dLsum}.

Note that, in our proof the weights of the  sum GDoF for $J+2$ users are designed specifically as $(2^{J}, 2^{J-1}, \cdots, 2^{2}, 2^{1}, 1, 1)$. With this design, for $\Jo \in [1:J ]$, the $\Jo$th mutual information term  $ \Imu( w_{\Jo}; y^{\bln}_{\Jo} )$ with weight $2^{J-\Jo+1}$  can be bounded with other $2^{J-\Jo+1}$  mutual information terms generated from  User~$(\Jo+1)$ to User~$(J+2)$, i.e.,  $\sum_{j=\Jo+1}^{J+2}  2^{\max\{J-j+1, 0\}}  \Imu( w_{j}; \tilde{y}^{\bln}_{\Jo+1, \Jo}  |  \bar{W}_{[j]} )$. This bounding operation also generates a total of  $2^{J-(\Jo-1)+1}$ mutual information terms  that will be used to bound the $(\Jo-1)$th mutual information term  $ \Imu( w_{\Jo-1}; y^{\bln}_{\Jo-1} )$ with weight $2^{J-(\Jo-1)+1}$. This process repeats until $\Jo=1$. 
Since a weighted mutual information term is  bounded with other weighted mutual information terms and it also generates  new  terms for the next operation, it then forms a ``chain'' on this bounding process.

The lemmas and claims used in our proof are provided below. Their proofs are relegated to Appendix~\ref{app:prooflmclaims}. 
 
\begin{lemma}  \label{lm:boundchain998}
For $ \Phi(\Jo)$ defined in \eqref{eq:phidef100},  $\Jo\in [1: J-1]$, we have the following bound
\begin{align}
 \Phi(\Jo)  +  2^{J-(\Jo-1) +1}  \Imu( w_{\Jo-1}; y^{\bln}_{\Jo -1})   
\leq   2^{J-\Jo +1} (\alpha_{\Jo}  - \alpha_{\Jo -1}) \cdot \frac{\bln}{2}   \log P + \bln  o(\log P)  +  \Phi(\Jo -1)    \non 
 \end{align}
  where $\alpha_{0}, \Imu( w_{0}; y^{\bln}_{0})$, and  $\Phi(0) $ are  defined in \eqref{eq:phidef200}.
\end{lemma}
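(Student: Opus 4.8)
## Proof proposal for Lemma~\ref{lm:boundchain998}

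The plan is to unpack the definition of $\Phi(\Jo)$ from \eqref{eq:phidef100}, regroup the terms so that the $\Jo$th mutual information $\Imu(w_\Jo; y^\bln_\Jo)$ (carrying weight $2^{J-\Jo+1}$) is paired with exactly $2^{J-\Jo+1}$ mutual information terms drawn from users $\Jo+1,\dots,J+2$, namely the sum $\sum_{j=\Jo+1}^{J+2} 2^{\max\{J-j+1,0\}}\Imu(w_j; \tilde y^\bln_{\Jo+1,\Jo}\,|\,\bar W_{[j]})$, and then invoke the single-layer bounding step (an analogue of Lemma~\ref{lm:dffbound11}, applied at layer $\Jo$ rather than $J$) to show this combination is at most $2^{J-\Jo+1}(\alpha_\Jo - \alpha_{\Jo-1})\frac{\bln}{2}\log P + \bln\,o(\log P)$ plus a residual which is precisely the collection of $2^{J-(\Jo-1)+1}$ terms appearing in $\Phi(\Jo-1)$. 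First I would check the edge cases $\Jo=1$ using the degenerate definitions in \eqref{eq:phidef200} ($\alpha_0=0$, $\tilde y^\bln_{1,0}=\phi$, $\Imu(w_0;y^\bln_0)=0$, $\Phi(0)=0$), where the claimed inequality reduces to an unconditional bound on $\Phi(1)+2^{J+1}\Imu(w_0;y^\bln_0)=\Phi(1)$ by $2^{J}\alpha_1 \cdot\frac{\bln}{2}\log P$, which should follow directly from Lemma~\ref{lm:dffbound11} with the indices shifted.

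The core computation is a genie-aided argument at layer $\Jo$. Observe that $\Imu(w_\Jo; y^\bln_\Jo)$ can be written as $\hen(y^\bln_\Jo) - \hen(y^\bln_\Jo\,|\,w_\Jo)$; providing Receiver~$\Jo$ with the genie side information $\bar W_{[\Jo]}$ (all messages except $w_\Jo$) and using the auxiliary observations $\tilde y_{k,\ell}$ defined in \eqref{eq:ytildel}, one reduces $\hen(y^\bln_\Jo\,|\,w_\Jo,\bar W_{[\Jo]})$ to a noise term $\bln\,o(\log P)$ and bounds the remaining positive entropy by $2^{J-\Jo+1}$ copies of a conditional entropy at the reduced power scale $P^{\alpha_{\Jo-1}}$, each of which, when combined with the corresponding term $\Imu(w_j; \tilde y^\bln_{\Jo+1,\Jo}\,|\,\bar W_{[j]})$ for $j>\Jo$, telescopes: the power gap between scale $P^{\alpha_\Jo}$ and scale $P^{\alpha_{\Jo-1}}$ contributes the $(\alpha_\Jo - \alpha_{\Jo-1})$ factor with multiplicity $2^{J-\Jo+1}$, while the leftover conditional entropies at scale $P^{\alpha_{\Jo-1}}$ are exactly what gets renamed as the $\tilde y^\bln_{\Jo,\Jo-1}$ terms inside $\Phi(\Jo-1)$. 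The weight bookkeeping is the delicate part: one needs $2\cdot(\text{weight of layer }\Jo) = \text{weight of layer }(\Jo-1)$, i.e. $2\cdot 2^{J-\Jo+1} = 2^{J-(\Jo-1)+1}$, which is why the geometric weights with ratio $2$ are chosen, and one must verify that the $\max\{J-j+1,0\}$ exponents on the tail users $J+1,J+2$ (both equal to $0$, i.e. weight $1$) are preserved under this regrouping rather than being doubled.

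I expect the main obstacle to be making the "telescoping of conditional differential entropies across power layers" rigorous — specifically, showing that after conditioning on $\bar W_{[j]}$ the difference $\hen(\tilde y^\bln_{\Jo+1,\Jo}\,|\,\cdots) - \hen(\tilde y^\bln_{\Jo,\Jo-1}\,|\,\cdots)$ is bounded by $(\alpha_\Jo - \alpha_{\Jo-1})\frac{\bln}{2}\log P + \bln\,o(\log P)$ uniformly, using the boundedness of the channel coefficients away from $0$ and $\infty$, and that the cross-terms in the Gaussian received signals at different power scales can be handled by the standard "reducing differential entropy by scaling" estimate. This is presumably where Lemma~\ref{lm:dffbound11} and the elementary claims deferred to Appendix~\ref{app:prooflmclaims} do the heavy lifting; my proof of Lemma~\ref{lm:boundchain998} would therefore consist of (i) the algebraic regrouping of $\Phi(\Jo)$, (ii) one application of the layer-$\Jo$ version of Lemma~\ref{lm:dffbound11}, and (iii) a careful matching of indices and weights to recognize the residual as $\Phi(\Jo-1)$, with the analytic content quarantined inside the cited lemma and claims.
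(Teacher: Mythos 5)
Your plan is correct and matches the paper's argument: the regrouping of $\Phi(\Jo)$ via the identity $\sum_{j=\Jo+1}^{J+2}2^{\max\{J-j+1,0\}}=2^{J-\Jo+1}$, the pairwise genie-aided bounds costing $(\alpha_{\Jo}-\alpha_{\Jo-1})\cdot\frac{\bln}{2}\log P+\bln\,o(\log P)$ per pair, and the recognition of the residual terms as $\Phi(\Jo-1)$ after adding $2^{J-(\Jo-1)+1}\Imu(w_{\Jo-1};y^{\bln}_{\Jo-1})$ to both sides are exactly the steps the paper takes. The only cosmetic difference is that the paper has already packaged your ``layer-$\Jo$ analogue of Lemma~\ref{lm:dffbound11}'' as Lemma~\ref{lm:boundchain012} (itself proved from Lemma~\ref{lm:sum2bound12}), so its proof of Lemma~\ref{lm:boundchain998} reduces to one invocation of that lemma plus the final addition of the weighted term.
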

\begin{proof}
See Appendix~\ref{app:boundchain998}. 
The proof is based on the result of Lemma~\ref{lm:boundchain012}. 
\end{proof}
 
 \begin{lemma}  \label{lm:boundchain012}
For $\Jo\in [1: J-1]$, the following inequality is true
\begin{align}
& 2^{J-\Jo +1}  \Imu( w_{\Jo}; y^{\bln}_{\Jo})   +  \sum_{j=\Jo+1}^{J+2}  2^{\max\{J-j+1, 0\}}  \Imu( w_{j}; \tilde{y}^{\bln}_{\Jo+1, \Jo}  |  \bar{W}_{[j]} )   \non\\
\leq &  2^{J-\Jo +1} (\alpha_{\Jo}  - \alpha_{\Jo -1}) \cdot \frac{\bln}{2}   \log P + \bln  o(\log P)  + \sum_{j=\Jo}^{J+2}  2^{\max\{J-j+1,  0\}}  \Imu( w_{j}; \tilde{y}^{\bln}_{\Jo, \Jo -1}  |  \bar{W}_{[j]} )   \non 
 \end{align}
where  $\alpha_{0}, \tilde{y}^{\bln}_{1, 0}$, and $\Imu( w_{j}; \tilde{y}^{\bln}_{1, 0}  |  \bar{W}_{[j]} )$ are defined in \eqref{eq:phidef200}.
\end{lemma}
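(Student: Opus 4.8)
The plan is to establish this single‑step bound directly from the channel model, using a genie‑aided argument together with the standard ``reduce‑noise / extract‑bits'' manipulations that are the workhorse of GDoF converses for interference channels. First I would unpack what the term $\Imu(w_{\Jo}; y^{\bln}_{\Jo})$ can buy us: since $w_{\Jo}$ is decodable at Receiver~$\Jo$ from $y^{\bln}_{\Jo} = \sqrt{P^{\alpha_{\Jo}}}\sum_{i} h_{\Jo i} x_i + z_{\Jo}$, and since the genie provides $\bar W_{[\Jo]}$ (all messages except $w_{\Jo}$), I can subtract the contributions of all interfering transmitters' codewords and reduce $y^{\bln}_{\Jo}$ to essentially a point‑to‑point observation of $x_{\Jo}$ scaled by $\sqrt{P^{\alpha_{\Jo}}}$. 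The key quantitative step is then to compare this with a version scaled only by $\sqrt{P^{\alpha_{\Jo}-\alpha_{\Jo-1}}}$: removing the ``top'' $\alpha_{\Jo-1}$ worth of signal levels costs at most $(\alpha_{\Jo}-\alpha_{\Jo-1})\cdot\frac{\bln}{2}\log P + \bln\, o(\log P)$ in entropy (this is the ``peeling one power layer'' estimate, and it is exactly where the geometric weights pay off). What remains after peeling is a signal at level $\alpha_{\Jo}-\alpha_{\Jo-1}$, which by the definition \eqref{eq:ytildel} is recognizable — up to adding conditioning on $\bar W_{[\cdot]}$ and up to $o(\log P)$ noise mismatch — as the auxiliary observation $\tilde y^{\bln}_{\Jo,\Jo-1}$ seen by the users $\Jo,\dots,J+2$.

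Concretely, the steps in order would be: (i) write $2^{J-\Jo+1}\Imu(w_{\Jo};y^{\bln}_{\Jo}) \le 2^{J-\Jo+1}\Imu(w_{\Jo};y^{\bln}_{\Jo}\mid \bar W_{[\Jo]})$ by independence of messages, then expand as a difference of differential entropies $\hen(y^{\bln}_{\Jo}\mid \bar W_{[\Jo]}) - \hen(z^{\bln}_{\Jo})$; (ii) upper bound $\hen(y^{\bln}_{\Jo}\mid \bar W_{[\Jo]})$ by the entropy of the corresponding ``power‑reduced'' signal plus the $(\alpha_{\Jo}-\alpha_{\Jo-1})\frac{\bln}{2}\log P$ gap (a direct computation: scaling a variable by $\sqrt{P^{\Delta}}$ adds $\frac{\bln}{2}\Delta\log P$ to differential entropy, and the reverse direction is handled by the minimum‑channel‑gain assumption, contributing only $o(\log P)$); (iii) re‑identify the resulting reduced signal with $\tilde y^{\bln}_{\Jo,\Jo-1}$ and fold its entropy into the mutual‑information terms $\Imu(w_{\Jo};\tilde y^{\bln}_{\Jo,\Jo-1}\mid \bar W_{[\Jo]})$ (with weight $2^{\max\{J-\Jo+1,0\}} = 2^{J-\Jo+1}$ since $\Jo\le J$); (iv) observe that the sum $\sum_{j=\Jo+1}^{J+2} 2^{\max\{J-j+1,0\}}\Imu(w_j;\tilde y^{\bln}_{\Jo+1,\Jo}\mid \bar W_{[j]})$ on the left is, after using the chain rule and the fact that conditioning on more messages only decreases entropy, dominated by the corresponding terms with $\tilde y^{\bln}_{\Jo,\Jo-1}$ in place of $\tilde y^{\bln}_{\Jo+1,\Jo}$ — this is again a one‑layer peeling, because $\tilde y^{\bln}_{\Jo+1,\Jo}$ and $\tilde y^{\bln}_{\Jo,\Jo-1}$ differ only in the power scaling $\alpha_{\Jo}$ versus $\alpha_{\Jo-1}$, and the desired signal of User~$j$ inside it is then degraded appropriately. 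Collecting (iii) and (iv) gives exactly the right‑hand side $\sum_{j=\Jo}^{J+2} 2^{\max\{J-j+1,0\}}\Imu(w_j;\tilde y^{\bln}_{\Jo,\Jo-1}\mid \bar W_{[j]})$ plus the claimed $\frac{\bln}{2}\log P$ term.

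The boundary conventions in \eqref{eq:phidef200} make the case $\Jo=1$ automatic: $\tilde y^{\bln}_{1,0}=\phi$ and $\alpha_0=0$, so the ``peeled'' signal at level $\alpha_1-\alpha_0=\alpha_1$ is just the full observation, and the bound degenerates to a plain cut‑set / genie estimate on $\Imu(w_1;y^{\bln}_1)$ against the $\tilde y^{\bln}_{2,1}$ terms.

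The main obstacle I anticipate is step (iv): making precise the claim that a weighted sum of ``higher‑resolution'' auxiliary mutual informations $\Imu(w_j;\tilde y^{\bln}_{\Jo+1,\Jo}\mid\bar W_{[j]})$ can be absorbed into the ``lower‑resolution'' ones $\Imu(w_j;\tilde y^{\bln}_{\Jo,\Jo-1}\mid\bar W_{[j]})$ together with the single shared $(\alpha_{\Jo}-\alpha_{\Jo-1})\frac{\bln}{2}\log P$ budget, rather than paying the layer cost once per user $j$. The trick is that the conditioning sets $\bar W_{[j]}$ for distinct $j$ overlap heavily, so after the genie supplies all but one message the entropy differences telescope and only one layer of power need be peeled for the whole group; one has to set up the chain‑rule expansion carefully so that each noise‑reduction and signal‑degradation step is charged only $o(\log P)$ except for the one genuine $\frac{\bln}{2}\log P$ term. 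I expect the clean way to handle this is to prove it via the auxiliary Lemma~\ref{lm:dffbound11} (the differential‑entropy bound invoked at \eqref{eq:upb00011}) applied with the two power levels $\alpha_{\Jo}$ and $\alpha_{\Jo-1}$, which is presumably tailored exactly for this peeling operation, so that Lemma~\ref{lm:boundchain012} reduces to one application of that bound plus bookkeeping of the weights.
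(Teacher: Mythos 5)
Your overall intuition (peel one power layer, pay $(\alpha_{\Jo}-\alpha_{\Jo-1})\cdot\frac{\bln}{2}\log P$, and convert everything to the coarser observation $\tilde y^{\bln}_{\Jo,\Jo-1}$) points in the right direction, but the decomposition you propose double-spends the $\log P$ budget and therefore does not close. In steps (i)--(iii) you bound $2^{J-\Jo+1}\Imu(w_{\Jo};y^{\bln}_{\Jo})$ by $2^{J-\Jo+1}\Imu(w_{\Jo};\tilde y^{\bln}_{\Jo,\Jo-1}\mid\bar{W}_{[\Jo]})$ plus $2^{J-\Jo+1}(\alpha_{\Jo}-\alpha_{\Jo-1})\frac{\bln}{2}\log P$; that step is itself valid, but it already consumes the \emph{entire} $\log P$ allowance available on the right-hand side of the lemma. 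Your step (iv) then still needs $\Imu(w_j;\tilde y^{\bln}_{\Jo+1,\Jo}\mid\bar{W}_{[j]})\le \Imu(w_j;\tilde y^{\bln}_{\Jo,\Jo-1}\mid\bar{W}_{[j]})+\bln\,o(\log P)$ for every $j\in[\Jo+1:J+2]$, which is false: conditioned on $\bar{W}_{[j]}$ all interference can be stripped, so the first quantity can reach $\alpha_{\Jo}\cdot\frac{\bln}{2}\log P$ while the second is capped at $\alpha_{\Jo-1}\cdot\frac{\bln}{2}\log P$. The gap is a genuine $(\alpha_{\Jo}-\alpha_{\Jo-1})\frac{\bln}{2}\log P$ \emph{per user} $j$, and summed with the weights it equals the whole budget a second time. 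The ``telescoping of overlapping conditioning sets'' you invoke to escape this does not exist: each term concerns a different message $w_j$, and lowering the power level of the common observation costs each of them separately.

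The paper's proof avoids the problem by never bounding $\Imu(w_{\Jo};y^{\bln}_{\Jo})$ in isolation. Using the identity $\sum_{j=\Jo+1}^{J+2}2^{\max\{J-j+1,0\}}=2^{J-\Jo+1}$, it splits that term into weighted copies and pairs each copy with one term $\Imu(w_j;\tilde y^{\bln}_{\Jo+1,\Jo}\mid\bar{W}_{[j]})$; for each pair, Lemma~\ref{lm:sum2bound12} (built on Claims~\ref{lm:bound1JJ1122} and~\ref{lm:bound1JJ3344}) shows that a \emph{single} $(\alpha_{\Jo}-\alpha_{\Jo-1})\frac{\bln}{2}\log P$ charge covers both degradations simultaneously: given $\tilde y^{\bln}_{\Jo,\Jo-1}$ and $\bar{W}_{[j,\Jo]}$, the residual of $y^{\bln}_{\Jo}$ carries only $(\alpha_{\Jo}-\alpha_{\Jo-1})$ about the pair $(w_{\Jo},w_j)$ \emph{jointly}, and the residual of $\tilde y^{\bln}_{\Jo+1,\Jo}$ given $y^{\bln}_{\Jo}$ carries only $O(1)$ about $w_j$, because $y^{\bln}_{\Jo}$ already contains $x_j$ at the same power level $\alpha_{\Jo}$. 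This two-user pairing is the missing idea in your plan. Note also that the auxiliary result tailored for this step is Lemma~\ref{lm:sum2bound12}, not Lemma~\ref{lm:dffbound11}; the latter is the separate base-case bound for users $J$, $J+1$, $J+2$ used at \eqref{eq:upb00011}.
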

\begin{proof}
See Appendix~\ref{app:boundchain012}. The proof uses the result of Lemma~\ref{lm:sum2bound12}.
 \end{proof}

 \begin{lemma}  \label{lm:dffbound11}
The following bound holds true
\begin{align}
& 2  \Imu( w_{J}; y^{\bln}_J )   +  \Imu( w_{J+1}; y^{\bln}_{J+1} ) +    \Imu( w_{J+2}; y^{\bln}_{J+2} )   \non\\
\leq &  2 \Imu( w_{J}; \tilde{y}^{\bln}_{J, J -1}  |  \bar{W}_{[J]} )   +  \Imu( w_{J+1};    \tilde{y}^{\bln}_{J, J -1}  | \bar{W}_{[J+1]} )  +     \Imu( w_{J+2};  \tilde{y}^{\bln}_{J, J -1} | \bar{W}_{[ J+2]} )   \non\\&  +  (\alpha_{J+2}- \alpha_{J} +2 (\alpha_{J}- \alpha_{J-1}) ) \cdot \frac{\bln}{2}   \log P + \bln  o(\log P) . \non 
 \end{align}
\end{lemma}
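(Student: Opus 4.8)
The plan is to bound the three direct mutual information terms $\Imu(w_J;y^{\bln}_J)$, $\Imu(w_{J+1};y^{\bln}_{J+1})$, $\Imu(w_{J+2};y^{\bln}_{J+2})$ by the corresponding conditional mutual information terms on the auxiliary observation $\tilde y^{\bln}_{J,J-1}$, paying a GDoF penalty for the gap. The key device is a genie argument: to Receiver~$j$ (for $j\in\{J,J+1,J+2\}$) we provide, as side information, all messages except $w_j$, i.e.\ $\bar W_{[j]}$. Once these messages are known, the receiver can reconstruct and subtract all the interfering codewords $\{x_i^{\bln}\}_{i\neq j}$, so that from $y^{\bln}_j$ it effectively sees only $\sqrt{P^{\alpha_j}}h_{jj}x_j^{\bln}+z_j^{\bln}$; by the data-processing inequality $\Imu(w_j;y^{\bln}_j)\le \Imu(w_j;y^{\bln}_j\mid \bar W_{[j]})$ and the latter equals, up to $\bln\, o(\log P)$, the mutual information of a point-to-point channel at power exponent $\alpha_j$. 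The point of using $\tilde y^{\bln}_{J,J-1}$ — which scales the \emph{aggregate} received signal $\sum_i h_{Ji}x_i(t)$ by $\sqrt{P^{\alpha_{J-1}}}$ — is that a statistically equivalent (up to noise, and up to bounded channel-coefficient factors) degraded version of $y^{\bln}_j$ at the lower power level $\alpha_{J-1}$ can be extracted from it, so we can consistently rewrite all three terms against the \emph{same} auxiliary variable $\tilde y^{\bln}_{J,J-1}$ conditioned on the appropriate genie.

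Concretely, I would argue term by term. For User~$J$ (weight $2$): the received signal $y^{\bln}_J$ at exponent $\alpha_J$ can be split into a ``top'' layer of height $\alpha_J-\alpha_{J-1}$ and a ``bottom'' layer of height $\alpha_{J-1}$; the bottom layer is (up to noise) recoverable from $\tilde y^{\bln}_{J,J-1}$, while the top layer contributes at most $(\alpha_J-\alpha_{J-1})\frac{\bln}{2}\log P+\bln\, o(\log P)$ to the entropy. This yields
\[
\Imu(w_J;y^{\bln}_J)\le \Imu(w_J;\tilde y^{\bln}_{J,J-1}\mid \bar W_{[J]})+(\alpha_J-\alpha_{J-1})\tfrac{\bln}{2}\log P+\bln\, o(\log P),
\]
and multiplying by $2$ gives the $2(\alpha_J-\alpha_{J-1})$ penalty. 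For Users~$J+1$ and $J+2$ (weight $1$ each), since $\alpha_{J+1},\alpha_{J+2}\ge \alpha_J\ge\alpha_{J-1}$, the same splitting argument but now peeling off a layer of height $\alpha_{J+1}-\alpha_{J-1}$ (resp.\ $\alpha_{J+2}-\alpha_{J-1}$) down to the common floor $\alpha_{J-1}$ gives
\[
\Imu(w_{J+1};y^{\bln}_{J+1})\le \Imu(w_{J+1};\tilde y^{\bln}_{J,J-1}\mid \bar W_{[J+1]})+(\alpha_{J+1}-\alpha_{J-1})\tfrac{\bln}{2}\log P+\bln\, o(\log P),
\]
and similarly for $J+2$. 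Summing the three bounds, the penalty is $\bigl[2(\alpha_J-\alpha_{J-1})+(\alpha_{J+1}-\alpha_{J-1})+(\alpha_{J+2}-\alpha_{J-1})\bigr]\frac{\bln}{2}\log P$; the target in the lemma has penalty $(\alpha_{J+2}-\alpha_J)+2(\alpha_J-\alpha_{J-1})$, so I still need to absorb the residual $\alpha_{J+1}-\alpha_{J-1}$ term. The resolution is that the $J+1$ observation should instead be referenced to level $\alpha_J$ (not all the way down to $\alpha_{J-1}$): bounding $\Imu(w_{J+1};y^{\bln}_{J+1})$ against a version at exponent $\alpha_J$ costs only $(\alpha_{J+1}-\alpha_J)\frac{\bln}{2}\log P$, and then a further comparison between levels $\alpha_J$ and $\alpha_{J-1}$ is \emph{not} taken for this term; re-accounting carefully shows the $J+1$ contribution enters with the conditional term $\Imu(w_{J+1};\tilde y^{\bln}_{J,J-1}\mid\bar W_{[J+1]})$ and penalty $(\alpha_{J+1}-\alpha_{J-1})$, but this is then cancelled against a matching negative contribution that appears when the $\Imu(w_J;\cdot)$ term is expanded — i.e.\ the genie structure is chosen so that $\bar W_{[J]}\supseteq\{w_{J+1}\}$, and a chain-rule / difference-of-entropies manipulation telescopes the $\alpha_{J+1}$-dependent pieces. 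I would organize this as: (i) introduce the genie, (ii) use degradedness of $\tilde y^{\bln}_{J,J-1}$ relative to each $y^{\bln}_j$, (iii) apply the ``height of a layer'' entropy bound $\hen(\text{layer of exponent }\delta)\le \delta\,\frac{\bln}{2}\log P+\bln\, o(\log P)$, and (iv) bookkeep the coefficients so the cross terms cancel.

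The main obstacle I anticipate is precisely step~(iv): getting the coefficients to collapse to exactly $(\alpha_{J+2}-\alpha_J)+2(\alpha_J-\alpha_{J-1})$ rather than the naive sum $2(\alpha_J-\alpha_{J-1})+(\alpha_{J+1}-\alpha_{J-1})+(\alpha_{J+2}-\alpha_{J-1})$. This requires the genie sets $\bar W_{[J]},\bar W_{[J+1]},\bar W_{[J+2]}$ to be nested/overlapping in just the right way so that, after applying the chain rule to peel $w_{J+1}$ and $w_{J+2}$ out of the conditioning in the $w_J$-term (each such step producing a term of the form $\Imu(w_{J+1};\tilde y^{\bln}_{J,J-1}\mid\bar W_{[J+1]})$ with a sign that cancels the one generated above), the $\alpha_{J+1}$ dependence disappears entirely — which is exactly the phenomenon the second Remark after the theorem flags ($\alpha_{K-1}$ has no effect). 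The differential-entropy estimates themselves (a Gaussian with power $P^\delta$ has differential entropy $\frac{\bln}{2}\log P^\delta+O(\bln)$, and conditioning/removing known codewords costs $\bln\, o(\log P)$ because channel coefficients are bounded away from $0$ and $\infty$) are routine; the combinatorial/algebraic cancellation is where care is needed, and I would verify it by writing every mutual information as a difference of differential entropies and matching terms explicitly.
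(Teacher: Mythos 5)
There is a genuine gap. Your term-by-term strategy---bounding each $\Imu( w_{j}; y^{\bln}_{j})$ separately by $\Imu( w_{j}; \tilde{y}^{\bln}_{J, J-1} \mid \bar{W}_{[j]})$ plus the ``layer height'' from $\alpha_{J-1}$ up to $\alpha_{j}$---is valid but too weak to give the stated bound: as you yourself compute, it yields the penalty $2(\alpha_{J}-\alpha_{J-1})+(\alpha_{J+1}-\alpha_{J-1})+(\alpha_{J+2}-\alpha_{J-1})$, which exceeds the target by $\alpha_{J}+\alpha_{J+1}-2\alpha_{J-1}$. The fix you sketch---a ``telescoping cancellation'' between positive and negative $\alpha_{J+1}$-dependent pieces produced by chain-ruling the genie---is not the actual mechanism and cannot work as described: every quantity in play is a nonnegative mutual information, so nothing cancels. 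What is missing is a \emph{joint}, sum-rate-style bound on \emph{pairs} of residual terms sitting at two different receivers. Concretely, the paper's Lemma~\ref{lm:sum2bound12} shows that a term of the form $\Imu( w_{i}; y^{\bln}_{\ell_2} \mid \tilde{y}^{\bln}_{\ell_2,\ell_1}, \bar{W}_{[i,j]})$ (receiver $\ell_2$'s view of $w_i$ inside the layer $[\alpha_{\ell_1},\alpha_{\ell_2}]$, with $w_j$ also unknown) plus $\Imu( w_{j}; \tilde{y}^{\bln}_{l,\ell_3} \mid \tilde{y}^{\bln}_{\ell_2,\ell_1}, \bar{W}_{[j]})$ is together at most $(\alpha_{\ell_3}-\alpha_{\ell_1})\frac{\bln}{2}\log P + \bln\, o(\log P)$---not the naive sum---because the first term can be enlarged to the joint mutual information $\Imu( w_{i}, w_{j}; y^{\bln}_{\ell_2} \mid \cdots)$, which is still capped by the single layer height $\alpha_{\ell_2}-\alpha_{\ell_1}$ (Claim~\ref{lm:bound1JJ1122}), after which only the portion of $w_j$ living \emph{above} level $\alpha_{\ell_2}$ survives (Claim~\ref{lm:bound1JJ3344}). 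You never invoke a joint term $\Imu( w_{i}, w_{j}; \cdot)$ at a single receiver, and without it the coefficients cannot collapse.

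The correct bookkeeping is a cross-pairing enabled by the weight $2$ on user $J$. Write $2\,\Imu( w_{J}; y^{\bln}_{J})$ as two copies with the two genies $\bar{W}_{[J,J+1]}$ and $\bar{W}_{[J,J+2]}$; each copy produces the head term $\Imu( w_{J}; \tilde{y}^{\bln}_{J, J-1} \mid \bar{W}_{[J]})$ plus a residual $\Imu( w_{J}; y^{\bln}_{J} \mid \tilde{y}^{\bln}_{J, J-1}, \bar{W}_{[J,J+1]})$ (resp.\ $\bar{W}_{[J,J+2]}$) confined to the layer $[\alpha_{J-1},\alpha_{J}]$. Users $J+1$ and $J+2$ are first referenced to the intermediate level $\alpha_{J}$ via $\tilde{y}^{\bln}_{J+1,J}$; their two residuals above $\alpha_{J}$ are bounded \emph{jointly} by $(\alpha_{J+2}-\alpha_{J})\frac{\bln}{2}\log P$ (this single step is where $\alpha_{J+1}$ disappears), and their remaining pieces between $\alpha_{J-1}$ and $\alpha_{J}$ are then paired one-to-one with the two copies of user $J$'s residual, each pair costing $(\alpha_{J}-\alpha_{J-1})\frac{\bln}{2}\log P$ by the same joint bound. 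Summing gives exactly $(\alpha_{J+2}-\alpha_{J})+2(\alpha_{J}-\alpha_{J-1})$. Your intuition about where the difficulty lies is correct, but the decisive tool---the two-message--one-receiver joint bound and the specific pairing it enables---is absent from your proposal.
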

\begin{proof}
See Appendix~\ref{app:dffbound11}. The proof uses the result of Lemma~\ref{lm:sum2bound12}. 
 \end{proof}

\begin{lemma}  \label{lm:sum2bound12}
For $\ell_1,  \ell_2 , \ell_3 ,  l, i, j \in [1:K]$,  $\ell_1 <  \ell_2 \leq \ell_3 $, $i \neq  j$, then the following bound is true
\begin{align}
          &  \Imu( w_{i}; y^{\bln}_{\ell_2} |  \tilde{y}^{\bln}_{\ell_2,\ell_1} ,  \bar{W}_{[i, j]} )  +    \Imu( w_{j}; \tilde{y}^{\bln}_{l, \ell_3} | \tilde{y}^{\bln}_{\ell_2,\ell_1}, \bar{W}_{[j]} ) \non\\
    \leq & \frac{\bln}{2} \log ( 1+ P^{\alpha_{\ell_2} - \alpha_{\ell_1}} )  +  \frac{\bln}{2} \log  \bigl(1  + P^{\alpha_{\ell_3} -\alpha_{\ell_2} } \frac{ |h_{lj } |^2}{|h_{\ell_2 j }|^2}  \bigr)  . \non
 \end{align}
 When $  \ell_2 , \ell_3 ,  l,  j \in [1:K]$ and $ \ell_2 \leq \ell_3 $,  then we have
 \begin{align}
            \Imu( w_{i}; y^{\bln}_{\ell_2} |  \bar{W}_{[i, j]} )  +    \Imu( w_{j}; \tilde{y}^{\bln}_{l, \ell_3} |  \bar{W}_{[j]} ) 
    \leq  \alpha_{\ell_3} \cdot \frac{\bln}{2} \log P  +  \bln  o(\log P)  . \non
 \end{align}
\end{lemma}
\begin{proof}
See Appendix~\ref{app:sum2bound12}. The proof is based on the result of Claim~\ref{lm:bound1JJ1122} and Claim~\ref{lm:bound1JJ3344}. 
 \end{proof}

 \begin{claim}  \label{lm:bound1JJ1122}
 For $\ell_1,  \ell_2 ,  i, j \in [1:K]$,  $\ell_1 <  \ell_2  $, $i \neq  j$,  it holds true that
  \begin{align}
 \Imu( w_{i}, w_{j}; y^{\bln}_{\ell_2} |  \tilde{y}^{\bln}_{\ell_2,\ell_1} ,  \bar{W}_{[i, j]} ) \leq  \frac{\bln}{2} \log ( 1+ P^{\alpha_{\ell_2} - \alpha_{\ell_1}} ) . \non
\end{align}
 When  $ \ell_2 , i, j \in [1:K]$, $i \neq  j$, then the following inequality is true
   \begin{align}
 \Imu( w_{i}, w_{j}; y^{\bln}_{\ell_2} |   \bar{W}_{[i, j]} ) \leq  \alpha_{\ell_2} \cdot \frac{\bln}{2} \log P  +  \bln  o(\log P) . \non
\end{align}
\end{claim}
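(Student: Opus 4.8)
The plan is to bound the joint mutual information by the differential-entropy gap between the channel output $y^{\bln}_{\ell_2}$ and the auxiliary output $\tilde y^{\bln}_{\ell_2,\ell_1}$, after conditioning on all messages except $w_i,w_j$. First I would write
\[
\Imu( w_{i}, w_{j}; y^{\bln}_{\ell_2} \mid \tilde{y}^{\bln}_{\ell_2,\ell_1} ,  \bar{W}_{[i, j]} )
= \hen( y^{\bln}_{\ell_2} \mid \tilde{y}^{\bln}_{\ell_2,\ell_1}, \bar{W}_{[i,j]}) - \hen( y^{\bln}_{\ell_2} \mid \tilde{y}^{\bln}_{\ell_2,\ell_1}, \bar{W}_{[i,j]}, w_i, w_j).
\]
Once all $K$ messages are known, each $x_\ell(t)$ is a deterministic function of the messages (it is produced by the encoder), so $y^{\bln}_{\ell_2}$ differs from $\tilde y^{\bln}_{\ell_2,\ell_1}$ only through the scaled deterministic signal $(\sqrt{P^{\alpha_{\ell_2}}}-\sqrt{P^{\alpha_{\ell_1}}})\sum_i h_{\ell_2 i}x_i(t)$ plus the independent Gaussian noise difference; conditioned on everything, $y^{\bln}_{\ell_2}$ is Gaussian of unit variance per symbol given $\tilde y^{\bln}_{\ell_2,\ell_1}$, so the second term equals the differential entropy of $z^{\bln}_{\ell_2}$ up to the standard correction, i.e. it is $\geq \frac{\bln}{2}\log(2\pi e) + o(\log P)$ after accounting for the correlation with $\tilde z_{\ell_1}$. (More carefully: conditioned on all messages, $(y_{\ell_2}(t),\tilde y_{\ell_2,\ell_1}(t))$ is jointly Gaussian with known mean, and $\hen(y_{\ell_2}(t)\mid \tilde y_{\ell_2,\ell_1}(t),\text{messages}) = \frac12\log(2\pi e\cdot \sigma^2)$ for a constant conditional variance $\sigma^2\in(0,1]$, contributing only an $O(1)$ term.)

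For the first term I would drop the conditioning on $\bar W_{[i,j]}$ where it only helps, and bound $\hen(y^{\bln}_{\ell_2}\mid \tilde y^{\bln}_{\ell_2,\ell_1}) \leq \sum_{t=1}^{\bln}\hen(y_{\ell_2}(t)\mid \tilde y_{\ell_2,\ell_1}(t))$ by the chain rule and the fact conditioning reduces entropy. For a fixed $t$, write $S(t) \defeq \sum_i h_{\ell_2 i}x_i(t)$ so that $y_{\ell_2}(t) = \sqrt{P^{\alpha_{\ell_2}}}S(t)+z_{\ell_2}(t)$ and $\tilde y_{\ell_2,\ell_1}(t) = \sqrt{P^{\alpha_{\ell_1}}}S(t)+\tilde z_{\ell_1}(t)$. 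The conditional differential entropy $\hen(y_{\ell_2}(t)\mid \tilde y_{\ell_2,\ell_1}(t))$ is maximized, over all joint distributions with the given second moments, by the Gaussian, and a direct second-moment / MMSE computation gives
\[
\hen(y_{\ell_2}(t)\mid \tilde y_{\ell_2,\ell_1}(t)) \leq \tfrac12\log\!\big(2\pi e\,(1 + P^{\alpha_{\ell_2}-\alpha_{\ell_1}}\cdot c)\big)
\]
for a constant $c$ depending only on the (bounded) channel coefficients and the power constraint $\E|x_\ell(t)|^2\le1$; since $\alpha_{\ell_1}<\alpha_{\ell_2}$ this is $\tfrac12\log(1+P^{\alpha_{\ell_2}-\alpha_{\ell_1}}) + O(1)$, and one can absorb the constant $c$ into the $o(\log P)$ slack or, to get the clean stated bound without slack, note $1+P^{\alpha_{\ell_2}-\alpha_{\ell_1}}c \le \text{const}\cdot(1+P^{\alpha_{\ell_2}-\alpha_{\ell_1}})$ and fold the additive constant away. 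Combining the two terms gives $\Imu \le \frac{\bln}{2}\log(1+P^{\alpha_{\ell_2}-\alpha_{\ell_1}}) + O(1)$; I would then either state it in the form with an $o(\log P)$ term absorbed, matching how the lemma is used downstream, or tighten the constants as above to match the exact display.

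The second inequality of the claim is the special case $\ell_1=0$, where $\tilde y^{\bln}_{\ell_2,0}\defeq\phi$ (empty) by the convention in \eqref{eq:phidef200}, so $\Imu(w_i,w_j;y^{\bln}_{\ell_2}\mid\bar W_{[i,j]}) \le \hen(y^{\bln}_{\ell_2}) \le \sum_t \hen(y_{\ell_2}(t)) \le \frac{\bln}{2}\log(2\pi e(1+P^{\alpha_{\ell_2}}\cdot c)) = \alpha_{\ell_2}\cdot\frac{\bln}{2}\log P + \bln\, o(\log P)$, using the per-symbol power constraint to bound $\E|y_{\ell_2}(t)|^2$. The main obstacle — the only genuinely delicate point — is the handling of the negative (conditioned-on-all-messages) entropy term: one must argue that conditioned on $\bar W_{[i,j]},w_i,w_j$ the pair $(y_{\ell_2}(t),\tilde y_{\ell_2,\ell_1}(t))$ has a conditional variance of $y_{\ell_2}(t)$ given $\tilde y_{\ell_2,\ell_1}(t)$ that is bounded below by a positive constant (so this term is $\ge -O(1)$, not $-\infty$), which follows because the two noises $z_{\ell_2}$ and $\tilde z_{\ell_1}$ are independent and $S(t)$ is common, making the conditional variance exactly $1 - (\text{correlation})^2 \ge$ a positive constant; everything else is a routine Gaussian-maximizes-entropy and chain-rule argument.
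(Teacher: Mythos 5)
Your proposal is correct and follows essentially the same route as the paper: write the mutual information as a difference of differential entropies, observe that conditioning on all $K$ messages makes the inputs deterministic so the negative term is exactly $\hen(z^{\bln}_{\ell_2})=\frac{\bln}{2}\log(2\pi e)$ (the noises $z_{\ell_2}$ and $\tilde z_{\ell_1}$ are independent by construction, so the "correlation correction" you worry about is vacuous), and bound the positive term by a Gaussian maximum-entropy argument after dropping conditioning. The only difference is cosmetic: instead of your generic MMSE/second-moment bound with a constant $c$ to be absorbed, the paper forms the linear combination $y_{\ell_2}(t)-\sqrt{P^{\alpha_{\ell_2}-\alpha_{\ell_1}}}\,\tilde y_{\ell_2,\ell_1}(t)=z_{\ell_2}(t)-\sqrt{P^{\alpha_{\ell_2}-\alpha_{\ell_1}}}\,\tilde z_{\ell_1}(t)$, which cancels the signal exactly and yields the stated bound $\frac{\bln}{2}\log(1+P^{\alpha_{\ell_2}-\alpha_{\ell_1}})$ with no residual constant.
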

\begin{proof}
See Appendix~\ref{app:bound1JJ1122}. 
\end{proof}

 \begin{claim}  \label{lm:bound1JJ3344}
For $\ell_1,  \ell_2 , \ell_3 ,  l,  j \in [1:K]$,  $\ell_1 <  \ell_2 \leq \ell_3 $,  it is true that
  \begin{align}
\Imu( w_{j}; \tilde{y}^{\bln}_{l, \ell_3} | y^{\bln}_{\ell_2}, \tilde{y}^{\bln}_{\ell_2,\ell_1}, \bar{W}_{[j]} ) \leq \frac{\bln}{2} \log  \bigl(1  + P^{\alpha_{\ell_3} -\alpha_{\ell_2} } \frac{ |h_{lj } |^2}{|h_{\ell_2 j }|^2}  \bigr) . \non
\end{align}
When $  \ell_2 , \ell_3 ,  l,  j \in [1:K]$,  $ \ell_2 \leq \ell_3 $, and $ \tilde{y}^{\bln}_{\ell_2,\ell_1} = \phi$,  then the above inequality is also true.
\end{claim}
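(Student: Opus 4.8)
The plan is a genie-aided argument: after conditioning on $\bar{W}_{[j]}$, every transmitted codeword except $x_j^{\bln}$ is a deterministic function of the conditioning, so the corresponding interference can be subtracted off by an invertible map that changes no entropy or mutual information. Under this reduction $y_{\ell_2}(t)$, $\tilde{y}_{\ell_2,\ell_1}(t)$ and $\tilde{y}_{l,\ell_3}(t)$ become, respectively, $\sqrt{P^{\alpha_{\ell_2}}}h_{\ell_2 j}x_j(t)+z_{\ell_2}(t)$, $\sqrt{P^{\alpha_{\ell_1}}}h_{\ell_2 j}x_j(t)+\tilde{z}_{\ell_1}(t)$ and $\sqrt{P^{\alpha_{\ell_3}}}h_{lj}x_j(t)+\tilde{z}_{\ell_3}(t)$. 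First I would split $\Imu( w_j; \tilde{y}^{\bln}_{l,\ell_3} | y^{\bln}_{\ell_2}, \tilde{y}^{\bln}_{\ell_2,\ell_1}, \bar{W}_{[j]} ) = \hen( \tilde{y}^{\bln}_{l,\ell_3} | y^{\bln}_{\ell_2}, \tilde{y}^{\bln}_{\ell_2,\ell_1}, \bar{W}_{[j]} ) - \hen( \tilde{y}^{\bln}_{l,\ell_3} | y^{\bln}_{\ell_2}, \tilde{y}^{\bln}_{\ell_2,\ell_1}, \bar{W}_{[j]}, w_j )$ and bound the two pieces separately.

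For the negative piece, conditioning additionally on $w_j$ makes $x_j^{\bln}$ deterministic as well, so $\tilde{y}_{l,\ell_3}(t)$ is a known constant plus $\tilde{z}_{\ell_3}(t)$; because $\tilde{z}_{\ell_3}$ is independent of $z_{\ell_2}$ and of $\tilde{z}_{\ell_1}$, the signals $y^{\bln}_{\ell_2}$ and $\tilde{y}^{\bln}_{\ell_2,\ell_1}$ in the conditioning do not help, and the piece equals $\frac{\bln}{2}\log(2\pi e)$. For the positive piece, I would use the identity, legitimate since $|h_{\ell_2 j}|$ is bounded away from zero,
\[
\sqrt{P^{\alpha_{\ell_3}}}h_{lj}x_j(t)+\tilde{z}_{\ell_3}(t) = \sqrt{P^{\alpha_{\ell_3}-\alpha_{\ell_2}}}\tfrac{h_{lj}}{h_{\ell_2 j}}\Big(\sqrt{P^{\alpha_{\ell_2}}}h_{\ell_2 j}x_j(t)+z_{\ell_2}(t)\Big)-\sqrt{P^{\alpha_{\ell_3}-\alpha_{\ell_2}}}\tfrac{h_{lj}}{h_{\ell_2 j}}z_{\ell_2}(t)+\tilde{z}_{\ell_3}(t),
\]
which exhibits $\tilde{y}_{l,\ell_3}(t)$ as a fixed multiple of the reduced $y_{\ell_2}(t)$ plus the i.i.d.\ Gaussian noise $-\sqrt{P^{\alpha_{\ell_3}-\alpha_{\ell_2}}}\tfrac{h_{lj}}{h_{\ell_2 j}}z_{\ell_2}(t)+\tilde{z}_{\ell_3}(t)$ of per-symbol variance $1+P^{\alpha_{\ell_3}-\alpha_{\ell_2}}|h_{lj}|^2/|h_{\ell_2 j}|^2$. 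Since conditional differential entropy is unchanged under subtracting a function of the conditioning variable and only decreases under extra conditioning, the positive piece is at most the unconditional entropy of this noise vector, namely $\frac{\bln}{2}\log\!\big(2\pi e(1+P^{\alpha_{\ell_3}-\alpha_{\ell_2}}|h_{lj}|^2/|h_{\ell_2 j}|^2)\big)$. Subtracting the two bounds gives the claim; note $\ell_2\le\ell_3$ is used only to keep the exponent non-negative. The degenerate case $\tilde{y}^{\bln}_{\ell_2,\ell_1}=\phi$ is obtained by simply deleting that term throughout, so the hypothesis $\ell_1<\ell_2$ plays no role there.

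There is no real obstacle here; the argument is bookkeeping, and the only delicate point is the independence structure of the noises $z_{\ell_2}$, $\tilde{z}_{\ell_1}$, $\tilde{z}_{\ell_3}$ — one uses $\ell_1<\ell_3$ to know the two tilde-noises are distinct, and all three are mutually independent — which is exactly what makes the extra conditioning harmless in the negative piece and makes the additive term in the identity a genuine Gaussian. It is also worth recording that the bound depends on the channel only through the fixed ratio $h_{lj}/h_{\ell_2 j}$, with $h_{\ell_2 j}\neq 0$ guaranteed by the standing assumption on the coefficients, so it holds for every admissible realization and needs no almost-all-realizations qualifier.
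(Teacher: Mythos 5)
Your proposal is correct and follows essentially the same route as the paper's proof in Appendix~E: condition on $\bar{W}_{[j]}$ to strip the known interference, split the mutual information into two differential entropies, cancel $x_j$ in $\tilde{y}^{\bln}_{l,\ell_3}$ by subtracting $\sqrt{P^{\alpha_{\ell_3}-\alpha_{\ell_2}}}\tfrac{h_{lj}}{h_{\ell_2 j}}$ times the reduced $y^{\bln}_{\ell_2}$, drop the conditioning, and evaluate the Gaussian entropies. Your additional remarks (independence of $\tilde{z}_{\ell_1},\tilde{z}_{\ell_3},z_{\ell_2}$ via $\ell_1<\ell_3$, $h_{\ell_2 j}\neq 0$, and the $\tilde{y}^{\bln}_{\ell_2,\ell_1}=\phi$ degenerate case) are accurate and consistent with the paper.
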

\begin{proof}
See Appendix~\ref{app:bound1JJ3344}. 
 \end{proof}

 \section{Achievability   \label{sec:achievability}}

This section provides the  achievability  for Theorem~\ref{thm:GDoFIC}. 
The achievability is based on  multi-layer interference alignment, where different interference alignment sub-schemes are designed in different  layers associated with specific power levels. In this scheme, the method of successive decoding is applied at the receivers. 
In the proposed scheme,  pulse  amplitude modulation  (PAM) will be used.

Let us first review the  PAM modulation that will be  used in our  scheme.  If  a random variable $x$ is uniformly drawn from 
the following PAM constellation set
\begin{align}
   \Omega (\xi,  Q)  \defeq   \{ \xi \cdot a :   \    a \in  \Zc  \cap [-Q,   Q]   \}      \label{eq:cons0099}  
 \end{align}
for some $Q \in \Zc^+ $ and $\xi \in \Rc$, then the average power of $x$ is 
\begin{align}
 \E |x|^2  =  \frac{2  \xi^2 }{ 2Q +1}  \sum_{i=1}^{Q} i^2  =  \frac{  \xi^2  Q(Q+1)}{3}.  \label{eq:avpowe2233}    
\end{align} 
The parameter $\xi$ is used to regularize  the  average power of $x$.
The expression in \eqref{eq:avpowe2233} implies that 
\begin{align}
 \E |x|^2  \leq 1/\tau,  \quad \text{for}  \quad \xi \leq  \frac{  1}{  \sqrt{\tau}Q}    \label{eq:avpowe4455}  
\end{align} 
given some $\tau > 1$.  One property for the PAM constellation is that, given some PAM signals $c_1, c_2, \cdots, c_M \in \Omega (\xi,  Q)$,  the sum of them is still a PAM signal such that  
 \begin{align}
  c_1 + c_2 + \cdots + c_M  \in  \Omega (\xi,  M Q)   .   \label{eq:cons2288}  
 \end{align}

In the GDoF analysis of the proposed scheme, we will use the Khintchine-Groshev Theorem for Monomials\footnote{A function $f(\vv)$ is a monomial generated by $\vv=(v_1, v_2,  \cdots, v_N) \in \Rc^{N}$ if this function can be written as $f(\vv) = \prod_{i=1}^{N} v_i^{\beta_i}$, for $\beta_i \in \mathbb{N}, \forall i \in[1:N]$.}, which is stated in the following Theorem, as in \cite{MGMK:14}.

\begin{theorem}[Khintchine-Groshev Theorem for Monomials]  \label{thm:KG}
Let $N\leq M$, $\vv=(v_1, v_2, \cdots, v_N) \in \Rc^N$, and $g_1, g_2, \cdots, g_M$ be distinct monomials generated by $\vv$. Then, for any $\epsilon' >0$ and almost all $\vv$, there exists a positive constant $\kappa$ such that
\begin{align}
\big| \sum_{i=1}^{M} g_i q_i \big| > \frac{\kappa}{\max_i | q_i|^{M-1 +\epsilon'}}
 \end{align}
 holds for all $(q_1, q_2, \cdots, q_M) \neq \boldsymbol{0} \in \Zc^M$.
\end{theorem}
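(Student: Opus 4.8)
The plan is to reduce Theorem~\ref{thm:KG} to the convergence case of the Khintchine--Groshev theorem on non-degenerate manifolds, applied to the map cut out by the monomials. Write $|\mathbf q|:=\max_i|q_i|$ and $P_{\mathbf q}(\vv):=\sum_{i=1}^M q_i\,g_i(\vv)$ for $\mathbf q\in\Z^M\setminus\{\boldsymbol 0\}$. For $M=1$ the claim is immediate, since a single monomial satisfies $g_1(\vv)\neq0$ for a.e.\ $\vv$, so I will assume $M\ge2$. Since distinct monomials are $\R$-linearly independent, $P_{\mathbf q}$ is never the zero polynomial, so $Z:=\{\prod_{j}v_j=0\}\cup\bigcup_{\mathbf q\neq\boldsymbol 0}\{P_{\mathbf q}=0\}$ is a countable union of proper algebraic subvarieties, hence Lebesgue-null, and I discard it. For $\vv\notin Z$ we have $g_1(\vv)\neq0$, and, writing $\tilde g_i:=g_i/g_1$ (a Laurent monomial), $q_1+\sum_{i=2}^M q_i\tilde g_i(\vv)\neq0$ for every $\mathbf q\neq\boldsymbol 0$, because this expression is a nonzero Laurent polynomial: $\{1,\tilde g_2,\dots,\tilde g_M\}$ are $M$ distinct Laurent monomials, hence linearly independent.

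Next I would invoke the manifold result. Since $\tilde g_2,\dots,\tilde g_M$ together with $1$ are linearly independent, on each connected component of $\{\prod_{j}v_j\neq0\}$ the real-analytic map $\vv\mapsto(\tilde g_2(\vv),\dots,\tilde g_M(\vv))\in\R^{M-1}$ is non-degenerate in the sense of Kleinbock--Margulis. By the convergence case of the Khintchine--Groshev theorem for non-degenerate manifolds (Bernik--Kleinbock--Margulis; in one variable already implied by Sprindzuk's solution of Mahler's problem), applied with $\psi(t)=t^{-(M-1+\epsilon')}$ --- which satisfies the convergence condition $\sum_{t\ge1}t^{M-2}\psi(t)=\sum_{t\ge1}t^{-1-\epsilon'}<\infty$ --- for a.e.\ $\vv$ only finitely many $\mathbf q'=(q_2,\dots,q_M)\in\Z^{M-1}\setminus\{\boldsymbol 0\}$ satisfy $\bigl\|\sum_{i=2}^M q_i\tilde g_i(\vv)\bigr\|<|\mathbf q'|^{-(M-1+\epsilon')}$, where $\|\cdot\|$ is the distance to the nearest integer; call this finite set $E_{\vv}$. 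This is the one place where genuine Diophantine machinery enters.

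With $E_{\vv}$ in hand I would assemble the bound for every nonzero $\mathbf q=(q_1,\dots,q_M)$ by splitting on $\mathbf q':=(q_2,\dots,q_M)$, using $P_{\mathbf q}(\vv)=g_1(\vv)\bigl(q_1+\sum_{i\ge2}q_i\tilde g_i(\vv)\bigr)$ and $|q_1+\theta|\ge\|\theta\|$ for $q_1\in\Z$: if $\mathbf q'=\boldsymbol 0$ then $|P_{\mathbf q}(\vv)|=|q_1|\,|g_1(\vv)|\ge|g_1(\vv)|$; if $\mathbf q'\neq\boldsymbol 0$ and $\mathbf q'\notin E_{\vv}$ then $|P_{\mathbf q}(\vv)|\ge|g_1(\vv)|\,|\mathbf q'|^{-(M-1+\epsilon')}\ge|g_1(\vv)|\,|\mathbf q|^{-(M-1+\epsilon')}$; and if $\mathbf q'\in E_{\vv}$ then $|P_{\mathbf q}(\vv)|\ge|g_1(\vv)|\,\bigl\|\sum_{i\ge2}q_i\tilde g_i(\vv)\bigr\|\ge\eta(\vv)$, where $\eta(\vv):=|g_1(\vv)|\min_{\mathbf p'\in E_{\vv}}\bigl\|\sum_{i\ge2}p_i\tilde g_i(\vv)\bigr\|>0$, positive because $\sum_{i\ge2}p_i\tilde g_i(\vv)\notin\Z$ for each $\mathbf p'\in E_{\vv}$ by the first paragraph. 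Taking $\kappa:=\tfrac12\min\{|g_1(\vv)|,\eta(\vv)\}$ and using $|\mathbf q|\ge1$, all three cases give $|P_{\mathbf q}(\vv)|>\kappa\,|\mathbf q|^{-(M-1+\epsilon')}$ for all $\mathbf q\neq\boldsymbol 0$. This proves the theorem for a.e.\ $\vv$ and each fixed $\epsilon'>0$; intersecting over $\epsilon'=1/m$, $m\in\mathbb{N}$, then gives it for all $\epsilon'>0$ simultaneously.

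The only hard ingredient is the manifold Khintchine--Groshev theorem invoked in the second step; the heart of its proof is the estimate that the pulled-back slabs $\{\vv:\|\sum_{i\ge2}q_i\tilde g_i(\vv)\|<\delta\}$ have measure summable over $\mathbf q'$ when $\delta=|\mathbf q'|^{-(M-1+\epsilon')}$ --- equivalently, that linear combinations of the Laurent-monomial coordinate functions are ``$(C,\alpha)$-good'' in the sense of Kleinbock--Margulis, which is what powers their correspondence with flows on the space of unimodular lattices. Non-degeneracy is indispensable: the naive sublevel-set bound $\operatorname{vol}\{|P|<\delta\}\lesssim(\delta/\|P\|)^{1/\deg P}$ is far too weak --- summed over $\mathbf q$ it diverges --- and one must use the curvature of the monomial variety to match the flat-slab heuristic, where a slab of thickness $\asymp\delta/|\mathbf q|$ contributes volume $\asymp\delta/|\mathbf q|$ and $\sum_{\mathbf q}|\mathbf q|^{-(M+\epsilon')}\asymp\sum_t t^{M-1}t^{-(M+\epsilon')}<\infty$, which is what fixes the exponent $M-1+\epsilon'$. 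A secondary, routine point is checking that the exponent emerges in $\max_i|q_i|$ rather than in a height differing from it by the fixed degrees of the $g_i$; such discrepancies are absorbed into $\epsilon'$ and $\kappa$.
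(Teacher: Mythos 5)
The paper never proves Theorem~\ref{thm:KG}: it is imported verbatim from \cite{MGMK:14} and used purely as a black box in the proof of Lemma~\ref{lm:mindiskl}, so there is no in-paper argument to compare against; your sketch is, in effect, a reconstruction of the derivation underlying that citation. Your reduction is correct: after deleting the null set where some coordinate or some nonzero integer combination of the monomials vanishes, dividing by $g_1$ turns the problem into lower-bounding $|q_1+\sum_{i\ge 2}q_i\tilde g_i(\vv)|$, the functions $1,\tilde g_2,\dots,\tilde g_M$ are distinct Laurent monomials and hence linearly independent, so (analyticity plus nonplanarity) the map $\vv\mapsto(\tilde g_2,\dots,\tilde g_M)$ is non-degenerate on each component of $\{\prod_j v_j\neq 0\}$, and the dual-form convergence case of the Khintchine--Groshev theorem for non-degenerate manifolds with $\psi(t)=t^{-(M-1+\epsilon')}$ (indeed, since $\psi$ is a pure power, the Kleinbock--Margulis extremality theorem already suffices) yields only finitely many exceptional $\mathbf{q}'$; the elementary assembly --- $|q_1+\theta|\ge\|\theta\|$, the monotonicity $|\mathbf{q}'|\le\max_i|q_i|$, and absorbing the $\mathbf{q}'=\boldsymbol{0}$ case and the finite exceptional set into $\kappa(\vv,\epsilon')$ --- checks out, as does the convergence bookkeeping $\sum_t t^{M-2}\psi(t)<\infty$. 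What your route buys is a self-contained reduction of the monomial statement to standard Diophantine machinery, with all genuinely hard content isolated in the cited Bernik--Kleinbock--Margulis/Kleinbock--Margulis results (plus the standard fact, which you assert and should cite, that analytic nonplanar maps are non-degenerate); what the paper's treatment buys is brevity, since only the statement of the theorem is needed for its GDoF analysis.
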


Let us describe the proposed scheme with multi-layer interference alignment and successive decoding, given in the following sub-sections.

\subsection{Multi-layer interference alignment}   \label{sec:MLIA}

The proposed scheme consists of $K$ sub-schemes, with each sub-scheme designed in a specific  layer, i.e.,  at a specific power level.  
For each of the first $K-2$ layers, the design follows from the interference alignment technique \cite{CJ:08, MGMK:14}. Since interference alignment is designed across multiple layers, we call it as multi-layer interference alignment. The last two layers are  dedicated to two users and one user, respectively. Thus, the design of the last two layers is very simple.

\begin{figure}
\centering
\includegraphics[width=9cm]{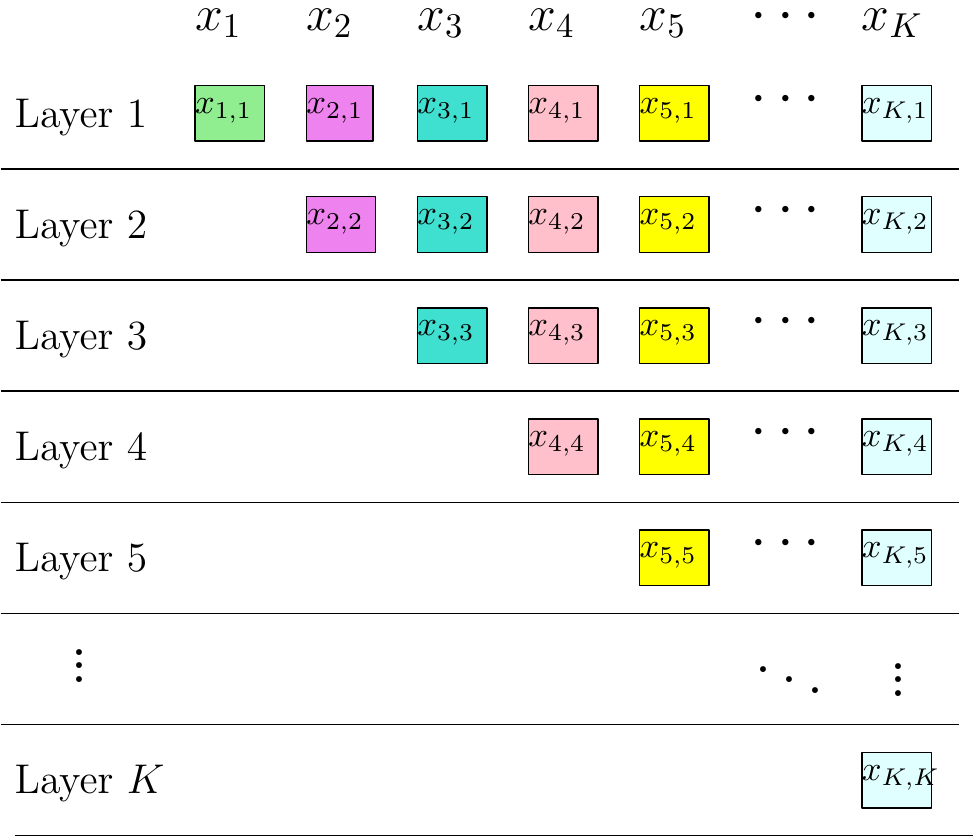}
\caption{The structure of the multi-layer interference alignment. The $\ell$th layer is dedicated to the last $(K-\ell+1)$ users,  from Users $\ell$ to User $K$, for $\ell  \in [1:K]$. For Transmitter~$k$, the transmitted signal is a superposition of the signals dedicated to the first $k$  layers, and $x_{k,\ell} $ is the signal  dedicated  to the $\ell$th layer, for $\ell \in [1:k]$, $k\in [1:K]$.}
\label{fig:MLIA}
\end{figure}

The $\ell$th  layer (the $\ell$th sub-scheme) is  dedicated specifically to the last $\Ke_{\ell}$ users,  from Users $\ell$ to User $K$, where
 \begin{align}
  \Ke_{\ell} \defeq K-\ell+1, \quad  \ell  \in [1:K] . \label{eq:Ke0011}  
\end{align}
For Transmitter~$k$, the transmitted signal is a superposition of the signals dedicated to the first $k$  layers, designed as
\begin{align}
x_{k} &=   \sum_{\ell=1}^{k}  \sqrt{P^{  - \alpha_{\ell-1}}}   x_{k,\ell}  \quad  \text{for} \quad x_{k,\ell} =     \vv_{k,\ell}^\T   \bv_{k,\ell}   \label{eq:xtran9900}  
\end{align}
for $ k\in [1:K]$,  where $ \alpha_{0} \defeq 0$ and $x_{k,\ell} $ is the signal of Transmitter~$k$ dedicated  to the $\ell$th layer.
The vector 
\begin{align}
\vv_{k,\ell} \defeq  [ v_{k,\ell, 1}, v_{k,\ell, 2},  \cdots, v_{k,\ell, \Nd_{\ell}}  ]^\T \in \Rc^{\Nd_{\ell} \times 1}  \label{eq:vv5566}  
\end{align}
 will be specified  later on, where  $\Nd_{\ell}$ is designed as 
 \begin{subnumcases}  
{   \Nd_{\ell} \defeq } 
     \me^{\Ke_{\ell}(\Ke_{\ell}-1) }     &    if    \  $ \ell  \in [1:K-2] $      \label{eq:Nd111} \\
1  &  if  \ $\ell \in [K-1: K]$   \label{eq:Nd222}
\end{subnumcases}
for some $\me \in \Zc^+$.
The vector 
\begin{align}
\bv_{k,\ell} \defeq  [ b_{k,\ell, 1}, b_{k,\ell, 2},  \cdots, b_{k,\ell, \Nd_{\ell}}  ]^\T   \label{eq:bv5566}  
\end{align}
 is an information vector for the $\ell$th layer, where  the elements $\{b_{k,\ell, i}\}_{i=1}^{\Nd_{\ell}}$ are  \emph{independent} random variables   \emph{uniformly} drawn from the following  PAM constellation set\footnote{Without loss of generality we will assume that $P^{ \frac{ \lambda_{\ell} }{2}}$ is an integer, for $\ell\in [1:K]$. 
When $P^{ \frac{ \lambda_{\ell} }{2}}$ isn't an integer,  we can slightly modify the parameter $\epsilon$  in \eqref{eq:lambda111} and \eqref{eq:lambda222} such that $P^{ \frac{ \lambda_{\ell} }{2}}$ is an integer, for the regime with large $P$.} 
 \begin{align}
   b_{k,\ell, i}      &  \in    \Omega ( \xi =  \gamma  \cdot \frac{ 1}{Q_{\ell}} ,   \   Q =  Q_{\ell} ),  \quad   i \in [1:\Nd_{\ell} ],  \  k \in [\ell : K], \ \ell \in [1 : K]   \label{eq:cons0295}  
 \end{align}
where   $\gamma $  is a  positive constant, and $Q_{\ell}$ is defined as
 \begin{align}
  Q_{\ell} \defeq P^{ \frac{ \lambda_{\ell} }{2}}   ,  \quad \ell \in [1 : K] .  \label{eq:lambda000}
 \end{align}
The parameter $\lambda_{\ell}$ is designed as  
 \begin{subnumcases}  
{   \lambda_{\ell} \defeq } 
     \frac{\alpha_{\ell} - \alpha_{\ell-1}}{\Md_{\ell}} - \epsilon     &    if    \  $ \ell  \in [1:K-2] $      \label{eq:lambda111} \\
    \frac{\alpha_{\ell} - \alpha_{\ell-1}}{K-\ell +1}  - \epsilon &  if  \ $\ell \in [K-1: K]$  \label{eq:lambda222} 
\end{subnumcases}
for 
 \begin{align}
  \Md_{\ell}    &  \defeq 2 \me^{\Ke_{\ell}(\Ke_{\ell}-1) } + (\Ke_{\ell}-1) \me^{\Ke_{\ell}(\Ke_{\ell}-1) -1 } -1  \label{eq:defmd000}  
 \end{align}
and for some  small enough  $\epsilon > 0$. 
As we will see later on,  $ \lambda_{\ell}$ represents  the GDoF carried by each of the symbols $\{b_{k,\ell, i}\}_{i, k}$.
In our scheme, when $\alpha_{\ell} = \alpha_{\ell-1}$, then the $\ell$th layer can be simply removed without affecting the GDoF performance, i.e., the signal $x_{k,\ell}$ is set as  $x_{k,\ell} =0, \forall k$.  Without loss of generality, we will focus on the case with $\alpha_{\ell} > \alpha_{\ell-1}, \forall \ell $.

Let us now design the vectors of $\vv_{k,\ell}$ for each layer. 
The design of $\vv_{k,\ell}$  for the last two layers is very straightforward.   Note that the $(K-1)$th layer is dedicated to User~$K-1$ and User~$K$, while the $K$th layer is dedicated to User~$K$ only. Therefore, we set the parameters as \[v_{K-1,K-1, 1}= v_{K,K-1, 1} =v_{K,K, 1} =1.\] Recall that $\Nd_{K-1} = \Nd_{K} =1$ (see \eqref{eq:Nd222}). In the following, we will design the vectors of $\vv_{k,\ell}$ for the $\ell$th layer, for $\ell \in [1:K-2]$.  
For the $\ell$th layer dedicated to the last $\Ke_{\ell}$ users, we define a set of \emph{dimensions} as
 \begin{align}
   \mathcal{V}_{\ell, \me}      &  \defeq \Bigl\{  \prod_{j=\ell}^{K}  \prod_{\substack{i=\ell\\ i\neq j}}^{K} h_{ij}^{\beta_{ij}} :  \beta_{ij} \in [0: \me -1 ]    \Bigr\}, \quad  \ell \in [1:K-2] .  \label{eq:defvelln}  
 \end{align}
 Note that $\mathcal{V}_{\ell, \me}$ consists of  $ \Nd_{\ell}$ rationally independent real numbers\footnote{We say $p_1, p_2, \cdots, p_M$ are rationally independent if  the only $M$-tuple of integers $q_1, q_2, \cdots, q_M$ such that $\sum_{i=1}^M p_iq_i =0$ is the trivial solution in which every $q_i$ is zero.}, where $\Nd_{\ell} = \me^{\Ke_{\ell}(\Ke_{\ell}-1)}$   for $\ell \in [1:K-2] $.  
In our scheme, we let $\vv_{k,\ell}$ be the vector containing all the elements in set $\mathcal{V}_{\ell, \me}$, i.e.,
 \begin{align}
    v_{k,\ell, i} =  \mathcal{V}_{\ell, \me} (i), \quad    i \in [1:\Nd_{\ell} ],  k \in [\ell : K], \ell \in [1 : K-2] .   \label{eq:vkli000}  
 \end{align}
$\mathcal{V}_{\ell, \me}  (i)$ denotes the $i$th element of the set $\mathcal{V}_{\ell, \me}$.

Based on our design, Lemma~\ref{lm:powerans} (see below) shows that the average power of each transmitted signal is upper bounded by $\gamma^2    \eta$, where $\eta$ is a positive value independent of $P$, and  $\gamma $  is a  positive constant appeared in \eqref{eq:cons0295}. 
Thus, by setting $\gamma$ as a constant that is bounded away from zero and is no more than $\frac{1}{\sqrt{\eta}}$, i.e., $\gamma \in (0, \frac{1}{\sqrt{\eta}}]$, then the  average power  constraint is satisfied, that is, $\E|x_{k}|^2 \leq 1$ for $k\in [1:K]$.

\begin{lemma}  \label{lm:powerans}
Based on the signal design in \eqref{eq:xtran9900}-\eqref{eq:defvelln},  the average power of the transmitted signal at  Transmitter~$k$, $k\in [1:K]$,  satisfies
\begin{align}
\E|x_{k}|^2 & \leq  \gamma^2    \eta   \label{eq:power6265}  
\end{align}
where  $\eta$ is a positive value independent of $P$. 
\end{lemma}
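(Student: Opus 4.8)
The plan is to bound $\E|x_{k}|^2$ directly from the signal construction in \eqref{eq:xtran9900}, using three ingredients: the closed-form PAM power formula \eqref{eq:avpowe2233}, the $P$-independent boundedness of the beamforming coordinates $v_{k,\ell,i}$, and the fact that the information symbols are independent and zero-mean, which kills every cross term. No step is really delicate; the whole lemma is essentially bookkeeping once the right constants are identified.

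First I would bound the power of a single information symbol. Since $b_{k,\ell,i}$ is uniform on $\Omega(\xi=\gamma/Q_{\ell},\,Q_{\ell})$ as in \eqref{eq:cons0295}, the formula \eqref{eq:avpowe2233} gives $\E|b_{k,\ell,i}|^2 = \gamma^2 (Q_{\ell}+1)/(3Q_{\ell}) \leq 2\gamma^2/3$, because $Q_{\ell}=P^{\lambda_{\ell}/2}\in\Zc^+$ (here $\lambda_{\ell}>0$ since we have reduced to the case $\alpha_{\ell}>\alpha_{\ell-1}$ with $\epsilon$ small, and $Q_\ell$ is an integer by the footnote convention), so $(Q_{\ell}+1)/Q_{\ell}\leq 2$. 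Next I would bound the beamforming coordinates: for $\ell\in[1:K-2]$ each $v_{k,\ell,i}$ is an element of $\mathcal{V}_{\ell,\me}$ in \eqref{eq:defvelln}, i.e., a product of at most $\Ke_{\ell}(\Ke_{\ell}-1)$ channel coefficients, each raised to an exponent in $[0:\me-1]$, while for $\ell\in\{K-1,K\}$ the coordinate equals $1$. Because the system model assumes every $|h_{ij}|$ lies between a nonzero minimum and a finite maximum, there is a constant $c_v$ — depending only on the channel realization, on $\me$, and on $K$, but not on $P$ — with $|v_{k,\ell,i}|\leq c_v$ for all $k,\ell,i$.

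The computation is then routine. Within layer $\ell$ we have $x_{k,\ell}=\sum_{i=1}^{\Nd_{\ell}} v_{k,\ell,i} b_{k,\ell,i}$ with the $\{b_{k,\ell,i}\}_{i}$ independent and mean zero, so $\E|x_{k,\ell}|^2 = \sum_{i=1}^{\Nd_{\ell}} v_{k,\ell,i}^2\,\E|b_{k,\ell,i}|^2 \leq \tfrac{2}{3}\gamma^2 c_v^2\,\Nd_{\ell}$. Across layers, the vectors $\{\bv_{k,\ell}\}_{\ell=1}^{k}$ are mutually independent and mean zero, so by \eqref{eq:xtran9900}, $\E|x_{k}|^2 = \sum_{\ell=1}^{k} P^{-\alpha_{\ell-1}}\,\E|x_{k,\ell}|^2 \leq \tfrac{2}{3}\gamma^2 c_v^2 \sum_{\ell=1}^{k} P^{-\alpha_{\ell-1}}\Nd_{\ell} \leq \tfrac{2}{3}\gamma^2 c_v^2 \sum_{\ell=1}^{K}\Nd_{\ell}$, where the last step uses $\alpha_{\ell-1}\geq 0$ and $P\geq 1$. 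Taking $\eta\defeq \tfrac{2}{3} c_v^2 \sum_{\ell=1}^{K}\Nd_{\ell}$ — which, with $\Nd_{\ell}$ given by \eqref{eq:Nd111}--\eqref{eq:Nd222}, depends only on $\me$, $K$, and the channel realization — yields \eqref{eq:power6265}.

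The main (and only) point requiring attention is confirming that every constant introduced is genuinely independent of $P$: $c_v$ is a bound on fixed monomials in the channel gains, the $\Nd_{\ell}$ depend only on $\me$ and $\Ke_{\ell}$, and hence so does $\eta$. The powers of $P$ that do appear, through the factors $\sqrt{P^{-\alpha_{\ell-1}}}$ in \eqref{eq:xtran9900}, only help, since $\alpha_{\ell-1}\geq 0$ makes each such factor at most $1$; this is why no layer-dependent growth in $P$ survives.
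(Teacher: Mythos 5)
Your proposal is correct and follows essentially the same route as the paper: expand $\E|x_k|^2$ layer by layer, use independence (and zero mean) of the PAM symbols to drop cross terms, apply the power formula \eqref{eq:avpowe2233} to each $b_{k,\ell,i}$, and absorb the factors $P^{-\alpha_{\ell-1}}\leq 1$. The only cosmetic difference is the choice of $\eta$: you bound each $|v_{k,\ell,i}|$ by a constant $c_v$ coming from the bounded-channel-coefficient assumption and take $\eta=\tfrac{2}{3}c_v^2\sum_{\ell}\Nd_{\ell}$, while the paper simply keeps $\eta$ as the (maximum over $k$ of the) sum $\sum_{\ell}\sum_i |v_{k,\ell,i}|^2$, which is $P$-independent for the same reason.
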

\begin{proof}
See Appendix~\ref{app:powerans}. 
\end{proof}

\subsection{Successive decoding}  \label{sec:SD}

The decoding is based on successive decoding. The idea of successive decoding is to decode the signals for one layer by treating the lower layers as noise,  and then remove them to decode the signals in the next layer. The signals decoded in one layer include the desired signals and the interference signals that might be in a certain form.

Let us first focus on the decoding for the first $K-2$ layers, and then discuss the decoding for the last two layers. 
For the $\ell$th layer, $\ell \in [1:K-2] $, based on the above design of multi-layer interference alignment,  at  Receiver~$k$, $k\in [\ell:K]$, the interference signals  can be aligned into a set of  \emph{dimensions} denoted by $\Ic_{k,\ell}$, for 
 \begin{align}
 \Ic_{k,\ell} =  \bigcup_{\substack{l\in [\ell:K]\\ l\neq k}} \Bigl\{ h_{kl}^{m} \cdot  \prod_{\substack{i, j \in [\ell:K]\\ i\neq j\\ (i,j)\neq (k,l) } }   h_{ij}^{\beta_{ij}} :  \beta_{ij} \in [0: \me -1 ]    \Bigr\} \bigcup \Bigl\{ \mathcal{V}_{\ell, \me }  \big\backslash \bigl\{1\bigr\}  \Bigr\}        \label{eq:defvellnm11}  
 \end{align}
which satisfies $ \Ic_{k,\ell} \subset  \mathcal{V}_{\ell, \me +1}$ and \[|\Ic_{k,\ell}| = \me^{\Ke_{\ell}(\Ke_{\ell}-1) } + (\Ke_{\ell}-1) \me^{\Ke_{\ell}(\Ke_{\ell}-1) -1 } -1 = \Md_{\ell} - \Nd_{\ell};\] 
while the desired signals lie in a set of  dimensions denoted by $\mathcal{S}_{k,\ell}$,  for
 \begin{align}
 \mathcal{S}_{k,\ell} =   h_{kk} \mathcal{V}_{\ell, \me}      &  = \Bigl\{  h_{kk} \prod_{j=\ell}^{K}  \prod_{\substack{i=\ell\\ i\neq j}}^{K} h_{ij}^{\beta_{ij}} :  \beta_{ij} \in [0: \me -1 ]    \Bigr\} \label{eq:defvelln0099}  
 \end{align}
which satisfies \[|\mathcal{S}_{k,\ell}|= \me^{\Ke_{\ell}(\Ke_{\ell}-1) } = \Nd_{\ell}.\]  Note that $h_{kk}$ is not appeared in the dimensions of $\Ic_{k,\ell}$. Also note that $h_{kk}$ is  appeared in each dimension of $\mathcal{S}_{k,\ell}$. It then implies that all the dimensions in $\Ic_{k,\ell} \cup  \mathcal{S}_{k,\ell}$ are rationally independent.

For the successive decoding at  the $\ell$th layer, $\ell \in [1:K-2]$, at Receiver~$k$, $k\in [\ell:K]$,  the goal  is to decode the desired information vector $\bv_{k,\ell}$ (see \eqref{eq:bv5566}), as well as the interference at that layer, given that the decoding of the previous layers is complete. 
For  the $\ell$th layer, $\ell \in [1:K-2]$, assuming that the decoding  of the previous layers is complete, then  Receiver~$k$, $k\in [\ell:K]$ has the following observation (removing the time index)
\begin{align}
y_{k,\ell} &\defeq y_{k} - \underbrace{ \sum_{l=1}^{\ell-1}  \sum_{j=l}^{K}   \sqrt{P^{  \alpha_{k}- \alpha_{l-1}}}  h_{kj} \vv_{j,l}^\T   \bv_{j,l}}_{\text{side information  from  previous layers}}    \label{eq:ykl22331} 
\end{align}
where the term of $\sum_{l=1}^{\ell-1}  \sum_{j=l}^{K}  \sqrt{P^{  \alpha_{k}- \alpha_{l-1}}}  h_{kj} \vv_{j,l}^\T   \bv_{j,l}$  is constructed from  the side information about desired signals and interference obtained from the decoding of the previous layers, with $ \sum_{l=1}^{0} s_i  \defeq 0 $ for any  $s_i \in  \Rc$.  When $\ell=1$,  this term is zero.
Let us expand $y_{k,\ell}$ from \eqref{eq:ykl22331} to the following expression:
\begin{align}
y_{k,\ell} &=  \sum_{l=1}^{K}  \sum_{j=l}^{K}  \sqrt{P^{  \alpha_{k}- \alpha_{l-1}}}  h_{kj} \vv_{j,l}^\T   \bv_{j,l}  +z_{k}  -  \sum_{l=1}^{\ell-1}  \sum_{j=l}^{K} \sqrt{P^{  \alpha_{k}- \alpha_{l-1}}}  h_{kj} \vv_{j,l}^\T   \bv_{j,l}   \non\\   
&=  \underbrace{\sqrt{P^{  \alpha_{k} - \alpha_{\ell-1}}} h_{kk}  \vv_{k,\ell}^\T   \bv_{k,\ell}}_{\defeq S_{k,\ell}, \  \text{desired signal}}  +  \underbrace{\sum_{\substack{j=\ell \\ j\neq k}}^{K}   \sqrt{P^{  \alpha_{k} - \alpha_{\ell-1}}} h_{kj}  \vv_{j,\ell}^\T   \bv_{j,\ell}}_{\defeq I_{k,\ell}, \ \text{interference}}  +    \underbrace{ \sum_{l=\ell+1}^{K}  \sum_{j=l}^{K}  \sqrt{P^{  \alpha_{k}- \alpha_{l-1}}}   h_{kj} \vv_{j,l}^\T   \bv_{j,l} }_{\defeq T_{k,\ell}, \  \text{treated as noise}} +z_{k}  \label{eq:channelGK0123}  
\end{align}
where   
\begin{align}
S_{k,\ell} \defeq \sqrt{P^{  \alpha_{k} - \alpha_{\ell-1}}} h_{kk}  \vv_{k,\ell}^\T   \bv_{k,\ell}, \quad  I_{k,\ell} \defeq \sum_{\substack{j=\ell \\ j\neq k}}^{K}   \sqrt{P^{  \alpha_{k} - \alpha_{\ell-1}}} h_{kj}  \vv_{j,\ell}^\T   \bv_{j,\ell},  \quad T_{k,\ell} \defeq  \sum_{l=\ell+1}^{K}  \sum_{j=l}^{K}  \sqrt{P^{  \alpha_{k}- \alpha_{l-1}}}   h_{kj} \vv_{j,l}^\T   \bv_{j,l} \label{eq:SITdef000}  
\end{align}
for   $k\in [\ell:K]$,  $\ell \in [1:K-2]$. 
From the above expression, $y_{k,\ell}$ can be expanded into four terms: $S_{k,\ell}$, $I_{k,\ell}$,  $T_{k,\ell}$ and noise.  For Receiver~$k$, $S_{k,\ell}$  corresponds to the term containing  desired information at Layer~$\ell$;  $I_{k,\ell}$ represents the  interference  at  Layer~$\ell$;  and $T_{k,\ell}$ denotes the term containing signals dedicated to the next layers, which can be treated as noise. 
The term $S_{k,\ell}$ can be rewritten in the following form 
\begin{align}
S_{k,\ell}  =  \gamma \sqrt{P^{ \alpha_{k} - \alpha_{\ell-1} - \lambda_{\ell}}}  \sum_{i=1}^{|\mathcal{S}_{k,\ell}|}   \mathcal{S}_{k,\ell} (i) q_{k,\ell,i}  \quad \text{for}  \quad q_{k,\ell,1},  \cdots, q_{k,\ell, |\mathcal{S}_{k,\ell}|} \in [-Q_{\ell}: Q_{\ell} ]  \label{eq:channelGK2345}  
\end{align}
where $Q_{\ell}$ and $\lambda_{\ell}$ are defined in \eqref{eq:lambda000}, \eqref{eq:lambda111} and \eqref{eq:lambda222}. From \eqref{eq:cons0295} it holds true that $q_{k,\ell,i} \defeq b_{k,\ell, i} \cdot  \frac{P^{ \frac{ \lambda_{\ell} }{2}}}{\gamma}  \in [- Q_{\ell}, Q_{\ell}]$, for $i \in [1:\Nd_{\ell} ]$,  $k \in [\ell : K]$, $\ell \in [1 : K-2]$. 
Similarly, the interference term $I_{k,\ell}$ can be expressed in the form of  
\begin{align}
I_{k,\ell}  =  \gamma \sqrt{P^{ \alpha_{k} - \alpha_{\ell-1} - \lambda_{\ell}}}  \sum_{i=1}^{| \Ic_{k,\ell}|}  \Ic_{k,\ell} (i) q'_{k,\ell,i}  \quad \text{for}  \quad q'_{k,\ell,1},  \cdots, q'_{k,\ell, |\Ic_{k,\ell}|} \in [-\Ke_{\ell}Q_{\ell}: \Ke_{\ell}Q_{\ell} ]  \label{eq:channelGK3456}  
\end{align}
Note that, if the PAM signals lie at the same dimension,  the sum of PAM signals is still a PAM signal. 
In the above expression, $q'_{k,\ell,i}$ represents the sum of the normalized PAM signals (normalized by $\gamma  P^{ - \frac{ \lambda_{\ell} }{2}} $) lying at the dimension  $\Ic_{k,\ell} (i)$, and thus $q'_{k,\ell,i} \in [-\Ke_{\ell}Q_{\ell}: \Ke_{\ell}Q_{\ell} ] $  for $i \in [1: | \Ic_{k,\ell}| ]$,  $k \in [\ell : K]$, $\ell \in [1 : K-2]$. 
In this layer, the goal is to decode $q_{k,\ell,1},  \cdots, q_{k,\ell, |\mathcal{S}_{k,\ell}|}, q'_{k,\ell,1},  \cdots, q'_{k,\ell, |\Ic_{k,\ell}|}$ from $y_{k,\ell}$ by treating $T_{k,\ell}$ as noise.

Let us now focus on the minimum distance of the  constellation for the signal  $S_{k,\ell} +I_{k,\ell}$,  which is defined by
  \begin{align}
 d_{\min}(k, \ell)   \defeq        \!\!\!\!   \!\!\!\! \!\!\!\!  \min_{\substack{ q_{k,\ell,1},  \cdots, q_{k,\ell, |\mathcal{S}_{k,\ell}|}, q'_{k,\ell,1},  \cdots, q'_{k,\ell, |\Ic_{k,\ell}|} :  \\  q_{k,\ell,1},  \cdots, q_{k,\ell, |\mathcal{S}_{k,\ell}|} \in [-Q_{\ell}: Q_{\ell} ]  \\ q'_{k,\ell,1},  \cdots, q'_{k,\ell, |\Ic_{k,\ell}|} \in [-\Ke_{\ell}Q_{\ell}: \Ke_{\ell}Q_{\ell} ]   \\  (q_{k,\ell,1},  \cdots, q_{k,\ell, |\mathcal{S}_{k,\ell}|}, q'_{k,\ell,1},  \cdots, q'_{k,\ell, |\Ic_{k,\ell}|}) \neq  (0, 0 , \cdots, 0)  }}         \!\!\!\! \!\!\!\!  \gamma \sqrt{P^{ \alpha_{k} - \alpha_{\ell-1} - \lambda_{\ell}}} \ \Big|  \sum_{i=1}^{|\mathcal{S}_{k,\ell}|}   \mathcal{S}_{k,\ell} (i) q_{k,\ell,i}  +   \sum_{i=1}^{| \Ic_{k,\ell}|}  \Ic_{k,\ell} (i) q'_{k,\ell,i} \Big|   \label{eq:mindis7788}
 \end{align}
 for $k \in [\ell : K]$, $\ell \in [1 : K-2]$. 
 For the minimum distance $ d_{\min}(k, \ell) $ defined in  \eqref{eq:mindis7788}, Lemma~\ref{lm:mindiskl} (shown at the end of this section) provides  a result on its lower bound.
 On the other hand, for the term $T_{k,\ell}$ appeared in \eqref{eq:channelGK0123}, Lemma~\ref{lm:boundTkl} (shown at the end of this section) provides  a result on its upper bound.
 Let us go back to the expression of $y_{k,\ell}$ (see \eqref{eq:channelGK0123}), that is,
 \begin{align}
y_{k,\ell} &=  S_{k,\ell} +I_{k,\ell} + T_{k,\ell} +  z_{k}  \label{eq:ykl9922}  
\end{align}
  for $k\in [\ell:K]$, $\ell \in [1:K-2]$. From Lemma~\ref{lm:boundTkl},  $T_{k,\ell}$  is upper bounded by 
  $ T_{k,\ell} \leq    P^{  \frac{\alpha_{k}- \alpha_{\ell}}{2}} \cdot \delta_{k,\ell}$, where $\delta_{k,\ell}$ is a positive value independent of $P$. 
  From Lemma~\ref{lm:mindiskl},  the  minimum distance  of the  constellation for the signal  $S_{k,\ell} +I_{k,\ell}$  is lower bounded by 
  $d_{\min}(k, \ell)   \geq   \kappa'  P^{   \frac{ \alpha_{k} - \alpha_{\ell}   + \epsilon_{\ell} }{2} }$, for any small enough $\epsilon_{\ell}>0$, where $\kappa'$ is a positive constant. 
 Therefore,  one can easily show that $q_{k,\ell,1},  \cdots, q_{k,\ell, |\mathcal{S}_{k,\ell}|}, q'_{k,\ell,1},  \cdots, q'_{k,\ell, |\Ic_{k,\ell}|}$ can be decoded from $y_{k,\ell}$ by treating $T_{k,\ell}$ as noise, with vanishing error probability as $P$ goes large. 
 At this point, at Layer~$\ell$, the information vector $\bv_{k,\ell}$ is decoded at Receiver~$k$, and the interference $I_{k,\ell}$ can be reconstructed by Receiver~$k$ with the side information of $q'_{k,\ell,1},  \cdots, q'_{k,\ell, |\Ic_{k,\ell}|}$, for  $k\in [\ell:K]$, $\ell \in [1:K-2]$.

Once the decoding at  Layer~$\ell$ is complete,  Receiver~$k$ removes the reconstructed $S_{k,\ell}$ and $I_{k,\ell}$ from $y_{k,\ell}$, and then moves onto the decoding at the next layer, i.e.,  Layer~$(\ell +1)$, for $k\in [\ell+1:K]$, $\ell +1  \in [2:K-2]$. 
 
The decoding at the last two layers is very straightforward.   Note that the $(K-1)$th layer is dedicated to User~$K-1$ and User~$K$, while the $K$th layer is dedicated to User~$K$ only. Recall that, $\Nd_{K-1} = \Nd_{K} =1$, $v_{K-1,K-1, 1}= v_{K,K-1, 1} =v_{K,K, 1} =1$, and
 \begin{align}
 x_{K-1,K-1} &=   b_{K-1,K-1, 1} \in \Omega ( \xi =  \gamma  \cdot \frac{ 1}{Q_{K-1}} ,   \   Q =  Q_{K-1} ) \non\\
 x_{K,K-1} &=  b_{K,K-1, 1}  \in \Omega ( \xi =  \gamma  \cdot \frac{ 1}{Q_{K-1}} ,   \   Q =  Q_{K-1} )   \non\\
 x_{K,K} &=   b_{K,K,1}  \in \Omega ( \xi =  \gamma  \cdot \frac{ 1}{Q_{K}} ,   \   Q =  Q_{K} )   \non
 \end{align}
for $Q_{K-1} \defeq P^{ \frac{ (\alpha_{K-1} - \alpha_{K-2})/2  - \epsilon}{2}}$ and $Q_{K} \defeq P^{ \frac{ \alpha_{K} - \alpha_{K-1}  - \epsilon}{2}}$. 
Once the decoding of the first $K-2$ layers is complete, both Receiver~$(K-1)$ and Receiver~$K$ remove all the intended signals and interference signals dedicated to the first $K-2$ layers from the corresponding received observations.  
After that, for the $(K-1)$th layer, the decoding problem  is simply equivalent to decoding two symbols at  a $2\times2$ interference channel with sum GDoF $\alpha_{K-1} - \alpha_{K-2}$, where the SNR of this channel is $P^{ \alpha_{K-1} - \alpha_{K-2}}$. 
One can easily show that this two symbols can be decoded at both Receiver~$(K-1)$ and Receiver~$K$ with vanishing error probability as $P$ goes large. After that, Receiver~$K$ removes the decoded symbols and then decodes its only one symbol at the last layer. At this point, the whole decoding is complete.  

After successive decoding for all the layers,   Receiver~$k$,  $k \in [1 : K]$, is able to decode all the following PAM symbols
 \begin{align}
   b_{k,\ell, i}      &  \in    \Omega ( \xi =  \gamma  \cdot \frac{ 1}{P^{ \frac{ \lambda_{\ell} }{2}}} ,   \   Q =  P^{ \frac{ \lambda_{\ell} }{2}}),  \quad \forall  i \in [1:\Nd_{\ell} ],  \ \ell \in [1 : k]   \label{eq:rate8877}  
 \end{align}
where   $\lambda_{\ell} $ is defined in \eqref{eq:lambda111} and \eqref{eq:lambda222}.  Since  $b_{k,\ell, i}$ is independently and uniformly  drawn from the corresponding   PAM constellation $ \Omega ( \xi =  \gamma  \cdot \frac{ 1}{P^{ \frac{ \lambda_{\ell} }{2}}} ,   \   Q =  P^{ \frac{ \lambda_{\ell} }{2}})$, then $b_{k,\ell, i}$ carries the following amount of  bits of information
 \begin{align}
  \Hen(b_{k,\ell, i}) &  =  \log  (1+ 2 P^{ \frac{ \lambda_{\ell} }{2}})  = \frac{ \lambda_{\ell} }{2}  \log  P  + o(\log P) \label{eq:rate2255}  
 \end{align}
 for  $i \in [1:\Nd_{\ell} ]$, $ \ell \in [1 : k]$, $k \in [1 : K]$.
By summing up all the amount of information carried by all the symbols from  all the users, and considering that those symbols are sent over a single channel use, it implies that for almost all  realizations of channel coefficients the following  sum rate is achievable when $P$ is large
 \begin{align}
R_{sum} = &  \sum_{k=1}^{K}  R_k   \non\\
= &  \sum_{k=1}^{K}  \sum_{\ell=1}^{k}  \sum_{i=1}^{\Nd_{\ell}}  \Hen(b_{k,\ell, i})    \non\\
 =  &  \sum_{k=1}^{K}  \sum_{\ell=1}^{k}  \sum_{i=1}^{\Nd_{\ell}} \bigl( \frac{ \lambda_{\ell} }{2}  \log  P   + o(\log P) \bigr)  \label{eq:rate225566}  \\
    =  &  \sum_{\ell=1}^{K}  \sum_{k=\ell}^{K}  \frac{ \Nd_{\ell} \lambda_{\ell} }{2}  \log  P + o(\log P)  \non\\
        =  &  \sum_{\ell=1}^{K}   \frac{ \Nd_{\ell}    \lambda_{\ell} ( K-\ell +1) }{2}  \log  P  + o(\log P) \non\\
      =  & \sum_{\ell=1}^{K-2}   \frac{ \Nd_{\ell} ( K-\ell +1)   ( \frac{\alpha_{\ell} - \alpha_{\ell-1}}{\Md_{\ell}} - \epsilon  ) }{2}  \log  P   \non\\& +         \frac{ 2(\frac{\alpha_{K-1} - \alpha_{K-2}}{2}  - \epsilon) }{2}  \log  P  + \frac{ \alpha_{K} - \alpha_{K-1} - \epsilon  }{2}  \log  P   + o(\log P) \label{eq:rate3435}
 \end{align}
 where \eqref{eq:rate225566} follows  from \eqref{eq:rate2255}. 
Recall that $\lambda_{\ell} = \frac{\alpha_{\ell} - \alpha_{\ell-1}}{\Md_{\ell}} - \epsilon$   if  $ \ell  \in [1:K-2]$, and $\lambda_{\ell} = \frac{\alpha_{\ell} - \alpha_{\ell-1}}{K-\ell +1}  - \epsilon$   if  $ \ell \in [K-1: K]$. 
For the sum rate expressed in \eqref{eq:rate3435}, by dividing each side with  $\frac{1}{2}  \log  P$ and letting $P \to \infty$ and  $\epsilon \to 0$,  it reveals that for almost all  realizations of channel coefficients the following sum GDoF is achievable
 \begin{align}
\dsum^{achievable}(\al) = &   \sum_{\ell=1}^{K-2}    ( K-\ell +1)  (\alpha_{\ell} - \alpha_{\ell-1})  \frac{ \Nd_{\ell}}{\Md_{\ell}}    +   \frac{   2(\alpha_{K-1} - \alpha_{K-2})}{2}    +  \alpha_{K} - \alpha_{K-1}.   \label{eq:rate2743} 
 \end{align}
Note that when $ \ell  \in [1:K-2]$, we have $\frac{ \Nd_{\ell}}{\Md_{\ell}} = \frac{\me^{\Ke_{\ell}(\Ke_{\ell}-1) }}{2 \me^{\Ke_{\ell}(\Ke_{\ell}-1) } + (\Ke_{\ell}-1) \me^{\Ke_{\ell}(\Ke_{\ell}-1) -1 } -1}$, which converges to $\frac{1}{2}$ for large enough $\me$. 
Therefore, for large enough $\me$, the achievable sum GDoF expressed in \eqref{eq:rate2743} can be simplified as 
 \begin{align}
\dsum^{achievable}(\al) = &   \sum_{\ell=1}^{K-2}   \frac{ ( K-\ell +1)  (\alpha_{\ell} - \alpha_{\ell-1})}{ 2}    +      \frac{2(\alpha_{K-1} - \alpha_{K-2})}{2}    +  \alpha_{K} - \alpha_{K-1}  \non\\
= &   \frac{  \sum_{k=1}^K \alpha_k  +  \alpha_K -\alpha_{K-1} }{2}     \label{eq:rateGDoFfinal} 
 \end{align}
which holds for almost all  realizations of channel coefficients.  At this point, we complete the achievability proof for Theorem~\ref{thm:GDoFIC}.
 The two lemmas used in the GDoF analysis are provided below. 
 
 \begin{lemma}  \label{lm:mindiskl}
Consider the  minimum distance  $d_{\min}(k, \ell)$ defined in \eqref{eq:mindis7788}.   For almost all  realizations of channel coefficients, and for any small enough $\epsilon_{\ell}>0$, there exists a positive constant $\kappa'$ such that
  \begin{align}
 d_{\min}(k, \ell)   \geq   \kappa'  P^{   \frac{ \alpha_{k} - \alpha_{\ell}   + \epsilon_{\ell} }{2} }   \non
 \end{align}
 for $k \in [\ell : K]$, $\ell \in [1 : K-2]$. 
\end{lemma}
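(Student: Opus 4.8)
The plan is to reduce the statement to a direct application of the Khintchine--Groshev Theorem for Monomials (Theorem~\ref{thm:KG}). First I would collect the structural facts already in hand: by construction the real numbers listed in $\mathcal{S}_{k,\ell}\cup\Ic_{k,\ell}$ are pairwise distinct monomials in the channel coefficients $\{h_{ij}: i,j\in[\ell:K],\,i\neq j\}\cup\{h_{kk}\}$, and there are exactly $|\mathcal{S}_{k,\ell}|+|\Ic_{k,\ell}| = \Nd_\ell + (\Md_\ell-\Nd_\ell) = \Md_\ell$ of them; distinctness is exactly the rational-independence property noted right after \eqref{eq:defvelln0099} (here the fact that $h_{kk}$ divides every element of $\mathcal{S}_{k,\ell}$ but no element of $\Ic_{k,\ell}$ is what rules out overlap between the two sets). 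Writing $N$ for the number of channel coefficients that actually appear, one checks from \eqref{eq:defmd000} that $N\le\Md_\ell$ (with much room once $\me\ge2$, since $\Md_\ell$ grows like $2\me^{\Ke_\ell(\Ke_\ell-1)}$ while $N=\Ke_\ell(\Ke_\ell-1)+1$ is only linear).

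Next I would invoke Theorem~\ref{thm:KG} with $\vv$ the vector of those $N$ coefficients, $M=\Md_\ell$, and $g_1,\dots,g_M$ the elements of $\mathcal{S}_{k,\ell}\cup\Ic_{k,\ell}$. Relabelling the $\Md_\ell$ integers $q_{k,\ell,1},\dots,q_{k,\ell,|\mathcal{S}_{k,\ell}|},q'_{k,\ell,1},\dots,q'_{k,\ell,|\Ic_{k,\ell}|}$ as $q_1,\dots,q_M$ — these are integers, not all zero, with $\max_i|q_i|\le\Ke_\ell Q_\ell$ — the theorem gives, for any $\epsilon'>0$ and almost all channel realizations, a constant $\kappa>0$ with
\[
\Big|\sum_{i=1}^{|\mathcal{S}_{k,\ell}|}\mathcal{S}_{k,\ell}(i)\,q_{k,\ell,i}+\sum_{i=1}^{|\Ic_{k,\ell}|}\Ic_{k,\ell}(i)\,q'_{k,\ell,i}\Big| > \frac{\kappa}{(\Ke_\ell Q_\ell)^{\Md_\ell-1+\epsilon'}}.
\]
Inserting this lower bound into \eqref{eq:mindis7788} and using $Q_\ell=P^{\lambda_\ell/2}$ yields
\[
d_{\min}(k,\ell) \ \ge\ \gamma\kappa\,\Ke_\ell^{-(\Md_\ell-1+\epsilon')}\,P^{\frac12(\alpha_k-\alpha_{\ell-1}-\lambda_\ell\Md_\ell-\lambda_\ell\epsilon')}.
\]

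Finally I would substitute $\lambda_\ell\Md_\ell=\alpha_\ell-\alpha_{\ell-1}-\epsilon\Md_\ell$ from \eqref{eq:lambda111}, which turns the exponent into $\tfrac12(\alpha_k-\alpha_\ell+\epsilon\Md_\ell-\lambda_\ell\epsilon')$. For any sufficiently small $\epsilon_\ell>0$ — namely $\epsilon_\ell<\epsilon\Md_\ell$ — and with $\lambda_\ell>0$ as holds for the small $\epsilon$ fixed in the design, choosing $\epsilon'$ small enough that $\lambda_\ell\epsilon'\le\epsilon\Md_\ell-\epsilon_\ell$ gives $d_{\min}(k,\ell)\ge\kappa'\,P^{(\alpha_k-\alpha_\ell+\epsilon_\ell)/2}$ with $\kappa'=\gamma\kappa\,\Ke_\ell^{-(\Md_\ell-1+\epsilon')}>0$, using $P\ge1$ to discard the nonnegative surplus in the exponent. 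Running this argument for each of the finitely many pairs $(k,\ell)$ with $k\in[\ell:K]$, $\ell\in[1:K-2]$, the exceptional null set of Theorem~\ref{thm:KG} is enlarged only to a finite union of null sets, so the bound holds for almost all realizations.

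The step I expect to require the most care is verifying the hypotheses of Theorem~\ref{thm:KG}: that $\mathcal{S}_{k,\ell}\cup\Ic_{k,\ell}$ consists of exactly $\Md_\ell$ genuinely distinct monomials (no dimension shared between the desired set $\mathcal{S}_{k,\ell}$ and the interference set $\Ic_{k,\ell}$, and none repeated within $\Ic_{k,\ell}$ — which is precisely where the alignment structure built into \eqref{eq:defvellnm11} is used) and that $N\le\Md_\ell$. Once that bookkeeping is settled, the remainder is the routine substitution of the design parameters $Q_\ell,\lambda_\ell,\Md_\ell$ carried out above.
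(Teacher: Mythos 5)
Your proposal is correct and follows essentially the same route as the paper: invoke Theorem~\ref{thm:KG} with the $\Md_\ell$ monomials of $\mathcal{S}_{k,\ell}\cup\Ic_{k,\ell}$ and integer coefficients bounded by $\Ke_\ell Q_\ell$, then substitute $Q_\ell=P^{\lambda_\ell/2}$ and \eqref{eq:lambda111}; the only cosmetic difference is that the paper fixes $\epsilon'=\epsilon$ and reads off $\epsilon_\ell=\epsilon(\Md_\ell+\epsilon-\frac{\alpha_\ell-\alpha_{\ell-1}}{\Md_\ell})$, whereas you keep $\epsilon'$ free and tune it to a prescribed $\epsilon_\ell$. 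Your explicit verification of the theorem's hypotheses (distinctness of the $\Md_\ell$ monomials via the $h_{kk}$ factor, and $N\le\Md_\ell$) is a point the paper leaves implicit.
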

\begin{proof}
See Appendix~\ref{app:mindiskl}. The proof uses the result of  Khintchine-Groshev Theorem for Monomials. 
\end{proof}

 \begin{lemma}  \label{lm:boundTkl}
For the term $T_{k,\ell}$ defined in \eqref{eq:SITdef000},  it can be upper bounded by 
  \begin{align}
T_{k,\ell} \leq  &  P^{  \frac{\alpha_{k}- \alpha_{\ell}}{2}} \cdot \delta_{k,\ell}  \non
  \end{align}
 where $\delta_{k,\ell}$ is a positive value independent of $P$,  for $k \in [\ell : K]$, $\ell \in [1 : K-2]$.
  \end{lemma}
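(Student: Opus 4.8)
The plan is to control $|T_{k,\ell}|$ by the triangle inequality and to absorb everything except one explicit power of $P$ into a constant $\delta_{k,\ell}$ that depends on the channel realization but not on $P$. Recall from \eqref{eq:SITdef000} that
\[
T_{k,\ell}=\sum_{l=\ell+1}^{K}\sum_{j=l}^{K}\sqrt{P^{\alpha_{k}-\alpha_{l-1}}}\,h_{kj}\,\vv_{j,l}^{\T}\bv_{j,l}.
\]
First I would bound every power prefactor: since $l\ge\ell+1$ implies $l-1\ge\ell$ and the exponents $\{\alpha_i\}$ are nondecreasing, we have $\alpha_{k}-\alpha_{l-1}\le\alpha_{k}-\alpha_{\ell}$, so $\sqrt{P^{\alpha_{k}-\alpha_{l-1}}}\le\sqrt{P^{\alpha_{k}-\alpha_{\ell}}}=P^{(\alpha_{k}-\alpha_{\ell})/2}$ for every $l$, using $P\ge1$. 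It therefore remains to bound each $\big|h_{kj}\,\vv_{j,l}^{\T}\bv_{j,l}\big|$ by a constant independent of $P$.

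Next I would estimate $\big|h_{kj}\,\vv_{j,l}^{\T}\bv_{j,l}\big|\le|h_{kj}|\sum_{i=1}^{\Nd_{l}}|v_{j,l,i}|\,|b_{j,l,i}|$ factor by factor. The gain satisfies $|h_{kj}|\le h_{\max}$ by the standing assumption that every channel coefficient is bounded between a finite maximum and a nonzero minimum. Each coordinate $v_{j,l,i}$ is, by \eqref{eq:defvelln} (with the trivial value $1$ for $l\in\{K-1,K\}$), a monomial in the $h_{ab}$'s of total degree at most $(\me-1)\Ke_{l}(\Ke_{l}-1)\le(\me-1)K(K-1)$, hence $|v_{j,l,i}|\le\bigl(\max\{h_{\max},1/h_{\min}\}\bigr)^{(\me-1)K(K-1)}$, a constant. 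Each PAM symbol $b_{j,l,i}$ lies in $\Omega(\xi=\gamma/Q_{l},\,Q=Q_{l})$ (or the analogous set for the last two layers), so $|b_{j,l,i}|\le(\gamma/Q_{l})\cdot Q_{l}=\gamma$. Finally $\Nd_{l}$ and the number of index pairs $(l,j)$ are fixed once $\me$ is fixed. Summing these bounds over $(l,j)$ and $i$ produces a constant
\[
\delta_{k,\ell}\defeq\sum_{l=\ell+1}^{K}\sum_{j=l}^{K}|h_{kj}|\,\Nd_{l}\,\Bigl(\max_{i}|v_{j,l,i}|\Bigr)\,\gamma
\]
with no $P$-dependence, and we obtain $T_{k,\ell}\le|T_{k,\ell}|\le P^{(\alpha_{k}-\alpha_{\ell})/2}\,\delta_{k,\ell}$, which is the claim.

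There is essentially no obstacle here; the one place that deserves a moment of care is confirming that $\delta_{k,\ell}$ hides no $P$-dependence — in particular that the PAM scaling $\gamma/Q_{l}$ exactly cancels the range $Q_{l}$, so each symbol is bounded by the $P$-free constant $\gamma$, and that the alignment vectors $\vv_{j,l}$ are assembled purely from channel gains of bounded degree (as in \eqref{eq:defvelln}) rather than from any power of $P$. Once these are noted, the bound follows by the triangle inequality exactly as above.
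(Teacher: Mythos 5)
Your proof is correct and follows essentially the same route as the paper's: expand $\vv_{j,l}^{\T}\bv_{j,l}$, apply the triangle inequality, bound each PAM symbol by $\gamma$ (since $\xi Q_l=\gamma$), replace $\sqrt{P^{\alpha_k-\alpha_{l-1}}}$ by $\sqrt{P^{\alpha_k-\alpha_\ell}}$ via monotonicity of the $\alpha$'s, and absorb the remaining $P$-independent factors into $\delta_{k,\ell}$. The only cosmetic difference is that your constant uses $\Nd_l\max_i|v_{j,l,i}|$ where the paper keeps the sum $\sum_i|v_{j,l,i}|$, which changes nothing.
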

\begin{proof}
See Appendix~\ref{app:boundTkl}. 
\end{proof}

\section{Conclusion}   \label{sec:conclusion}

This work considered  the $K$-user asymmetric  interference channel, where different receivers might have different channel gains, parameterized by $0< \alpha_1 \leq   \alpha_2 \leq \cdots \leq  \alpha_K \leq1$. 
For this channel, we characterized the optimal sum GDoF as $\dsum = \frac{ \sum_{k=1}^K \alpha_k  +  \alpha_K -\alpha_{K-1}}{2}$. 
The achievability is based on  multi-layer interference alignment and  successive decoding.
For the the converse of this asymmetric setting, it involves bounding the \emph{weighted} sum GDoF for selected $J+2$ users, $J \in [1: \lceil \log \frac{K}{2} \rceil]$, which is very different from the case of the symmetric setting that only requires bounding the sum DoF for selected \emph{two} users.
The result of this work generalizes the existing result of the symmetric case to the setting with diverse link strengths.

\appendices

\section{Proofs of Lemmas~\ref{lm:boundchain998}, \ref{lm:boundchain012}, \ref{lm:dffbound11} and \ref{lm:sum2bound12},   and Claims~\ref{lm:bound1JJ1122} and~\ref{lm:bound1JJ3344}} \label{app:prooflmclaims}

Recall that 
  \begin{align}
  \tilde{y}_{k, \ell}(t) & \defeq   \sqrt{P^{\alpha_{\ell}}} \sum_{i=1}^{K} h_{ki}  x_{i}(t)  +  \tilde{z}_{\ell} (t) \non\\
 \Phi(\Jo) &\defeq 2^{J-\Jo +1}  \Imu( w_{\Jo}; y^{\bln}_{\Jo})   +  \sum_{j=\Jo+1}^{J+2}  2^{\max\{J-j+1, 0\}}  \Imu( w_{j}; \tilde{y}^{\bln}_{\Jo+1, \Jo}  |  \bar{W}_{[j]} )  \non
 \end{align} 
   \begin{align}
  d_{0} \defeq 0 ,  \quad  \alpha_{0} \defeq 0 ,  \quad  \tilde{y}^{\bln}_{1, 0} \defeq \phi,  \quad   \Imu( w_{j}; \tilde{y}^{\bln}_{1, 0}  |  \bar{W}_{[j]} ) \defeq 0, \   \forall j ,  \quad  \Imu( w_{0}; y^{\bln}_{0}) \defeq 0, \quad \text{and} \quad  \Phi(0)   \defeq0   \non
 \end{align}
 for   $\Jo\in [1: J-1]$ and $J \in [1: \lceil \log \frac{K}{2} \rceil ]$  (see \eqref{eq:ytildel}, \eqref{eq:phidef100} and \eqref{eq:phidef200}).

\subsection{Proof of Lemma~\ref{lm:boundchain998}  } \label{app:boundchain998}

The proof is based on the result of Lemma~\ref{lm:boundchain012}. 
Specifically,  Lemma~\ref{lm:boundchain012} reveals that
\begin{align}
 \Phi(\Jo)   
\leq   2^{J-\Jo +1} (\alpha_{\Jo}  - \alpha_{\Jo -1}) \cdot \frac{\bln}{2}   \log P + \bln  o(\log P)   + \sum_{j=\Jo}^{J+2}  2^{\max\{J-j+1,  0\}}  \Imu( w_{j}; \tilde{y}^{\bln}_{\Jo, \Jo -1}  |  \bar{W}_{[j]} )    \non 
 \end{align}
  for $\Jo\in [1: J-1]$.  By adding $2^{J-(\Jo-1) +1}  \Imu( w_{\Jo-1}; y^{\bln}_{\Jo -1}) $ into both sides of the above inequality, we have
\begin{align}
 \Phi(\Jo)  +  2^{J-(\Jo-1) +1}  \Imu( w_{\Jo-1}; y^{\bln}_{\Jo -1})  
\leq   2^{J-\Jo +1} (\alpha_{\Jo}  - \alpha_{\Jo -1}) \cdot \frac{\bln}{2}   \log P + \bln  o(\log P)  +  \Phi(\Jo -1)  \non 
 \end{align}
 which completes the proof of Lemma~\ref{lm:boundchain998} .

\subsection{Proof of Lemma~\ref{lm:boundchain012}  } \label{app:boundchain012}

The proof will use the result of Lemma~\ref{lm:sum2bound12}. 
At first, we note that the following equality is true
\begin{align}
& 2^{J-\Jo +1}  \Imu( w_{\Jo}; y^{\bln}_{\Jo})   +  \sum_{j=\Jo+1}^{J+2}  2^{\max\{J-j+1, 0\}}  \Imu( w_{j}; \tilde{y}^{\bln}_{\Jo+1, \Jo}  |  \bar{W}_{[j]} )   \non\\
= &     \sum_{j=\Jo+1}^{J+2}  2^{\max\{J-j+1, 0\}}  \Bigl(  \Imu( w_{\Jo}; y^{\bln}_{\Jo})+ \Imu( w_{j}; \tilde{y}^{\bln}_{\Jo+1, \Jo}  |  \bar{W}_{[j]} ) \Bigr)    \label{eq:ublast2t3100}      
 \end{align}
 by using the identity of $\sum_{j=\Jo+1}^{J+2}  2^{\max\{J-j+1, 0\}}  =2^{J-\Jo +1}$, for  $\Jo\in [1: J-1]$.
 For the sum of two mutual information terms in the right-hand side of \eqref{eq:ublast2t3100}, given $j \in [\Jo+1, J+2]$, we have 
\begin{align}
& \Imu( w_{\Jo}; y^{\bln}_{\Jo})+ \Imu( w_{j}; \tilde{y}^{\bln}_{\Jo+1, \Jo}  |  \bar{W}_{[j]} )   \non\\
\leq  &      \Imu( w_{\Jo}; y^{\bln}_{\Jo},  \tilde{y}^{\bln}_{\Jo, \Jo-1},  \bar{W}_{[j, \Jo]} )+ \Imu( w_{j}; \tilde{y}^{\bln}_{\Jo+1, \Jo} , \tilde{y}^{\bln}_{\Jo, \Jo-1} |  \bar{W}_{[j]}   )   \label{eq:ublast2t3000}\\    
=  &      \underbrace{\Imu( w_{\Jo};  \tilde{y}^{\bln}_{\Jo, \Jo-1} | \bar{W}_{[j, \Jo]} )}_{\leq \Imu( w_{\Jo};  \tilde{y}^{\bln}_{\Jo, \Jo-1} | \bar{W}_{[ \Jo]} )}+ \Imu( w_{j}; \tilde{y}^{\bln}_{\Jo, \Jo-1} |  \bar{W}_{[j]}   ) \non\\&
+\underbrace{\Imu( w_{\Jo}; y^{\bln}_{\Jo} |  \tilde{y}^{\bln}_{\Jo, \Jo-1},  \bar{W}_{[j, \Jo]} )+ \Imu( w_{j}; \tilde{y}^{\bln}_{\Jo+1, \Jo} | \tilde{y}^{\bln}_{\Jo, \Jo-1} , \bar{W}_{[j]}   )}_{ \leq (\alpha_{\Jo}- \alpha_{\Jo-1}) \cdot \frac{\bln}{2}   \log P + \bln  o(\log P) } \label{eq:ublast2t3001}\\  
\leq   &       \Imu( w_{\Jo};  \tilde{y}^{\bln}_{\Jo, \Jo-1} | \bar{W}_{[ \Jo]} )+ \Imu( w_{j}; \tilde{y}^{\bln}_{\Jo, \Jo-1} |  \bar{W}_{[j]}   ) 
\non\\&
+ (\alpha_{\Jo}- \alpha_{\Jo-1}) \cdot \frac{\bln}{2}   \log P + \bln  o(\log P)  \label{eq:ublast2t3002}    
\end{align}
 where  the step in \eqref{eq:ublast2t3000}  follows from the fact that adding more information does not reduce the mutual information;
    the step in \eqref{eq:ublast2t3001}  uses chain rule and  the  fact that the messages are mutually independent;
 the step in \eqref{eq:ublast2t3002} follows from the derivation of $\Imu( w_{\Jo};  \tilde{y}^{\bln}_{\Jo, \Jo-1} | \bar{W}_{[j, \Jo]} ) \leq \Imu( w_{\Jo};  \tilde{y}^{\bln}_{\Jo, \Jo-1},w_j | \bar{W}_{[j, \Jo]} )  = \Imu( w_{\Jo};  \tilde{y}^{\bln}_{\Jo, \Jo-1} | \bar{W}_{[ \Jo]} )$ and from the result of Lemma~\ref{lm:sum2bound12}, which reveals that $\Imu( w_{\Jo}; y^{\bln}_{\Jo} |  \tilde{y}^{\bln}_{\Jo, \Jo-1},  \bar{W}_{[j, \Jo]} )+ \Imu( w_{j}; \tilde{y}^{\bln}_{\Jo+1, \Jo} | \tilde{y}^{\bln}_{\Jo, \Jo-1} , \bar{W}_{[j]}   ) \leq (\alpha_{\Jo}- \alpha_{\Jo-1}) \cdot \frac{\bln}{2}   \log P + \bln  o(\log P) $.
 
 By incorporating the result of \eqref{eq:ublast2t3002} into \eqref{eq:ublast2t3100}, it gives 
 \begin{align}
& 2^{J-\Jo +1}  \Imu( w_{\Jo}; y^{\bln}_{\Jo})   +  \sum_{j=\Jo+1}^{J+2}  2^{\max\{J-j+1, 0\}}  \Imu( w_{j}; \tilde{y}^{\bln}_{\Jo+1, \Jo}  |  \bar{W}_{[j]} )   \non\\
\leq  &     \sum_{j=\Jo+1}^{J+2}  2^{\max\{J-j+1, 0\}}  \Bigl(  \Imu( w_{\Jo};  \tilde{y}^{\bln}_{\Jo, \Jo-1} | \bar{W}_{[ \Jo]} )+ \Imu( w_{j}; \tilde{y}^{\bln}_{\Jo, \Jo-1} |  \bar{W}_{[j]}   )  
+ (\alpha_{\Jo}- \alpha_{\Jo-1})  \frac{\bln}{2}   \log P + \bln  o(\log P)  \Bigr)    \label{eq:ublast2t3103}      \\
=&2^{J-\Jo +1} (\alpha_{\Jo}  - \alpha_{\Jo -1}) \cdot \frac{\bln}{2}   \log P + \bln  o(\log P)  + \sum_{j=\Jo}^{J+2}  2^{\max\{J-j+1,  0\}}  \Imu( w_{j}; \tilde{y}^{\bln}_{\Jo, \Jo -1}  |  \bar{W}_{[j]} )  \label{eq:ublast2t3104} 
 \end{align}
 where \eqref{eq:ublast2t3103} is from \eqref{eq:ublast2t3002} and \eqref{eq:ublast2t3100};
 \eqref{eq:ublast2t3104} follows from the identity of $\sum_{j=\Jo+1}^{J+2}  2^{\max\{J-j+1, 0\}}  =2^{J-\Jo +1}$, for $\Jo\in [1: J-1]$.
 Then, we complete the proof of Lemma~\ref{lm:boundchain012}.

 \subsection{Proof of Lemma~\ref{lm:dffbound11}  } \label{app:dffbound11}
 
The proof will use the result of  Lemma~\ref{lm:sum2bound12}. 
 In the first step, we expand $2  \Imu( w_{J}; y^{\bln}_J ) $ as follows
\begin{align}
 2  \Imu( w_{J}; y^{\bln}_J )    
\leq  &    \Imu( w_{J}; y^{\bln}_J, \tilde{y}^{\bln}_{J, J -1}, \bar{W}_{[J,J+1]} )   +      \Imu( w_{J}; y^{\bln}_J , \tilde{y}^{\bln}_{J, J -1}, \bar{W}_{[J,J+2]})  \label{eq:ublast2t2111} \\ 
=  & \Imu( w_{J}; \tilde{y}^{\bln}_{J, J -1}  |  \bar{W}_{[J,J+1]} )   +      \Imu( w_{J};  \tilde{y}^{\bln}_{J, J -1}  | \bar{W}_{[J,J+2]})   \non\\&  +   \Imu( w_{J}; y^{\bln}_J |  \tilde{y}^{\bln}_{J, J -1} , \bar{W}_{[J,J+1]} ) +  \Imu( w_{J}; y^{\bln}_J |  \tilde{y}^{\bln}_{J, J -1} , \bar{W}_{[J,J+2]})  \label{eq:ublast2t2222} \\ 
\leq   & \Imu( w_{J}; \tilde{y}^{\bln}_{J, J -1}  |  \bar{W}_{[J]} )   +      \Imu( w_{J};  \tilde{y}^{\bln}_{J, J -1}  | \bar{W}_{[J]})   \non\\&  +   \Imu( w_{J}; y^{\bln}_J |  \tilde{y}^{\bln}_{J, J -1} , \bar{W}_{[J,J+1]} ) +  \Imu( w_{J}; y^{\bln}_J |  \tilde{y}^{\bln}_{J, J -1} , \bar{W}_{[J,J+2]})  \label{eq:ublast2t2333}  
 \end{align}
  where   \eqref{eq:ublast2t2111}  follows from the fact that adding more information does not reduce the mutual information;
     \eqref{eq:ublast2t2222}  uses chain rule and  the  fact that the messages are mutually independent;
  and \eqref{eq:ublast2t2333} results from the derivation that $ \Imu( w_{J}; \tilde{y}^{\bln}_{J, J -1}  |  \bar{W}_{[J,\ell]} ) \leq \Imu( w_{J}; \tilde{y}^{\bln}_{J, J -1}, w_{\ell}  |   \bar{W}_{[J,\ell]}) = \Imu( w_{J}; \tilde{y}^{\bln}_{J, J -1}  |  \bar{W}_{[J]} )$ for $\ell\in [1:K]$, $\ell \neq J$.

In the second step,  we expand $\Imu( w_{J+1}; y^{\bln}_{J+1} ) +    \Imu( w_{J+2}; y^{\bln}_{J+2} ) $ as follows
\begin{align}
  &\Imu( w_{J+1}; y^{\bln}_{J+1} ) +    \Imu( w_{J+2}; y^{\bln}_{J+2} )    \non\\
  \leq  &  \Imu( w_{J+1}; y^{\bln}_{J+1}, \tilde{y}^{\bln}_{J+1,J}, \bar{W}_{[J+1, J+2]} ) +    \Imu( w_{J+2}; y^{\bln}_{J+2}, \tilde{y}^{\bln}_{J+1,J}, \bar{W}_{[J+2]} )    \label{eq:ublast2t111} \\
    =  & \Imu( w_{J+1};  \tilde{y}^{\bln}_{J+1,J}  | \bar{W}_{[J+1, J+2]} )  +    \Imu( w_{J+2};  \tilde{y}^{\bln}_{J+1,J} | \bar{W}_{[J+2]} )   \non\\
    &  + \Imu( w_{J+1}; y^{\bln}_{J+1} |  \tilde{y}^{\bln}_{J+1,J} ,  \bar{W}_{[J+1, J+2]} ) +    \Imu( w_{J+2}; y^{\bln}_{J+2} | \tilde{y}^{\bln}_{J+1,J}, \bar{W}_{[J+2]} )    \label{eq:ublast2t222}  \\
 \leq   & \Imu( w_{J+1};  \tilde{y}^{\bln}_{J+1,J} ,   \tilde{y}^{\bln}_{J, J -1} , w_{J+2} | \bar{W}_{[J+1, J+2]} )  +    \Imu( w_{J+2};  \tilde{y}^{\bln}_{J+1,J}, \tilde{y}^{\bln}_{J, J -1} | \bar{W}_{[J+2]} )   \non\\
    &  + \Imu( w_{J+1}; y^{\bln}_{J+1} |  \tilde{y}^{\bln}_{J+1,J} ,  \bar{W}_{[J+1, J+2]} ) +    \Imu( w_{J+2}; y^{\bln}_{J+2} | \tilde{y}^{\bln}_{J+1,J}, \bar{W}_{[J+2]} )     \label{eq:ublast2t444} \\
=    &     \Imu( w_{J+1};    \tilde{y}^{\bln}_{J, J -1}  | \bar{W}_{[J+1]} )  +     \Imu( w_{J+2};  \tilde{y}^{\bln}_{J, J -1} | \bar{W}_{[ J+2]} ) \non\\
  & + \Imu( w_{J+1};  \tilde{y}^{\bln}_{J+1,J} |  \tilde{y}^{\bln}_{J, J -1},  \bar{W}_{[J+1]} )   +    \Imu( w_{J+2};  \tilde{y}^{\bln}_{J+1,J} |  \tilde{y}^{\bln}_{J, J -1}, \bar{W}_{[J+2]} )  \non\\
    & + \underbrace{ \Imu( w_{J+1}; y^{\bln}_{J+1} |  \tilde{y}^{\bln}_{J+1,J} ,  \bar{W}_{[J+1, J+2]} ) +    \Imu( w_{J+2}; y^{\bln}_{J+2} | \tilde{y}^{\bln}_{J+1,J}, \bar{W}_{[J+2]} ) }_{\leq (\alpha_{J+2}- \alpha_{J}) \cdot \frac{\bln}{2}   \log P + \bln  o(\log P) }    \label{eq:ublast2t333}  \\
\leq    & (\alpha_{J+2}- \alpha_{J}) \cdot \frac{\bln}{2}   \log P + \bln  o(\log P)  + \Imu( w_{J+1};    \tilde{y}^{\bln}_{J, J -1}  | \bar{W}_{[J+1]} )  +     \Imu( w_{J+2};  \tilde{y}^{\bln}_{J, J -1} | \bar{W}_{[ J+2]} )  \non\\
&+ \Imu( w_{J+1};  \tilde{y}^{\bln}_{J+1,J} |  \tilde{y}^{\bln}_{J, J -1},  \bar{W}_{[J+1]} )   +    \Imu( w_{J+2};  \tilde{y}^{\bln}_{J+1,J} |  \tilde{y}^{\bln}_{J, J -1}, \bar{W}_{[J+2]} )    \label{eq:ublast2t666} 
 \end{align}
 where   \eqref{eq:ublast2t111} and  \eqref{eq:ublast2t444}  result from the fact that adding more information does not reduce the mutual information;
  \eqref{eq:ublast2t222} and \eqref{eq:ublast2t333} use  chain rule and the fact that the messages are mutually independent;
   \eqref{eq:ublast2t666} follows from the result of  Lemma~\ref{lm:sum2bound12}, that is, $ \Imu( w_{J+1}; y^{\bln}_{J+1} |  \tilde{y}^{\bln}_{J+1,J} ,  \bar{W}_{[J+1, J+2]} ) +    \Imu( w_{J+2}; y^{\bln}_{J+2} | \tilde{y}^{\bln}_{J+1,J}, \bar{W}_{[J+2]} )  = \Imu( w_{J+1}; y^{\bln}_{J+1} |  \tilde{y}^{\bln}_{J+1,J} ,  \bar{W}_{[J+1, J+2]} ) +   \Imu( w_{J+2}; \tilde{y}^{\bln}_{J+2,J+2} | \tilde{y}^{\bln}_{J+1,J}, \bar{W}_{[J+2]} ) \leq (\alpha_{J+2}- \alpha_{J}) \cdot \frac{\bln}{2}   \log P + \bln  o(\log P)$.

 By combining the results of \eqref{eq:ublast2t2333} and \eqref{eq:ublast2t666}, we have
 \begin{align}
& 2  \Imu( w_{J}; y^{\bln}_J )   +  \Imu( w_{J+1}; y^{\bln}_{J+1} ) +    \Imu( w_{J+2}; y^{\bln}_{J+2} )   \non\\
\leq & 2 \Imu( w_{J}; \tilde{y}^{\bln}_{J, J -1}  |  \bar{W}_{[J]} )   +  \Imu( w_{J+1};    \tilde{y}^{\bln}_{J, J -1}  | \bar{W}_{[J+1]} )  +     \Imu( w_{J+2};  \tilde{y}^{\bln}_{J, J -1} | \bar{W}_{[ J+2]} )     \non\\&  +   \underbrace{\Imu( w_{J}; y^{\bln}_J |  \tilde{y}^{\bln}_{J, J -1} , \bar{W}_{[J,J+1]} )    + \Imu( w_{J+1};  \tilde{y}^{\bln}_{J+1,J} |  \tilde{y}^{\bln}_{J, J -1},  \bar{W}_{[J+1]} )}_{\leq (\alpha_{J}- \alpha_{J-1}) \cdot \frac{\bln}{2}   \log P + \bln  o(\log P) }  \non\\
&+ \underbrace{ \Imu( w_{J}; y^{\bln}_J |  \tilde{y}^{\bln}_{J, J -1} , \bar{W}_{[J,J+2]})  +    \Imu( w_{J+2};  \tilde{y}^{\bln}_{J+1,J} |  \tilde{y}^{\bln}_{J, J -1}, \bar{W}_{[J+2]} ) }_{\leq (\alpha_{J}- \alpha_{J-1}) \cdot \frac{\bln}{2}   \log P + \bln  o(\log P) }  \non  \\
&+ (\alpha_{J+2}- \alpha_{J}) \cdot \frac{\bln}{2}   \log P + \bln  o(\log P) \label{eq:ublast2t18424}   \\
\leq & 2 \Imu( w_{J}; \tilde{y}^{\bln}_{J, J -1}  |  \bar{W}_{[J]} )   +  \Imu( w_{J+1};    \tilde{y}^{\bln}_{J, J -1}  | \bar{W}_{[J+1]} )  +     \Imu( w_{J+2};  \tilde{y}^{\bln}_{J, J -1} | \bar{W}_{[ J+2]} )     
\non\\&  +   (\alpha_{J}- \alpha_{J-1}) \cdot \frac{\bln}{2}   \log P + \bln  o(\log P)   \non\\
&+ (\alpha_{J}- \alpha_{J-1}) \cdot \frac{\bln}{2}   \log P + \bln  o(\log P)   \non  \\
&+ (\alpha_{J+2}- \alpha_{J}) \cdot \frac{\bln}{2}   \log P + \bln  o(\log P) \label{eq:ublast2t233566}  
 \end{align}  
 where \eqref{eq:ublast2t18424} is from \eqref{eq:ublast2t2333} and \eqref{eq:ublast2t666};
 \eqref{eq:ublast2t233566} follows from Lemma~\ref{lm:sum2bound12}.  At this point, we complete the proof of Lemma~\ref{lm:dffbound11}.

  \subsection{Proof of Lemma~\ref{lm:sum2bound12}  } \label{app:sum2bound12}

The proof will use the result of Claim~\ref{lm:bound1JJ1122} and Claim~\ref{lm:bound1JJ3344}. 
When $\ell_1,  \ell_2 , \ell_3 ,  l, i, j \in [1:K]$,  $\ell_1 <  \ell_2 \leq \ell_3 $, $i \neq  j$, we have
  \begin{align}
            &  \Imu( w_{i}; y^{\bln}_{\ell_2} |  \tilde{y}^{\bln}_{\ell_2,\ell_1} ,  \bar{W}_{[i, j]} )  +    \Imu( w_{j}; \tilde{y}^{\bln}_{l, \ell_3} | \tilde{y}^{\bln}_{\ell_2,\ell_1}, \bar{W}_{[j]} ) \non\\
  \leq &  \Imu( w_{i}; y^{\bln}_{\ell_2} |  \tilde{y}^{\bln}_{\ell_2,\ell_1} ,  \bar{W}_{[i, j]} )  +    \Imu( w_{j}; \tilde{y}^{\bln}_{l, \ell_3}, y^{\bln}_{\ell_2} | \tilde{y}^{\bln}_{\ell_2,\ell_1}, \bar{W}_{[j]} )   \label{eq:ublast2t24355} \\ 
    = &  \Imu( w_{i}; y^{\bln}_{\ell_2} |  \tilde{y}^{\bln}_{\ell_2,\ell_1} ,  \bar{W}_{[i, j]} )  +    \Imu( w_{j};  y^{\bln}_{\ell_2} | \tilde{y}^{\bln}_{\ell_2,\ell_1}, \bar{W}_{[j]} )+    \Imu( w_{j}; \tilde{y}^{\bln}_{l, \ell_3} | y^{\bln}_{\ell_2}, \tilde{y}^{\bln}_{\ell_2,\ell_1}, \bar{W}_{[j]} )  \non\\ 
      = &  \underbrace{\Imu( w_{i}, w_{j}; y^{\bln}_{\ell_2} |  \tilde{y}^{\bln}_{\ell_2,\ell_1} ,  \bar{W}_{[i, j]} )}_{\leq  \frac{\bln}{2} \log ( 1+ P^{\alpha_{\ell_2} - \alpha_{\ell_1}} )} +   \underbrace{ \Imu( w_{j}; \tilde{y}^{\bln}_{l, \ell_3} | y^{\bln}_{\ell_2}, \tilde{y}^{\bln}_{\ell_2,\ell_1}, \bar{W}_{[j]} ) }_{\leq \frac{\bln}{2} \log  \bigl(1  + P^{\alpha_{\ell_3} -\alpha_{\ell_2} } \frac{ |h_{lj } |^2}{|h_{\ell_2 j }|^2}  \bigr)}  \non\\ 
      \leq  &  \frac{\bln}{2} \log ( 1+ P^{\alpha_{\ell_2} - \alpha_{\ell_1}} ) +   \frac{\bln}{2} \log  \bigl(1  + P^{\alpha_{\ell_3} -\alpha_{\ell_2} } \frac{ |h_{lj } |^2}{|h_{\ell_2 j }|^2}  \bigr)      \label{eq:ublast2t99880} 
  \end{align}     
  where \eqref{eq:ublast2t24355} uses  the  fact that adding  information does not reduce the mutual information;
and \eqref{eq:ublast2t99880} follows from Claim~\ref{lm:bound1JJ1122} and Claim~\ref{lm:bound1JJ3344}.

 Similarly, when $  \ell_2 , \ell_3 ,  l,  j \in [1:K]$ and $ \ell_2 \leq \ell_3 $,  we have
  \begin{align}
            &  \Imu( w_{i}; y^{\bln}_{\ell_2} |   \bar{W}_{[i, j]} )  +    \Imu( w_{j}; \tilde{y}^{\bln}_{l, \ell_3} |  \bar{W}_{[j]} ) \non\\
  \leq &  \Imu( w_{i}; y^{\bln}_{\ell_2} |    \bar{W}_{[i, j]} )  +    \Imu( w_{j}; \tilde{y}^{\bln}_{l, \ell_3}, y^{\bln}_{\ell_2} |  \bar{W}_{[j]} )   \non \\ 
    = &  \Imu( w_{i}; y^{\bln}_{\ell_2} |    \bar{W}_{[i, j]} )  +    \Imu( w_{j};  y^{\bln}_{\ell_2} | \bar{W}_{[j]} )+    \Imu( w_{j}; \tilde{y}^{\bln}_{l, \ell_3} | y^{\bln}_{\ell_2},  \bar{W}_{[j]} )  \non\\ 
      = &  \underbrace{\Imu( w_{i}, w_{j}; y^{\bln}_{\ell_2} |   \bar{W}_{[i, j]} )}_{\leq  \alpha_{\ell_2} \cdot \frac{\bln}{2} \log P  +  \bln  o(\log P)} +   \underbrace{ \Imu( w_{j}; \tilde{y}^{\bln}_{l, \ell_3} | y^{\bln}_{\ell_2}, \bar{W}_{[j]} ) }_{\leq \frac{\bln}{2} \log  \bigl(1  + P^{\alpha_{\ell_3} -\alpha_{\ell_2} } \frac{ |h_{lj } |^2}{|h_{\ell_2 j }|^2}  \bigr)}  \non\\ 
      \leq  &  \alpha_{\ell_2} \cdot \frac{\bln}{2} \log P  +  \bln  o(\log P) +   \frac{\bln}{2} \log  \bigl(1  + P^{\alpha_{\ell_3} -\alpha_{\ell_2} } \frac{ |h_{lj } |^2}{|h_{\ell_2 j }|^2}  \bigr)      \label{eq:ublast2t8847} \\
            =  &  \alpha_{\ell_3} \cdot \frac{\bln}{2} \log P  +  \bln  o(\log P)  \non
  \end{align}     
  where  \eqref{eq:ublast2t8847} follows from Claim~\ref{lm:bound1JJ1122} and Claim~\ref{lm:bound1JJ3344}.
Then,  we complete the proof of  Lemma~\ref{lm:sum2bound12}.

 \subsection{Proof of Claim~\ref{lm:bound1JJ1122}  } \label{app:bound1JJ1122}
         
When  $\ell_1,  \ell_2 , i, j \in [1:K]$,  $\ell_1 <  \ell_2  $, $i \neq  j$,  we have
  \begin{align}
  &\Imu( w_{i}, w_{j}; y^{\bln}_{\ell_2} |  \tilde{y}^{\bln}_{\ell_2,\ell_1} ,  \bar{W}_{[i, j]} )  \non\\
  =&\hen(y^{\bln}_{\ell_2} |  \tilde{y}^{\bln}_{\ell_2,\ell_1} ,  \bar{W}_{[i, j]}) - \hen(y^{\bln}_{\ell_2} |  \tilde{y}^{\bln}_{\ell_2,\ell_1} ,  \bar{W}_{[i, j]}, w_{i},w_{j} ) \non\\ 
    =&\hen(y^{\bln}_{\ell_2} |  \tilde{y}^{\bln}_{\ell_2,\ell_1} ,  \bar{W}_{[i, j]}) -  \hen(z^{\bln}_{\ell_2} ) \non\\
      =&\hen( \{y_{\ell_2}(t)  -  \sqrt{P^{\alpha_{\ell_2} - \alpha_{\ell_1}}}  \tilde{y}_{\ell_2,\ell_1}  (t)  \}_{t=1}^{\bln} |  \tilde{y}^{\bln}_{\ell_2,\ell_1} ,  \bar{W}_{[i, j]}) -  \hen(z^{\bln}_{\ell_2} ) \non\\
    =&\hen( \{z_{\ell_2}(t)  -  \sqrt{P^{\alpha_{\ell_2} - \alpha_{\ell_1}}}  \tilde{z}_{\ell_1}  (t)  \}_{t=1}^{\bln} |  \tilde{y}^{\bln}_{\ell_2,\ell_1} ,  \bar{W}_{[i, j]}) -  \hen(z^{\bln}_{\ell_2} ) \non\\
     \leq &\hen( \{z_{\ell_2}(t)  -  \sqrt{P^{\alpha_{\ell_2} - \alpha_{\ell_1}}}  \tilde{z}_{\ell_1}  (t)  \}_{t=1}^{\bln}) -  \hen(z^{\bln}_{\ell_2} ) \label{eq:ublast2t23141} \\
        = & \frac{\bln}{2} \log (2\pi e ( 1+ P^{\alpha_{\ell_2} - \alpha_{\ell_1}} ))  -  \frac{\bln}{2} \log (2\pi e)    \non\\  
         = & \frac{\bln}{2} \log ( 1+ P^{\alpha_{\ell_2} - \alpha_{\ell_1}} )   \non
\end{align}
where \eqref{eq:ublast2t23141} follows from the fact that conditioning reduces differential entropy. 

When  $ \ell_2 , i, j \in [1:K]$, $i \neq  j$, we have
  \begin{align}
    &\Imu( w_{i}, w_{j}; y^{\bln}_{\ell_2} |   \bar{W}_{[i, j]} )  \non\\
    =&\hen(y^{\bln}_{\ell_2} |   \bar{W}_{[i, j]}) -  \hen(z^{\bln}_{\ell_2} ) \non\\
      =& \sum_{t=1}^{\bln}\hen( y_{\ell_2}(t)   | y_{\ell_2}^{t-1}  ,  \bar{W}_{[i, j]}) - \frac{\bln}{2} \log (2\pi e) \non\\
\leq & \sum_{t=1}^{\bln}\hen( y_{\ell_2}(t) ) -  \hen(z^{\bln}_{\ell_2} ) \non\\
     \leq &  \frac{\bln}{2} \log (2\pi e ( 1+ P^{\alpha_{\ell_2}} \sum_{k=1}^K | h_{\ell_2 k} |^2  ))  -  \frac{\bln}{2} \log (2\pi e)  \label{eq:ublast2t2356} \\
         = & \alpha_{\ell_2} \cdot \frac{\bln}{2} \log P  +  \bln  o(\log P)  \non
\end{align}
where \eqref{eq:ublast2t2356} uses  the fact that Gaussian input maximizes the differential entropy.  It then completes the proof of Claim~\ref{lm:bound1JJ1122}.

  \subsection{Proof of Claim~\ref{lm:bound1JJ3344}  } \label{app:bound1JJ3344}
  
When $\ell_1,  \ell_2 , \ell_3 ,  l,  j \in [1:K]$, $\ell_1 <  \ell_2 \leq \ell_3 $, or when $  \ell_2 , \ell_3 ,  l,  j \in [1:K]$,  $ \ell_2 \leq \ell_3 $, $ \tilde{y}^{\bln}_{\ell_2,\ell_1} = \phi$,   we have
  \begin{align}
        &  \Imu( w_{j}; \tilde{y}^{\bln}_{l, \ell_3} | y^{\bln}_{\ell_2}, \tilde{y}^{\bln}_{\ell_2,\ell_1}, \bar{W}_{[j]} ) \non\\
        =& \hen( \tilde{y}^{\bln}_{l, \ell_3} | y^{\bln}_{\ell_2}, \tilde{y}^{\bln}_{\ell_2,\ell_1}, \bar{W}_{[j]} ) -\hen( \tilde{y}^{\bln}_{l, \ell_3} | y^{\bln}_{\ell_2}, \tilde{y}^{\bln}_{\ell_2,\ell_1}, \bar{W}_{[j]}, w_{j} )  \non\\
                =& \hen \Bigl( \bigl\{  \sqrt{P^{\alpha_{\ell_3}}}  h_{lj }  x_{j}(t) + \tilde{z}_{\ell_3} (t)   \bigr\}_{t=1}^{\bln}  \ \big|   \bigl\{  \sqrt{P^{\alpha_{\ell_2}}}  h_{\ell_2 j }  x_{j}(t) + z_{\ell_2} (t)   \bigr\}_{t=1}^{\bln}     , \tilde{y}^{\bln}_{\ell_2,\ell_1}, \bar{W}_{[j]} \Bigr) -\hen( \tilde{z}^{\bln}_{\ell_3} )  \non\\
              =& \hen \Bigl( \bigl\{  \sqrt{P^{\alpha_{\ell_3}}}  h_{l j }  x_{j}(t) + \tilde{z}_{\ell_3} (t)   -  \sqrt{P^{\alpha_{\ell_3}- \alpha_{\ell_2}}} \frac{h_{lj }}{h_{\ell_2 j }} \bigl( \sqrt{P^{\alpha_{\ell_2}}}  h_{\ell_2  j }  x_{j}(t) + z_{\ell_2} (t)   \bigr) \bigr\}_{t=1}^{\bln} \  \big|  \non\\ &\quad  \bigl\{  \sqrt{P^{\alpha_{\ell_2}}}  h_{\ell_2 j }  x_{j}(t) + z_{\ell_2} (t)   \bigr\}_{t=1}^{\bln}     , \tilde{y}^{\bln}_{\ell_2,\ell_1}, \bar{W}_{[j]} \Bigr) -\hen( \tilde{z}^{\bln}_{\ell_3} )    \non\\
               \leq & \hen \Bigl( \bigl\{   \tilde{z}_{\ell_3} (t)   -  \sqrt{P^{\alpha_{\ell_3}- \alpha_{\ell_2}}} \frac{h_{lj }}{h_{\ell_2 j }} z_{\ell_2} (t)   \bigr\}_{t=1}^{\bln} \Bigr) -\hen( \tilde{z}^{\bln}_{\ell_3} )   \label{eq:ublast2t23451}\\
            = & \frac{\bln}{2} \log  \bigl(1  + P^{\alpha_{\ell_3}- \alpha_{\ell_2}} \frac{|h_{lj }|^2}{|h_{\ell_2 j }|^2}  \bigr)  \non
 \end{align}
 where \eqref{eq:ublast2t23451} follows from the fact that conditioning reduces differential entropy.   It then completes the proof of Claim~\ref{lm:bound1JJ3344}.

\section{Proof of Corollary~\ref{cor:ubound}} \label{app:ubound}

We will  first prove Corollary~\ref{cor:ubound} for  some specific cases in order to get some insights. After that, we will prove Corollary~\ref{cor:ubound} for the general case.
The proof is based on the result of Lemma~\ref{lm:dLsum}.
At first we define that  $\Jl \defeq  \lceil \log \frac{K}{2} \rceil$ and that 
\begin{subnumcases}  
{\Theta(x)  \defeq } 
      x      &    if    \  $  x  \geq    2^{\Jl}   $      \label{eq:thetadef111} \\
0  &  else    . \label{eq:thetadef222}
\end{subnumcases}
Recall that  (see \eqref{eq:phidef200})
\begin{align}
d_{0}  \defeq 0,  \quad  \alpha_{0} \defeq 0.    \label{eq:thetadef333}
 \end{align}
In our proof, a total of $2^{\Jl}$  bounds are required. 
Among those $2^{\Jl}$  bounds,  the first $2^{\Jl-1}$  bounds have a specific structure. The last $2^{\Jl-1}$  bounds have a similar structure but some elements with certain indexes are erased (set as zeros).

\subsection{Proof for the case with $K=8$} 
From Lemma~\ref{lm:dLsum}, the following bounds hold true
\begin{align}
4 d_1 + 2 d_3 + d_7 +  d_8 & \leq  2\alpha_1 + \alpha_3 + \alpha_8  \non  \\
4 d_2 + 2 d_3 + d_7 +  d_8 & \leq  2\alpha_2 + \alpha_3 + \alpha_8  \non  \\
4 d_4 +  2 d_6 + d_7 +  d_8 & \leq 2\alpha_4 + \alpha_6 + \alpha_8  \non  \\
4 d_5 + 2 d_6 + d_7 +  d_8 & \leq  2\alpha_5 +  \alpha_6 + \alpha_8  . \non  
\end{align}

By summing up the above $4$ bounds and dividing each side with $4$, it gives  $\dsum(\al) \leq  \frac{  \sum_{k=1}^8 \alpha_k  +  \alpha_8 -\alpha_{7} }{2}$.

\subsection{Proof for the case with $K=9$} 
The result of  Lemma~\ref{lm:dLsum} reveals that 
\begin{align}
8 d_1 + 4 d_5 +  2 d_7 + d_8 +  d_9 & \leq  4\alpha_1 + 2\alpha_5 + \alpha_7+ \alpha_9  \non  \\
8 d_2 + 4 d_5 +  2 d_7+ d_8 +  d_9 & \leq  4\alpha_2 + 2\alpha_5 + \alpha_7+ \alpha_9 \non  \\
8 d_3 +  4 d_6 +  2 d_7+ d_8 +  d_9 & \leq 4\alpha_3 + 2\alpha_6 + \alpha_7+ \alpha_9  \non  \\
8 d_4 + 4 d_6 +  2 d_7+ d_8 +  d_9& \leq  4\alpha_4 +  2\alpha_6 + \alpha_7+ \alpha_9  \non   \\
                d_8 +  d_9 & \leq  \    \ \quad  \quad \quad \quad \quad  \quad \quad  \alpha_9  \non  \\
             d_8 +  d_9 & \leq  \    \ \quad  \quad \quad \quad \quad  \quad \quad  \alpha_9  \non  \\
             d_8 +  d_9 & \leq  \    \ \quad  \quad \quad \quad \quad  \quad \quad  \alpha_9  \non  \\
               d_8 +  d_9 & \leq  \    \ \quad  \quad \quad \quad \quad  \quad \quad  \alpha_9.  \non  
\end{align}

By summing up the above $8$ bounds and dividing each side with $8$, we have  $\dsum(\al) \leq  \frac{  \sum_{k=1}^9 \alpha_k  +  \alpha_9 -\alpha_{8} }{2}$.

\subsection{Proof for the case with $K=10$} 
The following bounds are directly derived from Lemma~\ref{lm:dLsum}
\begin{align}
8 d_1 + 4 d_5 +  2 d_7 + d_9 +  d_{10} & \leq  4\alpha_1 + 2\alpha_5 + \alpha_7+ \alpha_{10}  \non  \\
8 d_2 + 4 d_5 +  2 d_7+ d_9 +  d_{10} & \leq  4\alpha_2 + 2\alpha_5 + \alpha_7+ \alpha_{10} \non  \\
8 d_3 +  4 d_6 +  2 d_7+ d_9 +  d_{10} & \leq 4\alpha_3 + 2\alpha_6 + \alpha_7+ \alpha_{10}  \non  \\
8 d_4 + 4 d_6 +  2 d_7+ d_9 +  d_{10} & \leq  4\alpha_4 +  2\alpha_6 + \alpha_7+ \alpha_{10}  \non   \\
   2 d_8 +  d_9 +  d_{10} & \leq  \     \quad  \quad \quad \quad \quad    \alpha_8+ \alpha_{10}  \non  \\
    2 d_8 + d_9 + d_{10} & \leq \     \quad  \quad \quad \quad \quad    \alpha_8+ \alpha_{10}  \non  \\
      2 d_8 +    d_9 +  d_{10} & \leq  \     \quad  \quad \quad \quad \quad    \alpha_8+ \alpha_{10}  \non  \\
        2d_8 +    d_9 +  d_{10} & \leq  \     \quad  \quad \quad \quad \quad    \alpha_8+ \alpha_{10}.  \non  
\end{align}

By combining the above $8$ bounds it gives  $\dsum(\al) \leq  \frac{  \sum_{k=1}^{10} \alpha_k  +  \alpha_{10} -\alpha_{9} }{2}$.

\subsection{Proof for the case with $K=13$} 
When $K=13$, the following bounds are directly derived from  Lemma~\ref{lm:dLsum}
\begin{align}
8 d_1 + 4 d_5 +  2 d_7 + d_{12} +  d_{13} & \leq  4\alpha_1 + 2\alpha_5 + \alpha_7+ \alpha_{13}  \label{eq:uboundgK1301}    \\
8 d_2 + 4 d_5 +  2 d_7+ d_{12} +  d_{13} & \leq  4\alpha_2 + 2\alpha_5 + \alpha_7+ \alpha_{13} \label{eq:uboundgK1302}    \\
8 d_3 +  4 d_6 +  2 d_7+ d_{12} +  d_{13} & \leq 4\alpha_3 + 2\alpha_6 + \alpha_7+ \alpha_{13}  \label{eq:uboundgK1303}    \\
8 d_4 + 4 d_6 +  2 d_7+ d_{12} +  d_{13} & \leq  4\alpha_4 +  2\alpha_6 + \alpha_7+ \alpha_{13}  \label{eq:uboundgK1304}     \\
   4 d_9 +2 d_{11} +  d_{12} +  d_{13}  & \leq    \quad \quad \  \  2\alpha_9  +   \alpha_{11}+ \alpha_{13}  \label{eq:uboundgK1305}    \\
    4 d_9 +2 d_{11}+ d_{12} + d_{13}  & \leq      \quad  \quad   \ \  2\alpha_9   + \alpha_{11}+ \alpha_{13}  \label{eq:uboundgK1306}    \\
    4 d_{10} +  2 d_{11} +    d_{12} +  d_{13} & \leq   \quad \quad \   2\alpha_{10} +   \alpha_{11}+ \alpha_{13}  \label{eq:uboundgK1307}    \\
  8 d_8 +   4 d_{10} +   2d_{11} +    d_{12} +  d_{13} & \leq    4\alpha_8 +   2\alpha_{10} +   \alpha_{11}+ \alpha_{13}.  \label{eq:uboundgK1308}    
\end{align}
The above $8$ bounds reveal that  $\dsum(\al) \leq  \frac{  \sum_{k=1}^{13} \alpha_k  +  \alpha_{13} -\alpha_{12} }{2}$.

\subsection{Proof for the case with $K=16$} 
When $K=16$, the following bounds are directly derived from  Lemma~\ref{lm:dLsum} 
\begin{align}
8 d_1 + 4 d_5 +  2 d_7 + d_{15} +  d_{16} & \leq  4\alpha_1 + 2\alpha_5 + \alpha_7+ \alpha_{16}  \non  \\
8 d_2 + 4 d_5 +  2 d_7+ d_{15} +  d_{16} & \leq  4\alpha_2 + 2\alpha_5 + \alpha_7+ \alpha_{16} \non  \\
8 d_3 +  4 d_6 +  2 d_7+ d_{15} +  d_{16} & \leq 4\alpha_3 + 2\alpha_6 + \alpha_7+ \alpha_{16}  \non  \\
8 d_4 + 4 d_6 +  2 d_7+ d_{15} +  d_{16} & \leq  4\alpha_4 +  2\alpha_6 + \alpha_7+ \alpha_{16}  \non   \\
 8 d_8 +  4 d_{12} +2 d_{14} +  d_{15} +  d_{16}  & \leq     4\alpha_8 +  2\alpha_{12}  +   \alpha_{14}+ \alpha_{16}  \non  \\
  8 d_9 +   4 d_{12} +2 d_{14}+ d_{15} + d_{16}  & \leq      4\alpha_9 + 2\alpha_{12}   + \alpha_{14}+ \alpha_{16}  \non  \\
   8 d_{10} +  4 d_{13} +  2 d_{14} +    d_{15} +  d_{16} & \leq   4\alpha_{10} + 2\alpha_{13} +   \alpha_{14}+ \alpha_{16}  \non  \\
   8 d_{11} +   4 d_{13} +   2d_{14} +    d_{15} +  d_{16} & \leq   4\alpha_{11} +  2\alpha_{13} +   \alpha_{14}+ \alpha_{16} .  \non  
\end{align}
It then implies that  $\dsum(\al) \leq  \frac{  \sum_{k=1}^{16} \alpha_k  +  \alpha_{16} -\alpha_{15} }{2}$.

In the following we will prove Corollary~\ref{cor:ubound} for the general case ($K\geq 3$) by using the result of Lemma~\ref{lm:dLsum}. Note that when  $K=2$, the proof is straightforward. 
\subsection{Proof for the general case }

In our proof, a total of $2^{\Jl}$  bounds are required, which can be seen in the previous examples. 
Among those $2^{\Jl}$  bounds,  the first $2^{\Jl-1}$  bounds have a similar structure. Specifically, when $\ell \in [1 : 2^{\Jl-1}]$,  the $\ell$th bound takes the following form
\begin{align}
  & \sum_{j=0}^{\Jl-1}  2^{\Jl- j}  \cdot d_{ \lceil \ell /2^{j} \rceil  +   \sum^{j}_{l=1} 2^{\Jl-l} }  \  + d_{K-1} +  d_{K}  \non\\ 
  \leq  &  \sum_{j=0}^{\Jl -1}  2^{\Jl- j -1}  \cdot \alpha_{ \lceil \ell /2^{j} \rceil  +   \sum^{j}_{l=1} 2^{\Jl-l} } \  +  \alpha_{K} .    \label{eq:uboundg111}
 \end{align}
 Note that in the above expression, we define that $\sum^{0}_{l=1} 2^{\Jl-l} \defeq 0$.
 When  $\ell\in [2^{\Jl-1}+1 : 2^{\Jl}]$, the $\ell$th bound takes the following form
\begin{align}
 &\sum_{j=0}^{\Jl-1}  2^{\Jl- j}  \cdot d_{\Theta( K- 1 - 2^{\Jl}   +   \lceil (\ell -2^{\Jl-1}) /2^{j} \rceil  +   \sum^{j}_{l=1} 2^{\Jl-l} ) }  \  + d_{K-1} +  d_{K}  \non\\
  \leq &   \sum_{j=0}^{\Jl -1}  2^{\Jl- j -1}  \cdot \alpha_{\Theta( K- 1 - 2^{\Jl}   +   \lceil (\ell -2^{\Jl-1}) /2^{j} \rceil  +   \sum^{j}_{l=1} 2^{\Jl-l} ) } \  +  \alpha_{K}   \label{eq:uboundg222}  
 \end{align}
where $\Theta(\bullet)$, $d_{0}$ and $ \alpha_{0}$ are defined in \eqref{eq:thetadef111}, \eqref{eq:thetadef222} and \eqref{eq:thetadef333}. 
The last $2^{\Jl-1}$ bounds have a similar structure as the first $2^{\Jl-1}$ bounds. However, with our design in \eqref{eq:uboundg222}, we enforce some  $d_{\Theta(\bullet) }$ and $\alpha_{\Theta(\bullet) }$ to $0$ when the corresponding  indices are less than $ 2^{\Jl}$. For example,  when $K=13$ and $\Jl=\lceil \log \frac{K}{2} \rceil =3$, the first $2^{\Jl-1} = 4$ bounds are exactly the same as in \eqref{eq:uboundgK1301}-\eqref{eq:uboundgK1304}, while the last $4$ bounds are expressed as 
\begin{align}
 8 d_{\Theta(5)} +   4 d_9 +2 d_{11} +  d_{12} +  d_{13}  & \leq   4\alpha_{\Theta(5)} +  2\alpha_9  +   \alpha_{11}+ \alpha_{13}  \label{eq:uboundgK1305a}    \\
  8 d_{\Theta(6)} +   4 d_9 +2 d_{11}+ d_{12} + d_{13}  & \leq     4\alpha_{\Theta(6)}+ 2\alpha_9   + \alpha_{11}+ \alpha_{13}  \label{eq:uboundgK1306a}    \\
 8 d_{\Theta(7)} +    4 d_{10} +  2 d_{11} +    d_{12} +  d_{13} & \leq   4\alpha_{\Theta(7)} +   2\alpha_{10} +   \alpha_{11}+ \alpha_{13}  \label{eq:uboundgK1307a}    \\
  8 d_8 +   4 d_{10} +   2d_{11} +    d_{12} +  d_{13} & \leq    4\alpha_8 +   2\alpha_{10} +   \alpha_{11}+ \alpha_{13}  \label{eq:uboundgK1308a}    
\end{align}
where   $d_{\Theta(5)} = d_{\Theta(6)}=d_{\Theta(7)} = \alpha_{\Theta(5)} = \alpha_{\Theta(6)} = \alpha_{\Theta(7)} =0$. The bounds in \eqref{eq:uboundgK1305a}-\eqref{eq:uboundgK1308a} can be rewritten as in 
 \eqref{eq:uboundgK1305}-\eqref{eq:uboundgK1308}.

Note that, for the left-hand side of the above $2^{\Jl}$  bounds, the total weight of $d_k$ is $2^{\Jl}$, $\forall k \in [1: K]$.  For the right-hand side of the above  $2^{\Jl}$  bounds, the total weight of $\alpha_k$ is $2^{\Jl -1}$, $\forall k \in [1: K-2]$;  the  total weight of $\alpha_K$ is $2^{\Jl}$; and the total weight of $\alpha_{K-1}$ is $0$.
Therefore, by summing up the above $2^{\Jl}$ bounds and dividing each side with $2^{\Jl}$, the following bound holds true  \[\dsum(\al) \leq  \frac{  \sum_{k=1}^K \alpha_k  +  \alpha_K -\alpha_{K-1} }{2}   \]
which completes the proof of Corollary~\ref{cor:ubound}.

\section{Proofs of Lemmas~\ref{lm:powerans}, \ref{lm:mindiskl}, \ref{lm:boundTkl}} \label{app:achiev}

Recall that, when  $\ell  \in [1:K-2]$, we have $|\Ic_{k,\ell}| = \me^{\Ke_{\ell}(\Ke_{\ell}-1) } + (\Ke_{\ell}-1) \me^{\Ke_{\ell}(\Ke_{\ell}-1) -1 } -1$,  $|\mathcal{S}_{k,\ell}|= \me^{\Ke_{\ell}(\Ke_{\ell}-1) }$, $\lambda_{\ell} =     \frac{\alpha_{\ell} - \alpha_{\ell-1}}{\Md_{\ell}} - \epsilon$, $ \Md_{\ell}      \defeq 2 \me^{\Ke_{\ell}(\Ke_{\ell}-1) } + (\Ke_{\ell}-1) \me^{\Ke_{\ell}(\Ke_{\ell}-1) -1 } -1 $,  $\Nd_{\ell} =  \me^{\Ke_{\ell}(\Ke_{\ell}-1) }$, and $\Ke_{\ell} = K-\ell+1$.

\subsection{Proof of Lemma~\ref{lm:powerans}} \label{app:powerans}

Based on the signal design in \eqref{eq:xtran9900}-\eqref{eq:defvelln},  the average power of the transmitted signal at  Transmitter~$k$, $k\in [1:K]$, is bounded by 
  \begin{align}
\E|x_{k}|^2 =& \sum_{\ell=1}^{k}  P^{  - \alpha_{\ell-1}}   \E| x_{k,\ell} |^2      \non\\
 =& \sum_{\ell=1}^{k}  P^{  - \alpha_{\ell-1}}   \E| \vv_{k,\ell}^\T   \bv_{k,\ell} |^2      \non\\
    =& \sum_{\ell=1}^{k}  P^{  - \alpha_{\ell-1}}     \sum_{i=1}^{\Nd_{\ell}} |v_{k,\ell, i} |^2 \cdot \E| b_{k,\ell, i} |^2       \label{eq:avpowe6264}\\
        =& \sum_{\ell=1}^{k}  P^{  - \alpha_{\ell-1}}     \sum_{i=1}^{\Nd_{\ell}} |v_{k,\ell, i} |^2 \cdot  \frac{  \gamma^2  Q_{\ell}(Q_{\ell}+1)}{3Q_{\ell}^2}    \label{eq:avpowe8285}\\
       \leq  & \sum_{\ell=1}^{k}  P^{  - \alpha_{\ell-1}}     \sum_{i=1}^{\Nd_{\ell}} |v_{k,\ell, i} |^2 \cdot \gamma^2    \label{eq:avpowe2735}\\ 
       \leq  &\gamma^2  \sum_{\ell=1}^{k}     \sum_{i=1}^{\Nd_{\ell}} |v_{k,\ell, i} |^2    \label{eq:avpowe2355}\\ 
       \leq  &\gamma^2  \sum_{\ell=1}^{k^{\star}}     \sum_{i=1}^{\Nd_{\ell}} |v_{k^{\star},\ell, i} |^2    \non\\ 
       =  &\gamma^2    \eta   \non
         \end{align}
   where \[ k^{\star}  \defeq  \argmax_{k' \in[1:K]} \sum_{\ell=1}^{k'}     \sum_{i=1}^{\Nd_{\ell}} |v_{k',\ell, i} |^2   \]     
        and \[\eta \defeq \sum_{\ell=1}^{k^{\star}}     \sum_{i=1}^{\Nd_{\ell}} |v_{k^{\star},\ell, i} |^2 .   \] Note that $\eta$ is a positive value  independent of $P$.
    The step in \eqref{eq:avpowe6264} uses the fact that the symbols $\{ b_{k,\ell, i}\}_{k, \ell, i}$ are mutually independent, based on our signal design.
    The step in \eqref{eq:avpowe8285} is from the result of  \eqref{eq:avpowe2233}, given that  $ b_{k,\ell, i}  \in    \Omega ( \xi =  \gamma  \cdot \frac{ 1}{ Q_{\ell} } ,   \   Q =   Q_{\ell}  )$, for $i \in [1:\Nd_{\ell} ]$,    $\ell \in [1 : k]$,  $k \in [1 : K]$ (see \eqref{eq:cons0295}).  
  The step in \eqref{eq:avpowe2735} uses the identity that   $   \frac{Q_{\ell}(Q_{\ell}+1)}{3Q_{\ell}^2}  \leq    \frac{2Q_{\ell}^2}{3Q_{\ell}^2} < 1$.   
  The step in \eqref{eq:avpowe2355} follows from the fact that    $P^{  - \alpha_{\ell-1}}  \leq 1$ for $\ell \in [1 : K]$. 
  At this point, we complete the proof of Lemma~\ref{lm:powerans}.

\subsection{Proof of Lemma~\ref{lm:mindiskl}} \label{app:mindiskl}

Since the  elements of $\mathcal{S}_{k,\ell}$ and $\Ic_{k,\ell}$  are monomials generated from the channel coefficients (see \eqref{eq:defvellnm11} and \eqref{eq:defvelln0099}), the minimum distance $d_{\min}(k, \ell)$ defined in \eqref{eq:mindis7788} can be bounded by using the Khintchine-Groshev Theorem for Monomials (see Theorem~\ref{thm:KG}). Specifically, the Khintchine-Groshev Theorem for Monomials reveals that,  for any small enough $\epsilon' =\epsilon >0$, and for almost all  realizations of   channel coefficients, there exists a positive constant $\kappa$ such that
    \begin{align}
 d_{\min}(k, \ell)  & \geq  \frac{\kappa \gamma \sqrt{P^{ \alpha_{k} - \alpha_{\ell-1} - \lambda_{\ell}}}}{ ( \Ke_{\ell}Q_{\ell})^{| \mathcal{S}_{k,\ell}| + | \Ic_{k,\ell}|   -1 + \epsilon}}   \non\\
 &=   \frac{\kappa   \gamma P^{  ( \alpha_{k} - \alpha_{\ell-1})/2 }}{ P^{  \lambda_{\ell}/2} \cdot ( \Ke_{\ell}  P^{ \lambda_{\ell}/2})^{\Md_{\ell} -1+\epsilon}}  \non \\
 &=  \frac{\kappa   \gamma }{   \Ke_{\ell}^{\Md_{\ell} -1 +\epsilon}}    \cdot    \frac{ P^{  ( \alpha_{k} - \alpha_{\ell-1})/2 }}{ (  P^{ \lambda_{\ell}/2})^{\Md_{\ell} +\epsilon}}   \non\\
  &=  \frac{\kappa   \gamma }{   \Ke_{\ell}^{\Md_{\ell} -1 +\epsilon}}    \cdot    \frac{  P^{  \frac{ \alpha_{k} - \alpha_{\ell-1} -( \alpha_{\ell} - \alpha_{\ell-1} )}{2} }}{ P^{ - \frac{\epsilon}{2} \cdot (\Md_{\ell} +\epsilon - \frac{\alpha_{\ell} - \alpha_{\ell-1}}{\Md_{\ell}})}}    \non \\
    &=    \kappa'  P^{   \frac{ \alpha_{k} -  \alpha_{\ell}  + \epsilon_{\ell} }{2} }  \label{eq:mindis6655}
 \end{align}
for $k\in [\ell : K]$, $\ell  \in [1:K-2]$, where $\epsilon_{\ell}$ and $\kappa'$ are defined as \[\epsilon_{\ell} \defeq   \epsilon(\Md_{\ell} +\epsilon - \frac{\alpha_{\ell} - \alpha_{\ell-1}}{\Md_{\ell}}), \quad \kappa' \defeq  \frac{\kappa   \gamma }{   \Ke_{\ell}^{\Md_{\ell} -1 +\epsilon}}  .\]  Note that the value of  $\kappa' $ is positive and independent of $P$, and  $\epsilon_{\ell}$ is positive, $\forall \ell  \in [1:K-2]$, given that $\epsilon >0$. 
It then completes the proof of Lemma~\ref{lm:mindiskl}. 

\subsection{Proof of Lemma~\ref{lm:boundTkl}} \label{app:boundTkl}

For the term $T_{k,\ell}$ defined in \eqref{eq:SITdef000}, it can be bounded by 
  \begin{align}
  T_{k,\ell} =&     \sum_{l=\ell+1}^{K}  \sum_{j=l}^{K}      \sqrt{P^{  \alpha_{k}- \alpha_{l-1}}}   h_{kj} \vv_{j,l}^\T   \bv_{j,l} \non\\
   = &   \sum_{l=\ell+1}^{K}  \sum_{j=l}^{K}   \sqrt{P^{  \alpha_{k}- \alpha_{l-1}}}   h_{kj}    \sum_{i=1}^{\Nd_{l}}  v_{j,l, i}    b_{j,l, i}  \non\\
    \leq  &  \sum_{l=\ell+1}^{K}  \sum_{j=l}^{K}  \sqrt{P^{  \alpha_{k}- \alpha_{l-1}}}   |h_{kj}|   \sum_{i=1}^{\Nd_{l}}  |v_{j,l, i}|  \gamma   \label{eq:boundTkl2355}\\
    \leq  &  \sum_{l=\ell+1}^{K}  \sum_{j=l}^{K}   \sqrt{P^{  \alpha_{k}- \alpha_{\ell}}}   |h_{kj}|   \sum_{i=1}^{\Nd_{l}}  |v_{j,l, i}|  \gamma  \non\\
     =  &  \sqrt{P^{  \alpha_{k}- \alpha_{\ell}}}\cdot \gamma   \sum_{l=\ell+1}^{K}  \sum_{j=l}^{K}    \sum_{i=1}^{\Nd_{l}}  |h_{kj}|    |v_{j,l, i}|    \non\\
        =  &  \sqrt{P^{  \alpha_{k}- \alpha_{\ell}}}  \cdot \delta_{k,\ell}    \non
        \end{align}
for $k \in [\ell : K]$, $\ell \in [1 : K-2]$,  where \[\delta_{k,\ell} \defeq \gamma  \sum_{l=\ell+1}^{K}  \sum_{j=l}^{K}      \sum_{i=1}^{\Nd_{l}}  |h_{kj}|    |v_{j,l, i}|     \] and the value of $\delta_{k,\ell}$  is  independent of $P$.
The step in \eqref{eq:boundTkl2355} uses the fact that $b_{j,\ell, i} \leq \gamma $, given that  $ b_{k,\ell, i}  \in    \Omega ( \xi =  \gamma  \cdot \frac{ 1}{P^{ \frac{ \lambda_{\ell} }{2}}} ,   \   Q =  P^{ \frac{ \lambda_{\ell} }{2}} )$, for $i \in [1:\Nd_{\ell} ]$,    $k \in [\ell : K]$,  $\ell \in [1 : K]$ (see \eqref{eq:cons0295}).
At this point, we complete the proof of Lemma~\ref{lm:boundTkl}.


\end{document}